\documentclass[acmsmall,screen]{acmart}

\usepackage{algorithm}
\usepackage{algpseudocode}
\usepackage{color,url}
\usepackage{appendix}

\newcommand{\err}{\mathrm{err}}
\newcommand{\flag}{\mathrm{flag}}
\algrenewcommand\algorithmicrequire{\textbf{Input:}}
\algrenewcommand\algorithmicensure{\textbf{Output:}}
\AtBeginDocument{%
  }

\newif\ifdraft
\draftfalse

\newcommand{\jb}[1]{{\ifdraft\color{blue}[{\bf JB:} #1]}\else{}\fi}
\newcommand{\jiawei}[1]{{\ifdraft\color{blue}[{\bf Jiawei:} #1]}\else{}\fi}

\newlength{\tableskip}

\newif\iffull
\fullfalse
\newcommand{\full}[1]{\iffull{#1}\else{}\fi}

\newif\ifchange
\changefalse

\usepackage{enumerate}

\setcopyright{none}

\begin{document}

%%
%% The "title" command has an optional parameter,
%% allowing the author to define a "short title" to be used in page headers.
\title{Probabilistic Floating-Point Round-Off Analysis via Concentration Inequalities}

%%
%% The "author" command and its associated commands are used to define
%% the authors and their affiliations.
%% Of note is the shared affiliation of the first two authors, and the
%% "authornote" and "authornotemark" commands
%% used to denote shared contribution to the research.
\author{Yichen Tao}
\email{ychtao@umich.edu}
\orcid{0009-0001-8655-3372}
\affiliation{
  \institution{University of Michigan}
  \city{Ann Arbor}
  \state{Michigan}
  \country{USA}
}

\author{Hongfei Fu}
\authornote{Corresponding author}
\email{jt002845@sjtu.edu.cn}
\orcid{0000-0002-7947-3446}
\affiliation{
  \institution{Shanghai Jiao Tong University}
  \city{Shanghai}
  \country{China}
}

\author{Jiawei Chen}
\email{chenjw@umich.edu}
\orcid{0000-0002-5461-6711}
\affiliation{
  \institution{University of Michigan}
  \city{Ann Arbor}
  \state{Michigan}
  \country{USA}
}

\author{Jean-Baptiste Jeannin}
\email{jeannin@umich.edu}
\orcid{0000-0001-6378-1447}
\affiliation{
  \institution{University of Michigan}
  \city{Ann Arbor}
  \state{Michigan}
  \country{USA}
}

% \author{Ben Trovato}
% \authornote{Both authors contributed equally to this research.}
% \email{trovato@corporation.com}
% \orcid{1234-5678-9012}
% \author{G.K.M. Tobin}
% \authornotemark[1]
% \email{webmaster@marysville-ohio.com}
% \affiliation{%
%   \institution{Institute for Clarity in Documentation}
%   \city{Dublin}
%   \state{Ohio}
%   \country{USA}
% }

% \author{Lars Th{\o}rv{\"a}ld}
% \affiliation{%
%   \institution{The Th{\o}rv{\"a}ld Group}
%   \city{Hekla}
%   \country{Iceland}}
% \email{larst@affiliation.org}

%%
%% By default, the full list of authors will be used in the page
%% headers. Often, this list is too long, and will overlap
%% other information printed in the page headers. This command allows
%% the author to define a more concise list
%% of authors' names for this purpose.
% \renewcommand{\shortauthors}{Authors}

%%
%% The abstract is a short summary of the work to be presented in the
%% article.
\begin{abstract}
Floating-point round-off errors are ubiquitous in numerically intensive programs arising in fields such as scientific computing and optimization.
As floating-point errors potentially lead to unexpected and catastrophic program failures, one must derive guaranteed round-off thresholds to ensure the correctness of these programs. 
However, deterministic round-off thresholds tend to be too conservative to be usable in practice, since they often involve large round-off errors that occur with small probability. Probabilistic thresholds relax deterministic ones by specifying that the probability of the round-off error exceeding a threshold is below a given confidence.

In this work, we propose a novel approach to probabilistic round-off analysis, by applying concentration inequalities over the Taylor expansion from FPTaylor (TOPLAS 2018).
A major obstacle in applying concentration inequalities is that the Taylor expansion involves absolute value operators that make the calculation of the expected values of 
the first order partial differential terms difficult.
Our first step to overcome this obstacle is a sound over-approximation that removes the absolute value {operators} in polynomial expressions.
Then, we show how to handle fractional expressions by a transformation into polynomial case.
Finally, we show how to improve our approach with range partitioning. 
Our approach is scalable since the key computational part is the calculation of expected values of polynomial expressions with independent variables, for which the linear and independence properties of expectation boost the computation.
Experimental results show that our approach is orders of magnitude more time efficient, while producing thresholds with comparable precision against the state of the art. 
\end{abstract}

%%
%% The code below is generated by the tool at http://dl.acm.org/ccs.cfm.
%% Please copy and paste the code instead of the example below.
%%

%\begin{CCSXML}
%<ccs2012>
% <concept>
%  <concept_id>00000000.0000000.0000000</concept_id>
%  <concept_desc>Do Not Use This Code, Generate the Correct Terms for Your Paper</concept_desc>
%  <concept_significance>500</concept_significance>
% </concept>
% <concept>
%  <concept_id>00000000.00000000.00000000</concept_id>
%  <concept_desc>Do Not Use This Code, Generate the Correct Terms for Your Paper</concept_desc>
%  <concept_significance>300</concept_significance>
% </concept>
% <concept>
%  <concept_id>00000000.00000000.00000000</concept_id>
%  <concept_desc>Do Not Use This Code, Generate the Correct Terms for Your Paper</concept_desc>
%  <concept_significance>100</concept_significance>
% </concept>
% <concept>
%  <concept_id>00000000.00000000.00000000</concept_id>
%  <concept_desc>Do Not Use This Code, Generate the Correct Terms for Your Paper</concept_desc>
%  <concept_significance>100</concept_significance>
% </concept>
%</ccs2012>
%\end{CCSXML}

%\ccsdesc[500]{Do Not Use This Code~Generate the Correct Terms for Your Paper}
%\ccsdesc[300]{Do Not Use This Code~Generate the Correct Terms for Your Paper}
%\ccsdesc{Do Not Use This Code~Generate the Correct Terms for Your Paper}
%\ccsdesc[100]{Do Not Use This Code~Generate the Correct Terms for Your Paper}

\newtheorem{remark}{Remark}

%%
%% Keywords. The author(s) should pick words that accurately describe
%% the work being presented. Separate the keywords with commas.
%\keywords{Do, Not, Us, This, Code, Put, the, Correct, Terms, for, Your, Paper}

%\received{20 February 2007}
%\received[revised]{12 March 2009}
%\received[accepted]{5 June 2009}

%%
%% This command processes the author and affiliation and title
%% information and builds the first part of the formatted document.
\maketitle

\section{Introduction}

Most non-integer numerical computations in computers are performed using floating-point arithmetic.
However, floating-point arithmetic is inherently imprecise due to its finite precision~\cite{goldberg1991every}, which introduces round-off errors at almost every execution step of a floating-point program.
While these errors are typically small in a single floating-point operation, the accumulation of round-off errors in numerically intensive programs can be non-negligible, potentially leading to catastrophic outcomes.
Ensuring that round-off errors in numerically intensive programs are under control necessitates rigorous round-off analysis, i.e., deriving guaranteed thresholds for the floating-point error incurred in numerical computation.

A typical example for the round-off analysis problem is as follows.
Consider the computation of $f(x_1, x_2, x_3) = x_1x_2 + x_3$,
where the input variables $x_1$, $x_2$ and $x_3$ are floats confined within the interval $[-1, 1]$.
Due to the finite precision nature of floating-point representation in calculating the product $x_1x_2$ and the sum $x_1x_2 + x_3$, the computed result deviates from the mathematically exact value obtained under real-number arithmetic.
A typical task of deterministic floating-point round-off analysis is to determine an upper bound on this deviation that holds for any valid input.

Rigorous analysis of floating-point round-off errors has been extensively studied in the literature.
Deterministic analysis targets guaranteed thresholds for the numerical error under any program input, and has been studied in multiple existing works.
Tools like Gappa~\cite{daumas2010certification} and PRECiSA~\cite{titolo2018abstract} adopt an  abstraction-based approach, while FPTaylor~\cite{solovyev2018rigorous} and Real2Float~\cite{magron2017certified} formulate it as an optimization problem.
Alternatively, probabilistic analysis tightens these sometimes overly-conservative thresholds, limiting them to only the most likely input ranges.

In this paper, we consider the \emph{probabilistic} analysis of floating-point errors: its objective is to derive a threshold such that the probability that the round-off error does not exceed the threshold is at least a given confidence level close to $1$. 
Probabilistic analysis is particularly motivated by the overly pessimistic nature of deterministic analysis, since it accounts for worst-case inputs, even those with negligible probability~\cite[Section 4.7]{tekriwal2023mechanized}. 
In contrast, probabilistic analysis is less impacted by these rare input scenarios, providing a more informative and practical insight when input distributions are known.
Recall the previous example of computing $x_1 x_2 + x_3$ in floating-point arithmetic.
If additional information about the distribution of the input variables is provided, such as the assumption that they are uniformly distributed on $[-1, 1]$, it is possible to obtain a ``probabilistic threshold'' for the round-off error, which is a threshold that is not exceeded with probability of at least a given confidence value (e.g., $0.99$).
A typical application motivating probabilistic analysis is the numerical computation in GPS sensor data.
Such data are often modeled by normal distributions with unbounded or large support, which renders worst-case analysis too conservative, and therefore, uninformative. 
In contrast, probabilistic analysis focuses on the bulk of probability mass, yielding thresholds that are more useful  in practice, especially for distributions with large support.

The PAF tool~\cite{constantinides2021rigorous} is the current state of the art in probabilistic round-off analysis. 
PAF traverses the abstract syntax tree of a floating-point expression, calculating probabilistic ranges and error distributions for intermediate results, and using Dempster-Shafer structures (DS-structures)~\cite{ferson2003constructing}, an interval-based data structure to represent the probability distributions.
It uses symbolic affine arithmetic for error propagation and computes conditional round-off errors by maximizing symbolic error forms over the input variable ranges.
Although interval abstraction achieves arbitrary accuracy as the number of intervals approaches infinity, the combinatorial explosion from maintaining a large number of intervals hinders the efficiency of the round-off analysis. 
For instance, in our experimental evaluation, it takes PAF more than two hours to analyze an expression with seven operations on an 18-core machine. 
An earlier tool, PrAn~\cite{lohar2019sound}, discretizes the input distribution into subdomains, applies probabilistic affine arithmetic to analyze each subdomain independently, and then merges the resulting error distributions to compute a global probabilistic error bound, which is extracted as a refined probabilistic error guarantee.
In general, PrAn is faster than PAF, but at the cost of considerably less accurate thresholds.
Hence, conducting accurate and efficient probabilistic round-off analysis remains a challenge.

We address the aforementioned challenge over arithmetic expressions with addition, subtraction, multiplication and division
by applying concentration inequalities to the probabilistic round-off analysis.  Our detailed contributions are as follows:
\begin{itemize}
\item We introduce a novel approach to probabilistic round-off analysis for polynomial arithmetic expressions with addition, subtraction, multiplication, and  division with constant denominators, given specified distributions of input variables. The main idea is to apply concentration inequalities to the Taylor expansion used by FPTaylor. 
A key technical contribution in our approach is a sound relaxation (called \emph{positive-negative decomposition}) of polynomial expressions with absolute values into those without absolute values to {avoid expensive computer algebra for computing positive and negative regions of polynomials.} 
\item We show how to transform the probabilistic round-off analysis of fractional expressions into one of polynomials, so that our approach handles division with non-constant denominators.
\item We show how one can use range partition to further improve the accuracy of our approach, and prove the correctness of our range partition refinement.
\item We implement our algorithm in a prototype named ProbTaylor in OCaml. Our prototype handles uniform, truncated normal, and Laplace distributions for input variables. A technical novelty here is a symbolic approach that efficiently and accurately computes expectation.
\end{itemize}

Evaluation over benchmarks from PAF, FPBench and two realistic examples shows that: (a) ProbTaylor is significantly more time efficient compared to PAF and PrAn; 
(b) ProbTaylor provides thresholds whose accuracy is comparable to or better than those by PAF and PrAn, and in some cases at least an order of magnitude better;
(c) With a moderate range for input variables, the probabilistic thresholds derived from ProbTaylor are significantly tighter than the deterministic ones generated by the deterministic tool FPTaylor;  
(d) Our approach is potentially scalable in handling floating point expressions with many variables and operations.

\section{Preliminaries}
\label{sec:prelim}

In this Section, we present some necessary background regarding floating-point arithmetic, first-order Taylor expansions and probability theory.

\subsection{Floating-Point Arithmetic}
\label{sec:floatingpoint}

In this work, we limit our focus to floating-point arithmetic specified by the IEEE 754 standard~\cite{kahan1996ieee}. 
A binary floating-point number is defined by a triple consisting of a sign bit ($\mathtt{sgn}$), significand bits ($\mathtt{sig}$) and exponent bits ($\mathtt{exp}$), whose numerical value can be expressed as $(-1)^{\mathtt{sgn}} \times \mathtt{sig} \times 2^{\mathtt{exp}}$.
When operations are ``correctly rounded'' and there is no overflow or exception, we have the following model for floating-point operations:
$$ x \circ_{fl} y = (x\circ y)(1 + e) + d \text{,  with } |e| \le \epsilon \text{ and } |d| \le \delta$$
where $\circ\in\{+, -, \times, / \}$ is a basic operation in real numbers, and $\circ_{fl}$ is its floating-point counterpart. 
The error variables $e$ and $d$ account for relative error due to rounding, and absolute error due to underflow, respectively. 
Their upper bounds $\epsilon$ and $\delta$ depend on the floating-point format under which the computation is carried out and the choice of rounding operator.
For instance, if the computation is done in single precision, where $\texttt{sig}$ has $23$ bits and $\texttt{exp}$ has 8 bits, by rounding to nearest, we have $\epsilon=2^{-24}$ and $\delta=2^{-150}$. 
Beyond these bounds, the exact values of $e$ and $d$ are typically considered unknown in round-off analysis. 

Given a multivariate function $f(\mathbf x)$ with variables in a vector $\mathbf{x}$ (bold symbols like $\mathbf x$ are used to denote vectors), its floating-point model $\tilde f(\mathbf x, \mathbf e, \mathbf d)$ can be derived by replacing all operations by the floating-point model mentioned above, and $\mathbf e$ and $\mathbf d$ are vectors of the error variables, i.e., $\mathbf e = (e_1, \dots, e_k)$ and $\mathbf d = (d_1, \dots, d_k)$.
The length $k$ of $\mathbf e$ and $\mathbf d$, 
is equal to the number of operations in $f(\mathbf x)$, i.e., each floating-point operation corresponds to exactly one $(e_i, d_i)$ from $\mathbf e$ and $\mathbf d$. 
Note that $|e_i| \le \epsilon$ and $|d_i| \le \delta$ for $i=1, \dots, k$.
For example, the floating-point version of $x \times y + z$ is given by $((x\times y)(1 + e_1) + d_1 + z)(1+e_2) + d_2$, where $(e_1, d_1)$ arises from the inner multiplication and $(e_2, d_2)$ 
from the outer addition.
It is guaranteed that for some actual error values $(\tilde{\mathbf{e}}, \tilde{\mathbf{d}})$ for $(\mathbf{e},\mathbf{d})$,
the deviated value $\tilde f(\mathbf x, \tilde{\mathbf e}, \tilde{\mathbf d})$ is
equal to the actual floating-point computation result of $f(\mathbf x)$. 
\jb{Very heavy wording. I'd just say it is guaranteed that there exist (e,d)... without introducting yet another notation with $\tilde e$ and $\tilde d$.}
Additionally, it holds that $\tilde f(\mathbf x, \mathbf 0, \mathbf 0) = f(\mathbf x)$.
The magnitude of this deviation at a particular program input $\mathbf{x}$ can be formalized by the following absolute round-off function $\err(f, {\mathbf x)}$: 
$$ \err(f, \mathbf x) := \left|\tilde f(\mathbf x, \tilde{\mathbf e}, \tilde{\mathbf d}) - f(\mathbf{x})\right|.$$

\subsection{First-Order Taylor Expansion}
\label{sec:taylor}

A sufficiently smooth function can be approximated by its \emph{Taylor expansion}~\cite{rudin1964principles}, which is a polynomial expression in terms of the function's derivatives at a given point.
For a $k$-ary function $f(w_1, \dots, w_k)$ that is at least twice continuously differentiable on its domain $D\subset \mathbb R^k$, its first-order Taylor expansion around a particular point $(a_1, \dots, a_k) \in D$ can be expressed as
$$ \textstyle f(\mathbf w) = f(\mathbf a) + \sum_{i=1}^k\frac{\partial f}{\partial w_i}(\mathbf a)(w_i-a_i) + \frac12\sum_{i,j = 1}^k\frac{\partial^2 f}{\partial w_i \partial w_j}(\mathbf z)(w_i-a_i)(w_j-a_j)$$
for some $\mathbf z\in D$, where $\mathbf w = (w_1, \dots, w_k)$ and $\mathbf a = (a_1, \dots, a_k)$. 

We follow FPTaylor~\cite{solovyev2018rigorous} and apply a first-order Taylor expansion to the floating-point version $\tilde f(\mathbf x, \mathbf e, \mathbf d)$ {from Section~\ref{sec:floatingpoint}}.
Henceforth, applying Taylor's theorem to $\tilde f(\mathbf x, \mathbf e, \mathbf d)$ w.r.t 
$e_1, \dots, e_k, d_1, \dots, d_k$ around zero gives us
$$  \textstyle \tilde f(\mathbf x, \mathbf e, \mathbf d) = \tilde f (\mathbf x, \mathbf 0, \mathbf 0) + \sum_{i=1}^k\frac{\partial \tilde f}{\partial e_i}(\mathbf x, \mathbf 0, \mathbf 0)e_i + R_2(\mathbf x, \mathbf e, \mathbf d).$$
For sake of brevity, set $y_i \triangleq e_i$ for $i = 1, \dots, k$ and $y_i \triangleq d_{i-k}$ for $i = k+1, \dots, 2k$.
Then, we have
$$\textstyle  R_2(\mathbf x, \mathbf e, \mathbf d) = \frac12\sum_{i,j=1}^{2k}\frac{\partial^2 \tilde f}{\partial y_i \partial y_j}(\mathbf x, \mathbf e', \mathbf d')y_iy_j + \sum_{i=1}^k \frac{\partial \tilde f}{\partial d_i}(\mathbf x, \mathbf 0, \mathbf 0)d_i$$
for some $\mathbf e', \mathbf d' \in \mathbb R^{k}$ satisfying $|e_i'| \le \epsilon$ and $|d_i'| \le \delta$ for $i = 1, \dots, k$. 
The first-order terms $\frac{\partial \tilde f}{\partial d_i}(\mathbf x, \mathbf 0, \mathbf 0)d_i$ is added to $R_2$ since $\delta \ll \epsilon^2$ for common floating-point formats. 
As is mentioned before, $\tilde f(\mathbf x, \mathbf 0, \mathbf 0) = f(\mathbf x)$. 
Therefore, we have a new expression for $\err(f, \mathbf x)$, which we may further relax using the triangular inequality:
$$ \begin{aligned} \textstyle
\err(f, D) \le \left| \sum_{i=1}^k\frac{\partial \tilde f}{\partial e_i}(\mathbf x, \mathbf 0, \mathbf 0)e_i + R_2(\mathbf x, \mathbf e, \mathbf d) \right|
\le \sum_{i=1}^k\left|\frac{\partial \tilde f}{\partial e_i}(\mathbf x, \mathbf 0, \mathbf 0)e_i \right| + \left|R_2(\mathbf x, \mathbf e, \mathbf d) \right|.
\end{aligned} $$
Throughout the paper, we denote the ``first-order error term'' $\sum_{i=1}^k\left|\frac{\partial \tilde f}{\partial e_i}(\mathbf x, \mathbf 0, \mathbf 0)e_i \right|$ by $F(\mathbf x, \mathbf e)$.
We call $|R_2(\mathbf x, \mathbf e, \mathbf d)|$ the ``second-order error term''.

\subsection{Probability Theory}
\label{sec:probth}

\full{
A \emph{probability space} is a triple $(\Omega, \mathcal{F},\mathbb{P})$ where 
(a) $\Omega$ is the sample space; 
(b) $\mathcal{F}$ is a sigma-algebra on $\Omega$ which is a subset of the powerset $2^\Omega$ of the sample space $\Omega$ that contains $\emptyset$ and is closed under the complementation and countable union operations of sets; and 
(c) $\mathbb{P}:\mathcal{F}\rightarrow [0,1]$ is a \emph{probability measure} on $\mathcal{F}$ that is countably additive and satisfies $\mathbb{P}[\Omega]=1$. An \emph{event} is an element $E\in \mathcal{F}$, for which the value
$\mathbb P[E]$ denotes the probability that the event $E$ happens~\cite{kolmogorov2018foundations}. 
A \emph{random variable} $X$ in a probability space $(\Omega,\mathcal{F},\mathbb{P})$ is a measurable function $X:\Omega\rightarrow \mathbb{R}$ measurable in the measurable space $(\Omega, \mathcal{F})$, i.e., for every $d\in \mathbb{R}$ we have the subset  $\{\omega\in\Omega\mid X(\omega)\le d\}$ ($X\le d$ for short) lies in $\mathcal{F}$. The probability distribution of a random variable $X$ is usually specified by its \emph{cumulative distribution function} (CDF) $c(X)$ whose definition is given by $c(X)(t):=\mathbb{P}[X \le t]$.
The expected value $\mathbb{E}[X]$ of a random variable $X$ in a probability space $(\Omega,\mathcal{F},\mathbb{P})$ is formally defined as the integral $\int X\,\mathrm{d}\mathbb{P}$. 
See standard textbooks~\cite{blitzstein2019introduction,ccinlar2011probability,DBLP:books/daglib/0073491} for a formal treatment of these concepts.
}

Given a \emph{probability space} $(\Omega,\mathcal{F},\mathbb{P})$ with sample space $\Omega$, set of events $\mathcal{F}$ and \emph{probability measure} $\mathbb{P}:\mathcal{F}\rightarrow [0,1]$, the \emph{probability} that an event $E\in \mathcal{F}$ happens is denoted $\mathbb P[E]$. A \emph{random variable} $X:\Omega\rightarrow \mathbb{R}$ is such that the subset $\{\omega\in\Omega\mid X(\omega)\le d\}\in\mathcal{F}$ for any $d\in \mathbb{R}$, and its \emph{expected value} $\mathbb{E}[X]$ is defined as the integral $\int X\,\mathrm{d}\mathbb{P}$.
Given a \emph{continuous} random variable $X$, its probability density function (PDF) is a Lebesgue-measurable function $g: \mathbb{R}\rightarrow [0,\infty)$ such that $c(X)(t) = \int_{-\infty}^{t}g(x)\,\mathrm{d}x$ for all reals $t$.
Given a PDF $g$ of a continuous random variable $X$, 
we have that the expected value $\mathbb{E}[X]$ is equal to $\int_{-\infty}^{\infty}x\cdot g(x)\,\mathrm{d}x$. 
Given two events $E_1, E_2 \in \mathcal{F}$ with $\mathbb P[E_2] > 0$, the \emph{conditional probability} of $E_1$ given $E_2$ is defined as $\mathbb P[E_1 \mid E_2] \triangleq \mathbb P[E_1 \cap E_2] / \mathbb P[E_2]$. 
Similarly, for a continuous random variable $X$ and an event $E\in \mathcal{F}$ with $\mathbb P[E] > 0$, the \emph{conditional expectation} of $X$ given $E$ is defined by $\mathbb E[X \mid E] {\triangleq} \left(\int_E X\mathrm d\mathbb P\right) / \mathbb P[E]$. See~\cite{blitzstein2019introduction,ccinlar2011probability,DBLP:books/daglib/0073491} for a formal treatment of these concepts.

\emph{Concentration inequalities} are used for bounding the probability that a random variable deviates largely from its majority part of probability mass.
Here, we use Markov's inequality as stated below.

\begin{theorem}[Markov's Inequality]
If $X$ is a non-negative random variable (i.e., a random variable that always takes non-negative values), then for any constant $a > 0$,
$ \mathbb P[X \ge a] \le \mathbb E [X]/a$.
\label{thm:markov}
\end{theorem}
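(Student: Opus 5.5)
The plan is to compare $X$ with the scaled indicator of the event $\{X \ge a\}$ and then take expectations. Let $E$ denote the event $\{\omega\in\Omega \mid X(\omega) \ge a\}$. Since $X$ is a random variable, $\{X < a\} = \bigcup_{n\ge 1}\{X \le a - 1/n\}$ is a countable union of sets in $\mathcal{F}$. Hence $E$, its complement, lies in $\mathcal{F}$, and the indicator $\mathbf{1}_E$ is a measurable, nonnegative, simple function. The key pointwise claim is
\[
a\cdot \mathbf{1}_E(\omega) \le X(\omega) \quad \text{for every } \omega\in\Omega .
\]
We check it by cases. If $\omega\in E$, the left side is $a$ and $X(\omega)\ge a$ by the definition of $E$. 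If $\omega\notin E$, the left side is $0$ and $X(\omega)\ge 0$ by the non-negativity hypothesis. This second case is the only place non-negativity is used, and it is essential.

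Next we integrate both sides against $\mathbb{P}$, using that the integral is monotone for nonnegative measurable functions. The expected value $\mathbb{E}[X]=\int X\,\mathrm{d}\mathbb{P}$ is well defined in $[0,\infty]$ because $X\ge 0$. The integral of the simple function is
\[
\int a\mathbf{1}_E\,\mathrm{d}\mathbb{P} = a\,\mathbb{P}[E] .
\]
Monotonicity therefore gives $a\,\mathbb{P}[X\ge a] \le \mathbb{E}[X]$. Dividing by the constant $a>0$ yields the statement.

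There is no real obstacle here. The only points needing care are the measurability of $E$, which the countable-union argument above handles, and the possibility that $\mathbb{E}[X]=+\infty$, in which case the inequality holds trivially. For a continuous $X$ with PDF $g$, one can phrase the same argument as $\mathbb{E}[X]=\int_0^\infty x g(x)\,\mathrm{d}x \ge \int_a^\infty x g(x)\,\mathrm{d}x \ge a\int_a^\infty g(x)\,\mathrm{d}x$, but the indicator argument covers the general case.
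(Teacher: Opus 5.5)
Your proof is correct. The pointwise bound $a\mathbf{1}_{\{X\ge a\}}\le X$ followed by monotonicity of the integral is the standard argument, and your remarks on the measurability of $\{X\ge a\}$ and on the case $\mathbb{E}[X]=+\infty$ cover the only loose ends.

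The paper gives no proof of this statement. It quotes Markov's inequality as textbook background and then applies it to the non-negative variables $|X|^n$ and $(K_u(\mathbf x)-\mu)^n$ (with $n$ even), so your argument is exactly the standard one the paper implicitly relies on.
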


In this work, we take the $n$-th order higher moment $X^n$ of a random variable $X$ and apply the Markov's inequality to the random variable $X^n$ to obtain 
$ \mathbb P[|X|\ge a] = \mathbb P[|X|^n \ge a^n] \le \mathbb E [|X|^n]/a^n$.
We refer to the order $n$ above as the \emph{analysis order}. 

\smallskip
\noindent{\em Problem Statement.} We consider the following probabilistic round-off analysis problem: 
\begin{itemize}
\item \textbf{Input:} (i) an arithmetic expression $f(\mathbf{x})$ that consists of addition, multiplication, subtraction and division over a vector $\mathbf{x}$ of variables, (ii) a map $\mathcal{D}$ that assigns a probability distribution $\mathcal{D}(x)$ to every variable $x$ in the vector $\mathbf{x}$, and (iii) a target confidence level $c\in (0,1)$.
\item \textbf{Output:} a threshold $U_c > 0$ such that 
$\mathbb{P}_\mathcal{D}[\err(f, \mathbf x) < U_c]> c$,
for which the probability measure $\mathbb{P}_\mathcal{D}$ corresponds to the independent joint distribution of the distributions $\mathcal{D}(x)$ for every variable $x$ in the vector $\mathbf{x}$.
\end{itemize}

\begin{example}
Consider $f(\mathbf{x}) = x_1\times x_2 + x_3$, where $x_1, x_2, x_3$ are all uniformly distributed on $[-1, 1]$. 
Floating-point computations are done in single precision, and thus  $\epsilon = 2^{-24}$ and $\delta = 2^{-150}$. 
We aim to derive a round-off error threshold $U_{0.99}$ for $f(\mathbf{x})$ that holds with probability at least $99\%$. 
\label{ex:prob} \label{ex:poly}
\end{example}

\begin{example}
Consider $g(\mathbf x) = (x_1\times x_2) / (x_3 + 5)$, where $x_1, x_2, x_3$ are all uniformly distributed on the interval $[-1, 1]$.
Floating-point computations are carried out in single precision, and thus $\epsilon = 2^{-24}$ and $\delta = 2^{-150}$.
We aim to derive a round-off error threshold $U_{0.99}$ for $g(\mathbf x)$. 
\label{ex:div}
\end{example}

\section{Overview of Our Results}
\label{sec:overview}

In this section, we perform {probabilistic} round-off error analysis on Example~\ref{ex:poly} and Example~\ref{ex:div} to illustrate our approach to the problem, for polynomials and fractional expressions, respectively.
Note that, for the sake of simplicity, the method presented in this Section has some minor differences with our algorithm described in Section~\ref{sec:algorithm}.

At a high level, our approach to the probabilistic round-off analysis is divided into three steps. 
The first step is to apply a Taylor expansion to obtain the first-order and second-order error terms as in Section~\ref{sec:taylor}. 
The second step is to relax the absolute values in the first-order term to obtain a sound over-approximation without absolute values. 
The final step is to apply Markov's inequality (Theorem~\ref{thm:markov}) to obtain the probabilistic threshold for the round-off error.

\subsection{Illustration of Example~\ref{ex:poly}}
\label{sec:exploy}

{Below we detail the steps for performing probabilistic round-off error analysis on  Example~\ref{ex:poly}.}

\smallskip
\noindent{\bf Step 1: Applying Taylor Expansion.}
We first apply a Taylor expansion to the floating-point model $\tilde f(\mathbf x, \mathbf e, \mathbf d) = (x_1x_2(1 + e_1) + d_1 + x_3)(1 + e_2) + d_2$, as shown in Section \ref{sec:taylor}.
Here $e_1, e_2 \in [-\epsilon ,\epsilon]$ correspond to the relative errors associated with normal number results arising from multiplication (resp. addition), and $d_1, d_2 \in [-\delta, \delta]$ correspond to the absolute errors associated with the subnormal number results. 
Recall that the Taylor expansion gives two error terms:
the first-order error term $F(\mathbf x, \mathbf e)$ and the second-order error term $|R_2(\mathbf x, \mathbf e, \mathbf d)|$.
For Example~\ref{ex:poly}, by computing the partial derivatives and summing them up, we derive that
$F(\mathbf x, \mathbf e)= |(x_1x_2)\cdot e_1| + |(x_1x_2 + x_3)\cdot e_2|$, and 
 $|R_2(\mathbf x, \mathbf e, \mathbf d)|=d_1 + d_2 + e_1e_2x_1x_2 + e_2d_1$.

Since the round-off error is dominated by first-order term $F(\mathbf x, \mathbf e)$ (typically by several orders of magnitude), we bound the second-order error deterministically with an existing global optimization tool (GELPIA~\cite{gelpia} in our implementation), and focus mainly on the first-order error term.

\smallskip
\noindent{\bf Step 2: Over-approximation for Absolute Values.}
Then, we analyze the probabilistic threshold of the first-order error term $F(\mathbf x, \mathbf e)$. A major obstacle here is the probabilistic evaluation of the absolute values in the first-order term.  To overcome this difficulty, we propose a sound over-approximation by introducing fresh variables to separate the positive and negative parts that an input variable can take, which we term \emph{positive-negative (PN) decomposition}.

\smallskip
\noindent\emph{PN Decomposition}. For each input variable $x$ in the vector $\mathbf{x}$, we introduce two fresh variables, $x^+$ and $x^-$, which represent the positive part $x^+ := \max\{x, 0\}$ and the negative part $x^- := \max\{-x, 0\}$, respectively. 
The probability distributions of the new variables $x^+,x^-$ are accordingly calculated as those of $\max\{x, 0\},\max\{-x, 0\}$ from the distribution $\mathcal{D}(x)$ of the variable $x$. 
We then replace each variable $x$ with $x=x^+-x^-$ in every expression within the absolute-value operators from $F(\mathbf{x},\mathbf{e})$ and expand these expressions into a summation of products of variables. After the expansion, we soundly remove the absolute values by identifying the PN parts in the expanded expressions. 
A key simplification here is that during the expansion, we can use the equality $x^+\cdot x^- = 0$ to substantially simplify the resultant expression. 

For Example \ref{ex:prob}, we have the relaxation of 
the term $|(x_1x_2)\cdot e_1|$ in $F(\mathbf x, \mathbf e)$ by:
$$\begin{aligned}
|(x_1x_2)\cdot e_1| &= |(x_1^+x_2^+ + x_1^-x_2^-)e_1 - (x_1^+x_2^- + x_1^-x_2^+)e_1|  
\le |(x_1^+x_2^+ + x_1^-x_2^-)e_1| + |(x_1^+x_2^- + x_1^-x_2^+)e_1| \\
&= (x_1^+x_2^+ + x_1^-x_2^-)|e_1| + (x_1^+x_2^- + x_1^-x_2^+)|e_1| 
\le (x_1^+x_2^+ + x_1^-x_2^- + x_1^+x_2^- + x_1^-x_2^+)\epsilon. 
\end{aligned}$$
In the relaxation above, the first equality comes from substituting each $x_i$ with $x_i^+ - x_i^-$, the second inequality is obtained via applying the triangular inequality, the third equality is by safely removing the absolute-value operator due to the PN decomposition, and the final inequality results from choosing $e_i$'s as the maximum value $\epsilon$.
The term $|(x_1x_2 + x_3) \cdot e_2|$ can be handled similarly:
$$
|(x_1x_2 + x_3)\cdot e_2| \le (x_1^+x_2^+ + x_1^-x_2^- + x_3^+ + x_1^+x_2^- + x_1^-x_2^+ + x_3^-)\epsilon.
$$
Combining the two yields the following sound over-approximation for $F(\mathbf x, \mathbf e)$:
$$F(\mathbf x, \mathbf e) \le (2(x_1^+x_2^+ + x_1^-x_2^- + x_1^+x_2^- + x_1^-x_2^+) + x_3^+ + x_3^-)\epsilon \triangleq p(\mathbf x) \cdot \epsilon.$$
Note that we write $p(\mathbf x)$ instead of $p(x_1^+,x_1^-, x_2^+, x_2^-, x_3^+, x_3^-)$ for brevity. 

\smallskip
\noindent{\bf Step 3: Applying Markov's inequality.}
Finally, we apply Markov's inequality (Theorem~\ref{thm:markov}).
For a given error threshold $u > 0$, we bound the ``threshold violation probability'' for the first-order term $F(\mathbf x, \mathbf e)$, i.e., $\mathbb P[F(\mathbf x, \mathbf e) \ge u]$, with analysis order $n=2$:
$$  \mathbb P[F(\mathbf x, \mathbf e) \ge u] = \mathbb P[(F(\mathbf x, \mathbf e))^2 \ge u^2] \le \mathbb E[(F(\mathbf x, \mathbf e))^2]/u^2 \le \epsilon^2\cdot \mathbb E[(p(\mathbf x))^2]/u^2.$$
Therefore, to derive a threshold $u$ that guarantees a threshold violation probability below $1-c$, it suffices to have $\frac{\epsilon^2\cdot \mathbb E[(p(\mathbf x))^2]}{u^2} \le 1 - c$, which indicates that $u = \epsilon \cdot \sqrt{\mathbb E[(p(\mathbf x))^2] / (1-c)}$ is a valid threshold.
Setting $c=0.99$ for Example~\ref{ex:prob}, we obtain 
$u\approx 6.74 \times 10^{-7}$ so that $\mathbb P[F(\mathbf x, \mathbf e) \ge u]\le 0.01$. 

Note that our approach simplifies the calculation of the expected values {related to} the over-approximation $p(\mathbf{x})$ by the linear and independence properties of expectation. That is, our approach calculates the expected value of each finite product 
$\prod_i x_i^{a_i}$ as $\prod \mathbb{E}(x_i^{a_i})$
and takes a summation over these expected values {(possibly multiplied with corresponding coefficients)}. 

For the second-order term, we apply an existing global optimization tool, GELPIA~\cite{gelpia}, to obtain a deterministic upper bound.
In this example, $R_2(\mathbf x, \mathbf e, \mathbf d) = d_1 + d_2 + e_1e_2x_1x_2 + e_2d_1$. 
By feeding the input ranges of the variables to GELPIA, we get the upper bound $|R_2(\mathbf x, \mathbf e, \mathbf d)| \le 3.56 \times 10^{-15}\triangleq\theta$, which is neglectable compared with the first-order threshold $u$ for $c=0.01$ determined as above. Finally, we conclude that for Example~\ref{ex:prob}, $\mathbb P[\err(f, \mathbf x) \ge 6.74 \times 10^{-7} + \theta]\le 0.01$. 
A refined version of the analysis of this example with partition of value ranges is presented in Section~\ref{sec:partition}.

\subsection{Illustration of Example~\ref{ex:div}}
\label{sec:exdiv}

Below we demonstrate how the three-step procedure can be applied to the fractional 
expression $g(\mathbf x) = (x_1 \times x_2) / (x_3 + 5)$ in Example~\ref{ex:div}.
Most {details} 
are analogous to those employed in the analysis of Example~\ref{ex:poly}, with the exception of certain adjustments in {Step 2} to handle denominators.

\smallskip
\noindent {\bf Step 1: Applying Taylor Expansion.}
The floating-point model of $g(\mathbf x)$ is derived by replacing each operation with its floating-point model:
$\tilde g(\mathbf x, \mathbf e, \mathbf d) = \frac{x_1x_2(1 + e_1) + d_1}{(x_3 + 5)(1 + e_2) + d_2}(1 + e_3) + d_3$.
\full{$\tilde g(\mathbf x, \mathbf e, \mathbf d) = ((x_1x_2(1 + e_1) + d_1) / ((x_3 + 5)(1 + e_2) + d_2))(1 + e_3) + d_3$.}
Subsequently, the first-order and second-order error terms can be computed by applying Taylor expansion. 
Specifically, the first-order term is 
$ F(\mathbf x, \mathbf e) = \frac{|x_1x_2e_1| + |x_1x_2e_2| + |x_1x_2e_3|}{x_3 + 5}$.
\full{$F(\mathbf x, \mathbf e) = (|x_1x_2e_1| + |x_1x_2e_2| + |x_1x_2e_3|)/(x_3 + 5)$.}
\full{$$ 
F(\mathbf x, \mathbf e) = \left| \frac{x_1x_2}{x_3 + 5} e_1\right| + \left| -\frac{x_1x_2}{x_3 + 5} e_2\right| + \left| \frac{x_1x_2}{x_3 + 5} e_3\right| = \frac{1}{x_3 + 5} (|x_1x_2e_1| + |x_1x_2e_2| + |x_1x_2e_3|).
$$}
The absolute value around $x_3 + 5$ can be dropped since it is positive, because we know that $x_3\in [-1, 1]$.
The second-order term $|R_2(\mathbf x, \mathbf e, \mathbf d)|$ is omitted here for brevity.
As before, a deterministic upper bound for $|R_2(\mathbf x, \mathbf e, \mathbf d)|$ is obtained using off-the-shelf tools.

\smallskip
\noindent {\bf Step 2: Over-approximation for Absolute Values.}
We apply the relaxation technique for polynomials employed over Example~\ref{ex:poly} 
to the expressions $|x_1x_2e_1|$, $|x_1x_2e_2|$, $|x_1x_2e_3|$ in the numerator of $F(\mathbf x, \mathbf e)$.
{We} obtain $|x_1x_2e_1| \le(x_1^+x_2^+ + x_1^-x_2^- + x_1^+x_2^- + x_1^-x_2^+)\epsilon$, and analogous bounds for the other two { expressions}.
Consequently, we derive the following over-approximation for $F(\mathbf x, \mathbf e)$:
$$  F(\mathbf x, \mathbf e) \le 3(x_1^+x_2^+ + x_1^-x_2^- + x_1^+x_2^- + x_1^-x_2^+)\epsilon/(x_3 + 5).$$
Define $p(\mathbf x) { \triangleq} 3(x_1^+x_2^+ + x_1^-x_2^- + x_1^+x_2^- + x_1^-x_2^+)$ and $Q(\mathbf x) {\triangleq} x_3 + 5$, we have $F(\mathbf x, \mathbf e) \le {p(\mathbf x) \cdot \epsilon}/ { Q(\mathbf x)}$.

\smallskip
\noindent {\bf Step 3: Applying Markov's inequality.}
Before applying Markov's inequality, we first perform the following transformation. For a fixed threshold $u$, since $Q(\mathbf x)>0$, {we have}
$$\mathbb P[F(\mathbf x, \mathbf e) \ge u] \le \mathbb P\left[{p(\mathbf x) \cdot \epsilon}/{Q(\mathbf x)} \ge u \right] = \mathbb P[p(\mathbf x) \cdot \epsilon - Q(\mathbf x) \cdot u > 0].$$
Denote $p(\mathbf x) \cdot \epsilon - Q(\mathbf x) \cdot u$ by $K_u(\mathbf x)$ and define $\mu = \mathbb E[K_u(\mathbf x)]$.
When $\mu<0$~\footnote{The condition that $\mu < 0$ can be guaranteed by the values of $u$ we have. This will be explained in more details in Section~\ref{sec:div}.} and for analysis order $n=2$, the relaxation proceeds as
$$   \mathbb P[K_u(\mathbf x) \ge 0] = \mathbb P[K_u(\mathbf x) - \mu \ge -\mu] \le \mathbf [(K_u(\mathbf x) - \mu)^2 \ge \mu^2] \le {\mathbb E[(K_u(\mathbf x) - \mu)^2]}/{\mu^2},$$
where the final inequality follows by an application of Markov's inequality. The validity of the preceding steps is established in more detail in Theorem~\ref{thm:div-flag}.

Therefore, it suffices to have a threshold $u$ where ${\mathbb E[(K_u(\mathbf x) - \mu)^2]}/{\mu^2} \le 1 - c$. 
Plugging in the previous definitions and values, we have $\mu = 0.75\epsilon - 5u$, the inequality can be simplified into ${(u^2/3 - 11\epsilon^2/48)}/{(9\epsilon^2/16 - 15u\epsilon/2 + 25u^2)} < 0.01.$
The largest value of $u$ satisfying this inequality would be $u^* \approx 7.68\times 10^{-8}$.

By providing the second-order error term  $|R_2(\mathbf x, \mathbf e, \mathbf d)|$ along with the variable ranges to GELPIA, we obtain a deterministic upper bound: $|R_2(\mathbf x, \mathbf e, \mathbf d)| \le 1.47\times 10^{-14} \triangleq\delta$.
Consequently, for Example~\ref{ex:div}, we conclude that the total round-off error satisfies $\mathbb P[\err(g, \mathbf x) \ge 7.68\times 10^{-8} + \delta] \le 0.01$.

\section{Threshold Synthesis Algorithms}
\label{sec:algorithm}

In this section, we present our algorithms to address the probabilistic floating-point analysis problem (cf. the end of Section~\ref{sec:prelim}). 
A sequence of algorithms are presented, each building on its predecessor to either incorporate a wider range of inputs, or improve the precision of the analysis.
We first demonstrate the positive-negative (PN) decomposition technique (Section~\ref{sec:posneg}), which relaxes the absolute values and eliminates explicit occurrences of error variables. 
We then describe two algorithms for handling division-free input functions (functions that do not involve division by non-constant expressions) in Section~\ref{sec:poly}: the \emph{Naive Markov} (NM) algorithm and the \emph{Central-Moment-Based} (CMB) algorithm. 
The NM algorithm (Section~\ref{sec:polynm}) performs a direct application of Markov's inequality following the PN decomposition, offering a lightweight analysis.
The CMB algorithm (Section~\ref{sec:polycmb}) improves upon the NM algorithm by applying Markov's inequality with the central moment. 
This algorithm is extended to handle fractional expressions involving non-constant denominators (Section~\ref{sec:div}).
Moreover, the CMB algorithm enables further refinement via range partition (Section~\ref{sec:partition}).

We introduce some terminologies for this section. A \emph{term} is a product of the form $c x_{j_1}^{a_1}\dots x_{j_m}^{a_m}$, where $c$ is a constant that acts as the coefficient, the $a_i$'s are non-negative integers that act as the exponents, and each $x_{j_i}$ is a variable in the vector $\mathbf{x}$. 
Given a term $c x_{j_1}^{a_1}\dots x_{j_m}^{a_m}$, the product $x_{j_1}^{a_1}\dots x_{j_m}^{a_m}$ is called the \emph{monomial part} of the term, and each $x_{j_i}^{a_i}$ in the monomial is called a \emph{factor}. 
By convention, terms have non-zero coefficients.
An arithmetic expression over the variables $\mathbf{x}$ is in \emph{polynomial form} if it is a finite sum $\sum_i t_i$ 
where each $t_i$ is a term.

\subsection{Positive-Negative Decomposition}
\label{sec:posneg}

Given $k$ polynomial functions 
$h_1(\mathbf x), \ldots, h_k(\mathbf x)$, the objective of the PN decomposition algorithm is to derive a relaxation of the sum $\sum_{i=1}^k|h_i(\mathbf x)e_i|$ that eliminates both the absolute value operations and the explicit appearance of error variables $e_i$ (the relative error terms defined in Section~\ref{sec:floatingpoint}). 
This relaxation of the summation is used in subsequent steps; a precise computation would otherwise necessitate expensive computer algebra for identifying the individual positive and negative regions of each absolute value's argument.

The central idea of our PN decomposition algorithm is to introduce a pair of fresh variables $x^+,x^-$ for each existing variable $x$, defined by $x^+ :=\max\{x, 0\}$ and $x^-:=\max\{-x, 0\}$ (here we abuse the notation so that $x,x^+,x^-$ also refer to the random variables they indicate).
Two obvious properties follow:
(a) $x = x^+ - x^-$; (b) $x^+ \ge 0$ and $x^-\ge 0$.
We refer to $x$ as the \emph{original variable}, $x^+$ as the \emph{positive component}, and $x^-$ the \emph{negative component}.
In addition, the identity $x^+\cdot x^-=0$ holds for all $x$, which serves as a useful simplification property throughout the algorithm.

At a high level, the PN decomposition targets variables $x$ whose distribution spans both positive and negative values, i.e., variables that are not always non-negative or non-positive.
For such variables, odd powers $x^n$ are rewritten using the identity $x^n = x^{n-1}(x^+-x^-)$, which can be further simplified to $(x^+)^n - (x^-)^n$, leveraging the non-negativity of $x^{n-1}$. 
For $x$ always non-negative, no transformation is needed; for $x$ always non-positive, $x^n$ is replaced by $(-x^-)^n$.

The pseudocode of our PN decomposition algorithm is given in Algorithm~\ref{alg:posneg}. 
The algorithm takes as input a list of $k$ polynomial functions $h_1(\mathbf x), \ldots, h_k(\mathbf x)$, and outputs a polynomial expression $p(\mathbf x)$ that satisfies $\sum_{i=1}^k|h_i(\mathbf x) e_i| \le p(\mathbf x) \cdot \epsilon$ for all possible values of $\mathbf x$ and $\mathbf e$. 
Note that the polynomial $p$ explicitly depends on the positive and negative components of the variables in the vector $\mathbf x$, but since they depend on $\mathbf x$, we use $p(\mathbf x)$ as our notation for simplicity.

Algorithm~\ref{alg:posneg} first accepts the input polynomials and initializes the output $p(\mathbf x)$.
Then, the for loop at line 2 processes each $h_i$. 
During each loop iteration, lines 4 -- 10 replace each factor $x^a$ in $h_i$ by its PN decomposition 
as mentioned previously.
Lines 12 -- 20 collect the positive parts of the $h_i$'s into $h_i^+$ and the negative parts into $h_i^-$. 
Finally, line 21 takes the over-approximation of the original $|h_i|$ as $h_i^+ + h_i^-$ and adds it into the current over-approximation $p(\mathbf{x})$. 
After the loop, the returned polynomial $p(\mathbf{x})$ over-approximates $\sum_{i=1}^k|h_i(\mathbf x)e_i|$. 
The correctness is given in Theorem~\ref{thm:posnegcorrect}.

\begin{algorithm}[ht]
\caption{Positive-Negative Decomposition}
\label{alg:posneg}
\begin{algorithmic}[1]
\Require A list of $k$ polynomial functions $h_1(\mathbf x), \ldots, h_k(\mathbf x)$
\Ensure A polynomial expression $p(\mathbf x)$ such that $\forall \mathbf x, \mathbf e, \sum_{i=1}^k|h_i(\mathbf x)e_i| \le p(\mathbf x) \cdot \epsilon$
\State $p(\mathbf x) \gets 0$
\For {$i \gets 1 \text{ to } k$}
    \State Expand $h_i(\mathbf x)$ into polynomial form
    \For{each factor $x_j^a$ in each term in $h_i(\mathbf x)$}
        \If{$a$ is odd and $x_j$'s range spans over positive and negative values}
            \State Replace $x_j^a$ by $(x_j^+)^a - (x_j^-)^a$ 
            % \Comment $x_j^{a-1}$ can be omitted if $a = 1$
        \EndIf
        \If{$a$ is odd and $x_j$ is always non-positive}
            \State Replace $x_j^a$ by $-(x_j^-)^{a}$
        \EndIf
    \EndFor
    \State Expand $h_i(\mathbf x)$ again into polynomial form
    \State Initialize $h_i^+(\mathbf x) \gets 0; \quad h_i^-(\mathbf x) \gets 0$
    \For{each term $t$ in $h_i(\mathbf x)$}
        \If{term $t$ has positive coefficient}
            \State $h_i^+(\mathbf x) \gets h_i^+(\mathbf x) + t$
        \Else 
            \State $h_i^-(\mathbf x) \gets h_i^-(\mathbf x) - t$
        \EndIf
    \EndFor
    \State $p(\mathbf x) \gets p(\mathbf x) + h_i^+(\mathbf x) + h_i^-(\mathbf x)$
\EndFor
\State \textbf{Return} $p(\mathbf x)$
\end{algorithmic}
\end{algorithm}

\begin{theorem}
Let $p(\mathbf x)$ be the output of Algorithm \ref{alg:posneg}. Then for all possible values of $\mathbf x$ 
and $\mathbf e$,
$\sum_{i=1}^k|h_i(\mathbf x) e_i| \le p(\mathbf x) \cdot \epsilon$ holds.
\label{thm:posnegcorrect}
\end{theorem}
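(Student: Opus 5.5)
The proof reduces to showing that each loop iteration $i$ of Algorithm~\ref{alg:posneg} produces an exact identity and a pointwise bound, and then summing over $i$. Fix $\mathbf x$ in its range and $\mathbf e$ with $|e_i|\le\epsilon$. Since $|h_i(\mathbf x)e_i| = |h_i(\mathbf x)|\,|e_i| \le |h_i(\mathbf x)|\,\epsilon$, it suffices to prove, for every $i$,
\[
|h_i(\mathbf x)| \le h_i^+(\mathbf x) + h_i^-(\mathbf x),
\]
where $h_i^\pm$ are evaluated at $x_j^+=\max\{x_j,0\}$ and $x_j^-=\max\{-x_j,0\}$. Summing over $i$ then gives the claim, because line 21 accumulates exactly these quantities into $p(\mathbf x)$.

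\textbf{Step 1: the rewriting in lines 4--10 preserves the value of $h_i$.} Consider a factor $x_j^a$ with $a$ odd.
\begin{itemize}
\item If the range of $x_j$ spans both signs, then $(x_j^+)^a-(x_j^-)^a = x_j^a$. To see this, use $x_j^+x_j^-=0$: one of the two components is zero. If $x_j\ge0$ the expression equals $x_j^a$, and if $x_j<0$ it equals $-(-x_j)^a=x_j^a$ because $a$ is odd.
\item If $x_j$ is always non-positive, then $x_j=-x_j^-$, so $x_j^a=-(x_j^-)^a$.
\item Even powers, and odd powers of always non-negative variables, are left unchanged.
\end{itemize}
Re-expansion in line 12 is an algebraic identity. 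Hence, after line 12, $h_i(\mathbf x)=\sum_t t$ holds pointwise, where each $t$ is a term in the variables $x_j$ (only in unchanged factors) and $x_j^\pm$.

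\textbf{Step 2: every monomial part is non-negative.} This is the step needing the most care. After rewriting, each remaining factor has one of three forms:
\begin{itemize}
\item a power of $x_j^\pm$, which is non-negative;
\item an even power of $x_j$, which is non-negative;
\item an odd power of an always non-negative $x_j$, which is non-negative on its range.
\end{itemize}
Products of non-negatives are non-negative, so each monomial part $m_t$ is $\ge 0$ on the range. The sign of the term $t=c_tm_t$ is therefore determined by $c_t$. Terms simplified to zero by $x^+x^-=0$ can be dropped, since they contribute zero in either case.

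\textbf{Step 3: conclude with the triangle inequality.} By lines 14--19,
\[
h_i=\sum_{c_t>0}c_tm_t-\sum_{c_t<0}|c_t|m_t = h_i^+-h_i^-,
\]
with $h_i^+,h_i^-\ge0$. Hence $|h_i|\le h_i^++h_i^-$. Finally, summing $|h_i e_i|\le (h_i^++h_i^-)\epsilon$ over $i=1,\dots,k$ yields $\sum_i|h_i(\mathbf x)e_i|\le p(\mathbf x)\epsilon$, which is the statement of Theorem~\ref{thm:posnegcorrect}.
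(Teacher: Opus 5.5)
Your proof is correct and follows essentially the same route as the paper's: you show the rewriting preserves the value of $h_i$, that every monomial part is non-negative so $h_i = h_i^+ - h_i^-$ with both parts non-negative, and then apply the triangle inequality and sum over $i$. Your Step 2 is in fact slightly more careful than the paper's case list, since it explicitly covers odd powers of always non-negative variables, which the algorithm leaves untouched.
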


{\sc Proof Sketch} (full proof in Appendix~\ref{sec:fullproof}).
After the execution reaches line 12, $h_i(\mathbf x)$ remains equivalent to the original input expression, as substituting $x_j^a$ with $x_j^{a-1}(x_j^+ - x_j^-)$ preserves the equality. 
Besides, in each term in $h_i(\mathbf x)$, all factors ($x_j^+$, $x_j^-$, or even powers) are non-negative, ensuring non-negative monomial parts, and thus, all terms in $h_i(\mathbf x)$ have the same sign as its coefficient.

The decomposition of $h_i(\mathbf x)$ into $h_i^+(\mathbf x)$ and $h_i^-(\mathbf x)$ after the for loop at line 14 can be proved to have two properties similar to the decomposition of individual variables: (a) both $h_i^+(\mathbf x)$ and $h_i^-(\mathbf x)$ are non-negative; (b) $h_i(\mathbf x) = h_i^+(\mathbf x) - h_i^-(\mathbf x)$.

Using the triangular inequality, $|h_i(\mathbf x)e_i| = |h_i^+(\mathbf x)e_i - h_i^-(\mathbf x)e_i| \le |h_i^+(\mathbf x)e_i| + |h_i^-(\mathbf x)e_i|$. Since $h_i^+(\mathbf x)$ and $h_i^-(\mathbf x)$ are non-negative, this simplifies to $|h_i(\mathbf x)e_i| \le (h_i^+(\mathbf x) + h_i^-(\mathbf x))\epsilon$.
Summing over all $i$, $F(\mathbf x, \mathbf e) = \sum_{i=1}^k|h_i(\mathbf x)e_i| \le \sum_{i=1}^k(h_i^+(\mathbf x) + h_i^-(\mathbf x))\epsilon = p(\mathbf x) \cdot \epsilon$, as required. \qed

\full{\begin{proof}
We first establish the following two facts after the execution reaches line 11:
(a) $h_i(\mathbf x)$ is equivalent to the original input expression; 
(b) each term in $h_i(\mathbf x)$ has the same sign as its coefficient (the constant factor of the term).
Fact (a) is evident because the only two possible modifications to $h_i(\mathbf x)$ are replacing $x_j^a$ by either $(x_j^+)^a - (x_j^-)^a$ or $(-x_j^-)^a$.
Given that $x_j = x_j^+ - x_j^-$ and $x_j^{a-1}$ is non-negative for odd values of $a$, $x_j^{a} = x_j^{a-1}x_j^+ - x_j^{a-1}x_j^- = (x_j^+)^a - (x_j^-)^a$, and thus the first possible replacement preserves the value.
When $x_j$ is always non-positive, $x_j = x_j^-$ holds, and thus the second possible replacement is an equivalent transformation.
Then we examine the monomial part of each term. Note that all factors fall into one of the three categories: an even power of the original variable, the positive component, or the negative component. 
Each of the three is non-negative, and thus their product can be proved non-negative.
Thus, the monomial part of all terms in $h_i(\mathbf x)$ is non-negative, and fact (b) immediately follows.

Then we analyze the values of $h_i^+(\mathbf x)$ and $h_i^-(\mathbf x)$ after the execution of line 22, and show that it has the two properties similar to the decomposition of variables:
\begin{itemize}
    \item Both $h_i^+(\mathbf x)$ and $h_i^-(\mathbf x)$ are non-negative.
    When a term $t$ has positive coefficient, indicating that $t$ is non-negative (by fact (b)), it is added to $h_i^+$, thereby preserving the non-negativity of $h_i^+$.
    On the other hand, when $t$ has negative coefficient, the term $t$ is non-positive. Subtracting such a term form $h_i^-$ ensures that its non-negativity is maintained.
    \item $h_i(\mathbf x) = h_i^+(\mathbf x) - h_i^-(\mathbf x)$. Each term in $h_i(\mathbf x)$ is assigned either to $h_i^+(\mathbf x)$ as it appears, or to $h_i^-(\mathbf x)$ as its negation. Thus, subtracting $h_i^-(\mathbf x)$ from $h_i^+(\mathbf x)$ reconstructs the original value of $h_i(\mathbf x)$, thereby verifying the equality.
\end{itemize}

Therefore, we have the following relaxation:
$$ \begin{aligned}
|h_i(\mathbf x)e_i| &= |h_i^+(\mathbf x)e_i  - h_i^-(\mathbf x)e_i| 
\le |h_i^+(\mathbf x)e_i| + |h_i^-(\mathbf x)e_i|  \\
&= (h_i^+(\mathbf x) + h_i^-(\mathbf x))|e_i|\text{}
\le (h_i^+(\mathbf x) + h_i^-(\mathbf x))\epsilon.
\end{aligned}$$
The second inequality is an application of the triangular inequality. 
The third equality is because $h_i^+(\mathbf x)$ and $h_i^-(\mathbf x)$ are non-negative.
At the end of the algorithm, $p(\mathbf x) = \sum_{i=1}^k(h_i^+(\mathbf x) + h_i^-(\mathbf x))$.
Thus, for the entire first-order term we have
$$\sum_{i=1}^k|h_i(\mathbf x)e_i| \le \sum_{i=1}^k(h_i^+(\mathbf x) + h_i^-(\mathbf x))\epsilon = p(\mathbf x) \cdot \epsilon.$$
\end{proof}}

\begin{remark}
We choose to restrict the input functions to the PN decomposition algorithm to division-free expressions only. The algorithm expands each of the input functions into polynomial form, and subsequent steps (such as selectively transforming only odd-degree factors) operate on the granularity of individual terms and do not naturally extend to more general algebraic structures.
\end{remark}

% \begin{remark}
% For a single term or a sum of terms, computing the expectation with PN decomposition produces the same result as simply applying the triangular inequality and factorizing the expectation under independence. 
% However, the simple approach offers no direct path to further refinement, while PN decomposition enables direct improvement as it is a much finer representation.
% This improvement plays a bigger role in the interaction with range partition (Section~\ref{sec:partition}).
% \end{remark}

\subsection{Algorithms for Division-Free Expressions}
\label{sec:poly}

By feeding polynomials $h_i(\mathbf x) = \frac{\partial \tilde f}{\partial e_i}(\mathbf x, \mathbf 0, \mathbf 0)$ ($1\le i \le k$) from a division-free first-order error term  $F(\mathbf x, \mathbf e) = \sum_{i=1}^k\left|\frac{\partial \tilde f}{\partial e_i}(\mathbf x, \mathbf 0, \mathbf 0) e_i\right|$ as input to Algorithm~\ref{alg:posneg}, one obtains a sound polynomial over-approximation $p(\mathbf x)$ for 
$F(\mathbf x, \mathbf e) = \sum_{i=1}^k\left|\frac{\partial \tilde f}{\partial e_i}(\mathbf x, \mathbf 0, \mathbf 0) e_i\right|$.
By using Algorithm~\ref{alg:posneg} as a subroutine, we further develop two algorithms below 
that solve the probabilistic round-off analysis problem by synthesizing a threshold that fulfills $\mathbb P[\err(f, \mathbf x) < U_c] > c$.

\subsubsection{Naive Markov (NM) Algorithm}
\label{sec:polynm}

The Naive Markov Algorithm, presented in Algorithm~\ref{alg:nm}, computes an error threshold by utilizing an over-approximation produced by Algorithm~\ref{alg:posneg} and a direct application of Markov's inequality in a higher-moment setting.

Given a confidence level $c$, an analysis order $n$~\footnote{The analysis order $n$ is a hyper-parameter. Analyses with a larger $n$ usually generate more accurate thresholds at the cost of increased analysis time. The parameter $n$ in Section~\ref{sec:polycmb} and \ref{sec:div} serves the same role.}, and a division-free arithmetic expression $f(\mathbf{x})$ to be analyzed, the algorithm first computes the partial derivatives $\frac{\partial \tilde f}{\partial e_i}(\mathbf x, \mathbf 0, \mathbf 0)$ (lines 1 -- 2) and passes it to Algorithm~\ref{alg:posneg} to generate an over-approximation $p(\mathbf x)$ of $F(\mathbf x, \mathbf e)$ (line 3). 
Then, the algorithm gets a probabilistic threshold $U_c^{(1)}$ for first-order error term $F(\mathbf x, \mathbf e)$ by applying Markov's inequality with analysis order $n$ over $p(\mathbf x)$ (line 4, see also Theorem~\ref{thm:markov}).
Next, the algorithm employs a sound global optimization tool (e.g. GELPIA) to get a deterministic upper bound $U^{(2)}$ for the second-order error term (line 6). 
Finally, the sum of the two thresholds is returned as the overall threshold $U_c^*$. 
The soundness of the algorithm is given in Theorem~\ref{thm:nmcorrect}. 

\begin{algorithm}[ht]
\caption{The Naive Markov Algorithm}
\label{alg:nm}
\begin{algorithmic}[1]
\Require A division-free expression $f(\mathbf x)$, distribution of the input variables, confidence level $c$, unit round-off $\epsilon$, smallest representable number $\delta$, analysis order $n$
\Ensure A probabilistic upper bound $U_c^*$ for the round-off error $\err(f, \mathbf x)$
\State Compute the floating-point version of the expression $\tilde f(\mathbf x, \mathbf e, \mathbf d)$. 
\State Compute the partial derivatives $h_i(\mathbf x) = \frac{\partial \tilde f}{\partial e_i}(\mathbf x, \mathbf 0, \mathbf 0)$.
\State Input the partial derivatives $h_i(\mathbf x)$ into Algorithm~\ref{alg:posneg} and get the expression for $p(\mathbf x)$ 
\State $U_c^{(1)} \gets \epsilon \cdot \sqrt[n]{\mathbb E[(p(\mathbf x))^n]/(1-c)}$
\State Compute the expression for second-order error term $|R_2(\mathbf x, \mathbf e, \mathbf d)|$.
\State Get a deterministic upper bound $U^{(2)}$ for $|R_2(\mathbf x, \mathbf e, \mathbf d)|$ using a sound global optimization tool
\State \textbf{Return} $U_c^* \gets U_c^{(1)} + U^{(2)}$
\end{algorithmic}
\end{algorithm}

\begin{theorem}
Let $U_c^*$ be the output of Algorithm~\ref{alg:nm}, then $\mathbb P[\err(f, \mathbf x) \ge U_c^*] \le 1-c$.
\label{thm:nmcorrect}
\end{theorem}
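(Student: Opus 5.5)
The plan is to chain three facts: the Taylor decomposition of Section~\ref{sec:taylor}, the pointwise bound of Theorem~\ref{thm:posnegcorrect}, and Markov's inequality (Theorem~\ref{thm:markov}) applied to the $n$-th power of the polynomial over-approximation. Throughout, $\mathbf x$ is random with law $\mathbb P_\mathcal D$. The quantities $\tilde{\mathbf e},\tilde{\mathbf d}$ are the actual (input-dependent) error values, and they only ever satisfy $|\tilde e_i|\le\epsilon$, $|\tilde d_i|\le\delta$.

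\textbf{Step 1: a pointwise bound.} For every input $\mathbf x$, Section~\ref{sec:taylor} gives
\[
\err(f,\mathbf x)\le F(\mathbf x,\tilde{\mathbf e})+|R_2(\mathbf x,\tilde{\mathbf e},\tilde{\mathbf d})|.
\]
Since $U^{(2)}$ is a sound deterministic bound on $|R_2|$ over the whole input range and all admissible error values, we have $|R_2(\mathbf x,\tilde{\mathbf e},\tilde{\mathbf d})|\le U^{(2)}$. Theorem~\ref{thm:posnegcorrect}, applied with $h_i=\frac{\partial\tilde f}{\partial e_i}(\mathbf x,\mathbf 0,\mathbf 0)$, gives $F(\mathbf x,\tilde{\mathbf e})\le p(\mathbf x)\,\epsilon$. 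Together these yield $\err(f,\mathbf x)\le \epsilon\, p(\mathbf x)+U^{(2)}$ for every $\mathbf x$. This has the consequence that the error variables no longer appear, so the only randomness left is in $\mathbf x$.

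\textbf{Step 2: event inclusion.} From Step 1,
\[
\{\err(f,\mathbf x)\ge U_c^*\}\subseteq\{\epsilon\, p(\mathbf x)\ge U_c^{(1)}\},
\]
and hence $\mathbb P[\err(f,\mathbf x)\ge U_c^*]\le \mathbb P[\epsilon\, p(\mathbf x)\ge U_c^{(1)}]$.

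\textbf{Step 3: Markov's inequality.} By the proof of Theorem~\ref{thm:posnegcorrect}, $p(\mathbf x)=\sum_i(h_i^++h_i^-)$ is a sum of non-negative quantities, so $p(\mathbf x)\ge 0$. Raising both sides to the power $n$ is therefore monotone, and Theorem~\ref{thm:markov} applied to $(\epsilon\, p(\mathbf x))^n$ gives
\[
\mathbb P[\epsilon\, p(\mathbf x)\ge U_c^{(1)}]\le \frac{\epsilon^n\,\mathbb E[p(\mathbf x)^n]}{(U_c^{(1)})^n}.
\]
Substituting $U_c^{(1)}=\epsilon\sqrt[n]{\mathbb E[p^n]/(1-c)}$ makes the right-hand side equal to $1-c$.

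\textbf{Main obstacle: measure-theoretic bookkeeping.} The difficulty is not in the inequalities but in three side conditions.
\begin{enumerate}
\item \emph{Measurability.} We must avoid treating $\tilde{\mathbf e}$ as random. Step 1 sidesteps this because the bound holds pointwise for any admissible error values, so $\err(f,\mathbf x)$ is dominated by the measurable function $\epsilon p(\mathbf x)+U^{(2)}$.
\item \emph{Finiteness of $\mathbb E[p^n]$.} This expectation must be finite, which holds for the supported distributions since their moments are finite. If $\mathbb E[p^n]=\infty$, the bound is vacuous and trivially true.
\item \emph{Degenerate case $\mathbb E[p^n]=0$.} Then $U_c^{(1)}=0$ and Markov's inequality needs $a>0$. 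In this case $p=0$ almost surely, so $\err\le U^{(2)}=U_c^*$ almost surely and the event has probability $0$. However, the event is $\err\ge U_c^*$, which this does not rule out, so this case should instead be handled by a strictness argument or noted separately.
\end{enumerate}
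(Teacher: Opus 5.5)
Your argument is the paper's proof. Both use the same three ingredients: Theorem~\ref{thm:posnegcorrect} to get $F(\mathbf x,\mathbf e)\le \epsilon\,p(\mathbf x)$, Markov's inequality on $(\epsilon\,p(\mathbf x))^n$ with the stated $U_c^{(1)}$, and the deterministic bound $|R_2|\le U^{(2)}$; you merely fold the second-order bound in before applying Markov rather than after. The degenerate case $\mathbb E[p(\mathbf x)^n]=0$ that you flag is also skipped silently by the paper, whose proof divides by $(U_c^{(1)})^n$. No strictness argument can rescue the $\ge$ statement there: an expression with no rounded operations, with the optimizer returning $U^{(2)}=0$, gives $U_c^*=0$ and $\mathbb P[\err(f,\mathbf x)\ge 0]=1$. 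So that case must be excluded outright, by assuming $\mathbb E[p(\mathbf x)^n]>0$.
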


\begin{proof}
By Theorem~\ref{thm:posnegcorrect}, we have 
$F(\mathbf x, \mathbf e) = \sum_{i=1}^k\left|\frac{\partial\tilde f}{\partial e_i}(\mathbf x, \mathbf 0, \mathbf 0)e_i\right| \le p(\mathbf x) \cdot \epsilon$.
Thus, for any value of $U_c^{(1)}\in \mathbb R$, we can relax the first-order error threshold violation probability
$\mathbb P[F(\mathbf x, \mathbf e) \ge U_c^{(1)}]  \le \mathbb P[p(\mathbf x) \cdot \epsilon \ge U_c^{(1)}] = \mathbb P[(p(\mathbf x))^n \cdot \epsilon ^n \ge (U_c^{(1)})^n]$.
Since $F(\mathbf x, \mathbf e)$ is non-negative, its over-approximation $(p(\mathbf x))^n \cdot \epsilon^n$ is also non-negative, and thus Markov's inequality is applicable. Therefore,
$\mathbb P[(p(\mathbf x))^n \cdot \epsilon ^n \ge (U_c^{(1)})^n] \le {\mathbb E[(p(\mathbf x))^n] \cdot \epsilon^n }/{(U_c^{(1)})^n}$.
Subsequently, we substitute $U_c^{(1)} = \epsilon \cdot \sqrt[n]{\mathbb E[(p(\mathbf x))^n]/(1-c)}$ in the formula to get:
${\mathbb E[(p(\mathbf x))^n] \cdot \epsilon^n }/{(U_c^{(1)})^n} = {\mathbb E[(p(\mathbf x))^n] \cdot \epsilon^n}/{(\mathbb E[(p(\mathbf x))^n] \cdot \epsilon^n / (1-c))} = 1-c$.
Combining the derivations above, we arrive at the following inequality
$\mathbb P[F(\mathbf x, \mathbf e) \ge U_c^{(1)}]  \le 1-c$.
Since the global optimization tool gives a sound and deterministic upper bound of the second-order error, $|R_2(\mathbf x, \mathbf e, \mathbf d)| \le U^{(2)}$ holds. Therefore,
$\mathbb P [\err (f, \mathbf x) \ge U_c^*] =\mathbb P[F(\mathbf x, \mathbf e) + |R_2(\mathbf x, \mathbf e, \mathbf d)| \ge U_c^{(1)} + U^{(2)}]\le \mathbb P[F(\mathbf x, \mathbf e) \ge U_c^{(1)}]  \le 1-c$.
\end{proof}

\subsubsection{Central-Moment-Based (CMB) Algorithm}
\label{sec:polycmb}

The NM algorithm determines an optimal threshold $U_c^*$ for a specified confidence level $c$.
In contrast, the CMB algorithm adopts an inverse approach -- for a fixed threshold $u$, it computes an upper bound $\flag_u$ on the first-order violation probability $\mathbb P[F(\mathbf x, \mathbf e) \ge u]$.
The CMB algorithm then seeks a close-to-minimum value of $u$ such that $\flag_u \le 1-c$, thereby ensuring the desired probabilistic guarantee. 
To achieve this, the CMB algorithm employs a binary search over an interval $(\ell, r)$, whose endpoints will be formally introduced later. 
It is worth noting that $\flag_u$ is monotonically decreasing with respect to $u$ once $u$ exceeds the lower endpoint $\ell$~\footnote{A proof of this monotonicity property is provided in Appendix~\ref{sec:flagmono}.}, which justifies the correctness of the binary search procedure.

We first introduce the definition of $\flag_u$ and prove the validity of $\flag_u$ as an upper bound in Theorem~\ref{thm:cmb-flag}, and then present the complete CMB algorithm in Algorithm~\ref{alg:cmb}. 
The correctness for Algorithm~\ref{alg:cmb} is proved in Theorem~\ref{thm:cmbcorrect}.

\begin{theorem}
Let $u$ be a fixed threshold.
Let $p(\mathbf x)$ be a function satisfying $F(\mathbf x, \mathbf e) \le p(\mathbf x) \cdot \epsilon$ for all possible values of $\mathbf x$ and $\mathbf e$.
Let $K_u(\mathbf x)\triangleq p(\mathbf x)\cdot \epsilon - u$, and $\mu \triangleq \mathbb E[K_u(\mathbf x)]$. 
Suppose that $\mu < 0$.
For an even analysis order $n$, if we define
$\flag_u \triangleq {\mathbb E[(K_u(\mathbf x) - \mu)^n]}/{\mu^n}$,
then the first-order violation probability admits the following upper bound: $\mathbb P[F(\mathbf x, \mathbf e) \ge u] \le \flag_u$.
\label{thm:cmb-flag}
\end{theorem}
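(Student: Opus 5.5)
The argument chains event inclusions and ends with Markov's inequality (Theorem~\ref{thm:markov}) applied to the even central moment $(K_u(\mathbf x)-\mu)^n$.

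\textbf{Step 1: reduce to an event about $K_u$.} By hypothesis $F(\mathbf x,\mathbf e)\le p(\mathbf x)\cdot\epsilon$ holds pointwise. Hence the event $\{F(\mathbf x,\mathbf e)\ge u\}$ is contained in $\{p(\mathbf x)\cdot\epsilon \ge u\}=\{K_u(\mathbf x)\ge 0\}$. By monotonicity of probability,
\[
\mathbb P[F(\mathbf x,\mathbf e)\ge u]\le \mathbb P[K_u(\mathbf x)\ge 0].
\]

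\textbf{Step 2: center the variable.} Subtract $\mu$ to get $\{K_u(\mathbf x)\ge 0\}=\{K_u(\mathbf x)-\mu\ge -\mu\}$. Since $\mu<0$, the right-hand side $-\mu$ is strictly positive, so on this event $|K_u(\mathbf x)-\mu|\ge -\mu=|\mu|>0$. Raising to the even power $n$ is monotone on non-negative reals, so the event is contained in $\{(K_u(\mathbf x)-\mu)^n\ge \mu^n\}$. Here $(K_u(\mathbf x)-\mu)^n=|K_u(\mathbf x)-\mu|^n$ and $\mu^n=|\mu|^n$, both because $n$ is even.

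\textbf{Step 3: apply Markov.} The random variable $(K_u(\mathbf x)-\mu)^n$ is non-negative because $n$ is even, and the threshold $\mu^n$ is strictly positive because $\mu\neq 0$. Theorem~\ref{thm:markov} therefore applies and gives
\[
\mathbb P[(K_u(\mathbf x)-\mu)^n\ge\mu^n]\le \mathbb E[(K_u(\mathbf x)-\mu)^n]/\mu^n=\flag_u .
\]
Chaining Steps 1--3 yields $\mathbb P[F(\mathbf x,\mathbf e)\ge u]\le\flag_u$.

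\textbf{Where the hypotheses are used.} The main point needing care is Step 2. The assumption $\mu<0$ is what makes the centered threshold $-\mu$ positive, so that passing from $K_u-\mu\ge-\mu$ to the absolute-value event and then to the $n$-th power is valid. Evenness of $n$ is what guarantees the non-negativity required by Markov's inequality. Otherwise the proof needs only integrability of $(K_u(\mathbf x)-\mu)^n$, which I take as implicit in $\flag_u$ being defined (finite moments of the polynomial $p$).
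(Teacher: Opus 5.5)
Your proof is correct and matches the paper's argument step for step. The paper likewise reduces to $\mathbb P[K_u(\mathbf x)\ge 0]$ via the pointwise bound, uses $\mu<0$ to get the inclusion $\{K_u(\mathbf x)\ge 0\}\subseteq\{|K_u(\mathbf x)-\mu|\ge|\mu|\}=\{(K_u(\mathbf x)-\mu)^n\ge\mu^n\}$ for even $n$, and finishes with Markov's inequality; your remark that integrability of $(K_u(\mathbf x)-\mu)^n$ is needed is a point the paper leaves implicit.
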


\begin{proof}
By assumption, we have $F(\mathbf x, \mathbf e) \le p(\mathbf x) \cdot \epsilon$. 
Therefore, for a fixed value of $u$,
$\mathbb P[F(\mathbf x, e)\ge u] \le \mathbb P[p(\mathbf x)\cdot \epsilon \ge u] = \mathbb P[K_u(\mathbf x) \ge 0]$. 
Denote the event $K_u(\mathbf x) \ge 0 $ by $E_1$, and the event $|K_u(\mathbf x) - \mu| \ge |\mu|$ by $E_2$. 
We claim that the occurrence of event $E_1$ implies the occurrence of event $E_2$, i.e., $E_1\subseteq E_2$, based on the following reasoning:
whenever $K_u(\mathbf x) \ge0$ ($E_1$ occurs) and given that $\mu<0$, it holds that $K_u(\mathbf x) - \mu \ge0$, and therefore,
$|K_u(\mathbf x) - \mu| = K_u(\mathbf x) - \mu \ge -\mu = |\mu|$,
which satisfies the condition defining event $E_2$.
Hence, $\mathbb P[K_u(\mathbf x) \ge 0] =\mathbb P[E_1] \le \mathbb P[E_2] = \mathbb P[|K_u(\mathbf x) - \mu| \ge \mu] = \mathbb P[(K_u(\mathbf x) - \mu)^n \ge \mu^n]$. 
Then we apply Markov's inequality (applicable since $n$ is even and thus $\mu^n > 0$) and get
$\mathbb P[(K_u(\mathbf x) - \mu)^n \ge \mu^n] \le {\mathbb E[(K_u(\mathbf x) - \mu)^n]}/{\mu^n} = \flag_u$.
Combining the derivations above would lead us to the desired conclusion.
\end{proof}

\begin{algorithm}[ht]
\caption{The Central-Moment-Based Algorithm}
\label{alg:cmb}
\begin{algorithmic}[1]
\Require A division-free expression $f(\mathbf x)$, distribution of the input variables, confidence level $c$, unit round-off $\epsilon$, smallest representable number $\delta$, and even analysis order $n$
\Ensure A probabilistic upper bound $U_c^*$ for the round-off error $\err(f, \mathbf x)$
\State Compute the floating-point version of the expression $\tilde f(\mathbf x, \mathbf e, \mathbf d)$.
\State Compute the partial derivatives $h_i(\mathbf x) = \frac{\partial \tilde f}{\partial e_i}(\mathbf x, \mathbf 0, \mathbf 0)$.
\State Input the partial derivatives $h_i(\mathbf x)$ into Algorithm~\ref{alg:posneg} and get the expression for $p(\mathbf x)$ 
\State $\ell\gets \mathbb E[p(\mathbf x)] \cdot \epsilon$
\State $r\gets\epsilon \cdot \sqrt[n]{\mathbb E[(p(\mathbf x))^n]/(1-c)}$
\If {$\flag_{r} < 1-c$}
    \State Use binary search to find the close-to-minimum $U^{(1)}_c\in (\ell, r)$ such that $\flag_{U^{(1)}_c} \le 1 - c$
\Else 
    \State $U_c^{(1)} \gets r$
\EndIf
\State Get a deterministic upper bound $U^{(2)}$ for $|R_2(\mathbf x, \mathbf e, \mathbf d)|$ using a sound global optimization tool
\State \textbf{Return} $U_c^* \gets U_c^{(1)} + U^{(2)}$
\end{algorithmic}
\end{algorithm}

\begin{theorem}
Let $U_c^*$ be the output of Algorithm~\ref{alg:cmb}, then $\mathbb P[\err(f, \mathbf x)\ge U_c^*] \le 1-c$.
\label{thm:cmbcorrect}
\end{theorem}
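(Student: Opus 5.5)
The proof splits on the branch taken at line 6 of Algorithm~\ref{alg:cmb}. In each branch we establish the first-order bound $\mathbb P[F(\mathbf x, \mathbf e) \ge U_c^{(1)}] \le 1-c$. We then combine it with the deterministic second-order bound exactly as in the last step of the proof of Theorem~\ref{thm:nmcorrect}. Throughout, $p(\mathbf x)$ is the output of Algorithm~\ref{alg:posneg}, so Theorem~\ref{thm:posnegcorrect} gives $F(\mathbf x, \mathbf e) \le p(\mathbf x)\cdot\epsilon$ for all $\mathbf x, \mathbf e$. This is precisely the hypothesis on $p$ needed in Theorem~\ref{thm:cmb-flag}.

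\emph{Else branch.} Here $U_c^{(1)} = r = \epsilon \cdot \sqrt[n]{\mathbb E[(p(\mathbf x))^n]/(1-c)}$, which is the same value the NM algorithm uses. We reuse the argument of Theorem~\ref{thm:nmcorrect} verbatim. Since $p(\mathbf x)\epsilon \ge F \ge 0$, Markov's inequality (Theorem~\ref{thm:markov}) applied to $(p(\mathbf x))^n\epsilon^n$ yields $\mathbb P[F \ge r] \le 1-c$. (If $\mathbb E[(p(\mathbf x))^n]=0$, then $p(\mathbf x)=0$ almost surely and the bound is trivial. This degenerate case should be mentioned.)

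\emph{Binary-search branch.} The returned $U_c^{(1)}$ lies in $(\ell, r)$ and satisfies $\flag_{U_c^{(1)}} \le 1-c$. Binary search only ever returns a point at which the flag condition has been checked. If it instead returns the current right endpoint, that endpoint has also been certified, ultimately by $\flag_r < 1-c$. To invoke Theorem~\ref{thm:cmb-flag} with $u = U_c^{(1)}$, we must verify its hypothesis $\mu < 0$. By linearity of expectation, $\mu = \mathbb E[p(\mathbf x)]\epsilon - u = \ell - u$, and this is strictly negative because $u > \ell$. The analysis order $n$ is even by the input specification. Theorem~\ref{thm:cmb-flag} then gives $\mathbb P[F(\mathbf x,\mathbf e) \ge U_c^{(1)}] \le \flag_{U_c^{(1)}} \le 1-c$. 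The same check $r > \ell$ makes $\flag_r$ well defined in the test at line 6. This follows from Jensen's inequality or Lyapunov's inequality: $\sqrt[n]{\mathbb E[p^n]} \ge \mathbb E[p]$ and $1/(1-c) > 1$, so $r > \ell$ whenever $\mathbb E[p(\mathbf x)] > 0$.

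\emph{Combining.} The sound optimizer guarantees $|R_2(\mathbf x, \mathbf e, \mathbf d)| \le U^{(2)}$ everywhere. Hence the event $\err(f,\mathbf x) \ge U_c^{(1)} + U^{(2)}$ is contained in the event $F(\mathbf x, \mathbf e) \ge U_c^{(1)}$, using $\err \le F + |R_2|$ from Section~\ref{sec:taylor}. The theorem follows.

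The main obstacle is the bookkeeping in the binary-search branch. One must make sure the value actually returned is one at which $\flag_u \le 1-c$ was verified, so that soundness does not rely on the monotonicity claim, which only affects optimality. One must also ensure that this value is strictly above $\ell$, so that $\mu<0$ holds.
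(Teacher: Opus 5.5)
Your proof is correct and follows the same route as the paper: split on the branch at line 6, reuse the Naive Markov bound when $U_c^{(1)} = r$, invoke Theorem~\ref{thm:cmb-flag} with $\mu = \ell - u < 0$ in the binary-search branch, and absorb the second-order term via $|R_2| \le U^{(2)}$. Your extra remarks on the strictness of $\mu<0$, on $r>\ell$ via Jensen, and on the returned point being certified are sound refinements the paper leaves implicit. One correction: your parenthetical on the degenerate case $\mathbb E[(p(\mathbf x))^n]=0$ is wrong, since then $r=0$ and $\mathbb P[F(\mathbf x,\mathbf e)\ge 0]=1>1-c$, so the bound fails rather than being trivial; both your argument and the paper's implicitly need $\mathbb E[(p(\mathbf x))^n]>0$ (i.e., $r>0$) for Markov's inequality to apply.
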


\begin{proof}

As established in the proof of Theorem~\ref{thm:nmcorrect}, $p(\mathbf x)$ computed in line 3 should satisfy 
$F(\mathbf x, \mathbf e) \le p(\mathbf x) \cdot \epsilon$.
Note that the initial value of $r$ coincides with the value of $U^{(1)}_c$ produced by Algorithm~\ref{alg:nm}. 
To verify that $U^{(1)}_c$ is a valid threshold, we take a closer look at the two branches of the conditional statement in line 6:
\begin{itemize}
    \item When $\flag_r < 1 - c$, the \texttt{then} branch is taken. 
    In this case, the value of $\ell = \mathbb E[p(\mathbf x)] \cdot \epsilon$ is chosen so that for any candidate threshold $u \in (\ell, r)$, we have $\mathbb E[K_u(\mathbf x)] = \mathbb E[p(\mathbf x) \cdot \epsilon - u] = \ell - u \le 0$. 
    Consequently, the conditions of Theorem~\ref{thm:cmb-flag} is satisfied, and we conclude that 
    $\mathbb P[F(\mathbf x, \mathbf e) \ge U_c^{(1)}] \le \flag_{U_c^{(1)}} \le 1 -c .$
    \item When the condition $\flag_r < 1 - c$ is not satisfied, the algorithm directly sets $U_c^{(1)} := r$, which, by construction, is identical to the first-order threshold computed by Algorithm~\ref{alg:nm}. By Theorem~\ref{thm:nmcorrect}, it follows that $\mathbb P[F(\mathbf x, \mathbf e) \ge U_c^{(1)}] \le 1-c$.
\end{itemize}
In both cases, the value assigned to $U_c^{(1)}$ ensures that $\mathbb P[F(\mathbf x, \mathbf e) \ge U_c^{(1)}] \le 1 -c$, making it a valid threshold on the first-order error term.
Since the second-order error term is handled identically to Algorithm~\ref{alg:nm}, it follows that the final threshold $U_c^*$, including both first- and second-order error, satisfies $\mathbb P[\err(f, \mathbf x) \ge U_c^*] \le 1 - c$.
\end{proof}

\subsection{Algorithm for Fractional Expressions}
\label{sec:div}

In Section~\ref{sec:poly}, two algorithms are introduced to address division-free expressions. 
However, when dealing with expressions with division by non-constant expressions, the partial derivatives $\frac{\partial \tilde f}{\partial e_i}(\mathbf x, \mathbf 0, \mathbf 0)$ generally contain divisions, rendering them incompatible with the PN decomposition algorithm, which assumes division-free inputs.
In this subsection, we explain how the CMB algorithm (Section~\ref{sec:polycmb}) can be extended to handle fractional expressions in the form of $N(\mathbf x)/Q(\mathbf x)$ where both the numerator $N(\mathbf x)$ and the denominator $Q(\mathbf x)$ are division-free. 

Similar to Section~\ref{sec:polycmb}, we first state and prove the applicability of PN decomposition to our modified inputs in Lemma~\ref{lem:fractionalpn}, then prove a theorem for bounding the first-order violation probability of a fixed threshold $u$, denoted by $\flag_u$, in Theorem~\ref{thm:div-flag}.
Finally we introduce the complete algorithm in Algorithm ~\ref{alg:div} and prove its soundness in Theorem~\ref{thm:divcorrect}.

\begin{lemma}\label{lem:fractionalpn}
Suppose the input function $f(\mathbf x) = N(\mathbf x) / Q(\mathbf x)$ is a fractional expression, where both the numerator $N(\mathbf x)$ and the denominator $Q(\mathbf x)$ are division-free, and $Q(\mathbf x) \neq 0$ for all possible values of $\mathbf x$ in the specified distribution. Let $h_i(\mathbf x) = \frac{\partial \tilde f}{\partial e_i}(\mathbf x, \mathbf 0, \mathbf 0) \cdot (Q(\mathbf x))^2$, then $h_i(\mathbf x)$ can be simplified to a polynomial expression.
\end{lemma}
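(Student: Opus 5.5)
The plan is a direct computation with the quotient rule, using the structure of the floating-point model from Section~\ref{sec:floatingpoint}. Since $f(\mathbf x) = N(\mathbf x)/Q(\mathbf x)$ has the division as its outermost operation, I would first write its floating-point model as
$$\tilde f(\mathbf x, \mathbf e, \mathbf d) = \frac{\tilde N(\mathbf x, \mathbf e_N, \mathbf d_N)}{\tilde Q(\mathbf x, \mathbf e_Q, \mathbf d_Q)}(1 + e_k) + d_k .$$
Here $\tilde N$ and $\tilde Q$ are the floating-point models of the division-free subexpressions $N$ and $Q$. The error variables split into three disjoint groups: those of $N$, those of $Q$, and the pair $(e_k, d_k)$ for the final division. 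This split holds because each operation gets its own pair $(e_i, d_i)$.

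The first key step is that $\tilde N$ and $\tilde Q$ are polynomials in $\mathbf x, \mathbf e, \mathbf d$. Every operation in $N$ and $Q$ is $+$, $-$, $\times$, or division by a nonzero constant. Replacing each operation $a \circ b$ by $(a\circ b)(1+e)+d$ preserves polynomiality, with constant divisors absorbed into the coefficients. This is a straightforward structural induction on the expression tree. Consequently, each partial derivative $\partial \tilde N/\partial e_i$ and $\partial \tilde Q/\partial e_i$ is again a polynomial. The same holds after substituting $\mathbf e = \mathbf 0, \mathbf d = \mathbf 0$, and these substitutions also give $\tilde N(\mathbf x,\mathbf 0,\mathbf 0) = N(\mathbf x)$ and $\tilde Q(\mathbf x,\mathbf 0,\mathbf 0)=Q(\mathbf x)$.

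Next I would differentiate, splitting into three cases according to which group $e_i$ belongs to. Since $Q(\mathbf x)\neq 0$ on the support, the quotient is differentiable there.
\begin{itemize}
\item If $e_i$ occurs in $N$: $\frac{\partial \tilde f}{\partial e_i}(\mathbf x,\mathbf 0,\mathbf 0) = \frac{\partial \tilde N}{\partial e_i}(\mathbf x,\mathbf 0,\mathbf 0)/Q(\mathbf x)$. Multiplying by $Q^2$ gives $Q\cdot\partial\tilde N/\partial e_i$, a polynomial.
\item If $e_i$ occurs in $Q$: the quotient rule gives $-N(\mathbf x)\,\frac{\partial \tilde Q}{\partial e_i}(\mathbf x,\mathbf 0,\mathbf 0)/Q(\mathbf x)^2$. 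Multiplying by $Q^2$ gives $-N\cdot\partial\tilde Q/\partial e_i$, a polynomial.
\item If $i=k$: the derivative is $N/Q$, so $h_k = N\cdot Q$.
\end{itemize}
In each case the factor $1+e_k$ evaluates to $1$ at the origin, and $d_k$ contributes nothing. So $h_i$ simplifies to a polynomial after cancelling $Q$.

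The main obstacle is the bookkeeping in the first step. One must make sure that the error variables of $N$, $Q$, and the final division are genuinely disjoint. One must also check that evaluating at $\mathbf e=\mathbf 0,\mathbf d=\mathbf 0$ recovers exactly $N$ and $Q$, so that the cancellation is legitimate. I would handle this with a short induction on the syntax tree of a division-free expression, proving the invariant ``model is a polynomial, and it equals the exact expression at zero error''. This is the only place the division-free hypothesis on $N$ and $Q$ is used.
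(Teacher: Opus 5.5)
Your proposal is correct and follows essentially the same route as the paper: you split the error variables according to whether they occur in $\tilde N$, in $\tilde Q$, or are the outer division's $e_k$. You then apply the quotient rule at zero error, so that multiplying by $(Q(\mathbf x))^2$ yields $Q\cdot\partial\tilde N/\partial e_i$, $-N\cdot\partial\tilde Q/\partial e_i$, and $N\cdot Q$ respectively. Your added structural induction (that $\tilde N,\tilde Q$ are polynomial and reduce to $N,Q$ at zero error) only makes explicit what the paper takes for granted from the division-free hypothesis and the identity $\tilde f(\mathbf x,\mathbf 0,\mathbf 0)=f(\mathbf x)$.
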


{\sc Proof Sketch} (full proof in Appendix~\ref{sec:fullproof}).
Given the structure of the input function $f(\mathbf x)$, the floating-point model should have the form
$\tilde f(\mathbf x, \mathbf e, \mathbf d) = \frac{\tilde N(\mathbf x, \mathbf e, \mathbf d)}{\tilde Q(\mathbf x, \mathbf e, \mathbf d)}  (1 + e_k) + d_k$, 
\full{$\tilde f(\mathbf x, \mathbf e, \mathbf d) = \tilde N(\mathbf x, \mathbf e, \mathbf d) / \tilde Q(\mathbf x, \mathbf e, \mathbf d) \cdot (1 + e_k) + d_k$, }
where $k$ is the total number of arithmetic operations in $f(\mathbf x)$ and $e_k, d_k$ are the error variables for the outmost division. Each $e_i$ should appear exactly once. We consider the three possibilities where $e_i$ appears, and describe how the corresponding $h_i(\mathbf x)$ should be reducible to polynomials: 
(a) $e_i$ appears in $\tilde N(\mathbf x, \mathbf e, \mathbf d)$: differentiation yields a factor $1 / Q(\mathbf x)$; multiplying by $(Q(\mathbf x))^2$ produces a division-free term; 
(b) $e_i$ appears in $\tilde Q(\mathbf x, \mathbf e, \mathbf d)$: differentiation gives $1/(Q(\mathbf x))^2$, which is cancelled by multiplying $(Q(\mathbf x))^2$; 
(c) $e_i = e_k$: differentiation gives $N(\mathbf x) / Q(\mathbf x)$, and again multiplying $(Q(\mathbf x))^2$ yields $N(\mathbf x) Q(\mathbf x)$. 
In all cases, the result involves only sums and products of division-free expressions, so each $h_i(\mathbf x)$ reduces to a polynomial. 
\qed

\full{
\begin{proof}
Given that $f(\mathbf x)$ has the form $N(\mathbf x) / Q(\mathbf x)$, its floating-point model should observe the following structure: 
$\tilde f(\mathbf x, \mathbf e, \mathbf d) = {\tilde N(\mathbf x, \mathbf e, \mathbf d)}/{\tilde Q(\mathbf x, \mathbf e, \mathbf d)} \cdot (1 + e_k) + d_k$, 
where $k$ is the total number of arithmetic operations in $f(\mathbf x)$ and {$e_k, d_k$ are the error variables for the outmost division}.
We observe the fact that each relative error term $e_i$ appears exactly once in the symbolic expression of $\tilde f(\mathbf x, \mathbf e, \mathbf d)$. More specifically, each $e_i$ contributes to one of the following components: (i) the floating-point model of the numerator $\tilde N(\mathbf x, \mathbf e, \mathbf d)$; (ii) the floating-point model of the denominator $\tilde Q(\mathbf x, \mathbf e, \mathbf d)$; or (iii) the error term $e_k$ related to division $N(\mathbf x) / Q(\mathbf x)$.
To demonstrate that for any index $i$, $\frac{\partial \tilde f}{\partial e_i}(\mathbf x, \mathbf 0, \mathbf 0) \cdot (Q(\mathbf x))^2$ can be simplified into polynomial form, we analyze the three cases. 
Throughout, we make frequent use of the identity $\tilde f(\mathbf x, \mathbf 0, \mathbf 0) = f(\mathbf x)$.
\begin{enumerate}[(i)]
    \item When $e_i$ appears in the numerator $\tilde N(\mathbf x, \mathbf e, \mathbf d)$, we have $\frac{\partial \tilde f}{\partial e_i}(\mathbf x, \mathbf 0, \mathbf 0) \cdot (Q(\mathbf x))^2 = \frac{1}{Q(\mathbf x)} \cdot \frac{\partial \tilde N}{\partial e_i}(\mathbf x, \mathbf 0, \mathbf 0)  \cdot (Q(\mathbf x))^2 = \frac{\partial \tilde N}{\partial e_i}(\mathbf x, \mathbf 0, \mathbf 0)  \cdot Q(\mathbf x)$ which is division-free since $N(\mathbf x)$ and $Q(\mathbf x)$ are both division-free.
    \item When $e_i$ appears in the denominator $\tilde Q(\mathbf x, \mathbf e, \mathbf d)$, we have 
    $\frac{\partial \tilde f}{\partial e_i}(\mathbf x, \mathbf 0, \mathbf 0) \cdot (Q(\mathbf x))^2 =
    \frac{- N(\mathbf x)}{(Q(\mathbf x))^2} \cdot \frac{\partial \tilde Q}{\partial e_i}(\mathbf x, \mathbf 0, \mathbf 0) \cdot (Q(\mathbf x))^2 = - N(\mathbf x) \cdot \frac{\partial \tilde Q}{\partial e_i}(\mathbf x, \mathbf 0, \mathbf 0)$, which is also division-free.
    \item When $e_i$ is just $e_k$, then $\frac{\partial \tilde f}{\partial e_i}(\mathbf x, \mathbf 0, \mathbf 0) \cdot (Q(\mathbf x))^2  = \frac{N(\mathbf x)}{Q(\mathbf x)} \cdot (Q(\mathbf x))^2 = N(\mathbf x) \cdot Q(\mathbf x)$, clearly division-free.
\end{enumerate}
Therefore, in each of the three possible cases, $h_i(\mathbf x) = \frac{\partial \tilde f}{\partial e_i}(\mathbf x, \mathbf 0, \mathbf 0) \cdot (Q(\mathbf x))^2$ is division-free, and is thus reducible to a polynomial expression.
\end{proof}
}

\begin{theorem}\label{thm:fractionalflagu}
Let $u$ be a fixed threshold. Suppose the input function $f(\mathbf x) = N(\mathbf x) / Q(\mathbf x)$ where both the numerator $N(\mathbf x)$ and the denominator $Q(\mathbf x)$ are division-free, and $Q(\mathbf x) \neq 0$ for all possible values of $\mathbf x$ in the specified distributions.
Let $p(\mathbf x)$ be a polynomial ~\footnote{It will be described after the proof of Theorem~\ref{thm:div-flag} how to derive such a polynomial function $p(\mathbf x)$.} satisfying 
$\sum_{i=1}^k\left| \frac{\partial \tilde f}{\partial e_i}(\mathbf x, \mathbf 0, \mathbf 0)\cdot (Q(\mathbf x))^2 \cdot e_i\right| \le p(\mathbf x) \cdot \epsilon$
for all possible values of $\mathbf x$ and $\mathbf e$. 
Let $K_u(\mathbf x) \triangleq p(\mathbf x) \cdot \epsilon - (Q(\mathbf x))^2 \cdot u$, and $\mu \triangleq \mathbb E[K_u(\mathbf x)]$.
Suppose that $\mu < 0$. 
For an even analysis order $n$, if we define
$\flag_u \triangleq {\mathbb E[(K_u(\mathbf x) - \mu)^n]}/{\mu^n}$,
then the first-order violation probability admits the following upper bound: $\mathbb P[F(\mathbf x, \mathbf e) \ge u] \le \flag_u$.
\label{thm:div-flag}
\end{theorem}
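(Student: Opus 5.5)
The plan is to reduce the statement to the same event-inclusion argument as in the proof of Theorem~\ref{thm:cmb-flag}. The only new ingredient is clearing the denominator, which is where $Q(\mathbf x)\neq 0$ is used.

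\textbf{Step 1: clear the denominator pointwise.} Since $Q(\mathbf x)\neq 0$ on the support, we have $(Q(\mathbf x))^2>0$. Multiplying each summand of $F$ by this positive quantity gives
\[
(Q(\mathbf x))^2\cdot F(\mathbf x,\mathbf e)=\sum_{i=1}^k\left|\frac{\partial \tilde f}{\partial e_i}(\mathbf x,\mathbf 0,\mathbf 0)\cdot (Q(\mathbf x))^2\cdot e_i\right|\le p(\mathbf x)\cdot\epsilon ,
\]
where the inequality is the hypothesis on $p$. Hence for every fixed $u$ the event $F(\mathbf x,\mathbf e)\ge u$ implies $(Q(\mathbf x))^2 u \le (Q(\mathbf x))^2F(\mathbf x,\mathbf e)\le p(\mathbf x)\epsilon$, that is, $K_u(\mathbf x)\ge 0$. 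Therefore
\[
\mathbb P[F(\mathbf x,\mathbf e)\ge u]\le \mathbb P[K_u(\mathbf x)\ge 0].
\]
The point of the strict positivity of $Q^2$ is that multiplying preserves the direction of the inequality without any case split on the sign of $Q$.

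\textbf{Step 2: pass to the centered deviation.} Let $E_1$ be the event $K_u(\mathbf x)\ge 0$ and $E_2$ the event $|K_u(\mathbf x)-\mu|\ge|\mu|$. Because $\mu<0$, on $E_1$ we have $K_u-\mu\ge-\mu=|\mu|>0$, so $E_1\subseteq E_2$. Since $n$ is even, $E_2$ coincides with the event $(K_u(\mathbf x)-\mu)^n\ge\mu^n$.

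\textbf{Step 3: apply Markov.} The variable $(K_u-\mu)^n$ is nonnegative and $\mu^n>0$, so Theorem~\ref{thm:markov} gives
\[
\mathbb P[E_2]\le \mathbb E[(K_u(\mathbf x)-\mu)^n]/\mu^n=\flag_u .
\]
Chaining Steps 1--3 yields $\mathbb P[F(\mathbf x,\mathbf e)\ge u]\le\flag_u$.

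\textbf{Main obstacle.} Step 1 is the only step that differs from Theorem~\ref{thm:cmb-flag}, and the care needed there is twofold. First, the hypothesis on $p$ must be stated with $(Q(\mathbf x))^2$ inside the absolute values, so that it is exactly $Q^2F$. Second, one should confirm that $K_u$ has finite moments, so that $\mu$ and $\flag_u$ are well defined. This holds because $p$ and $Q$ are polynomials in independent inputs, and the paper implicitly assumes their moments exist.
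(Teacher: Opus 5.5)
Your proposal is correct and matches the paper's proof: the paper divides by $(Q(\mathbf x))^2$ to get $F(\mathbf x,\mathbf e)\le p(\mathbf x)\epsilon/(Q(\mathbf x))^2$ and rewrites the event as $K_u(\mathbf x)\ge 0$, whereas you multiply through by $(Q(\mathbf x))^2$, which is the same reduction, and the remaining steps (the event inclusion $E_1\subseteq E_2$ using $\mu<0$, then Markov on $(K_u(\mathbf x)-\mu)^n$ with $n$ even) are identical. (A minor aside: your multiplication step only needs $(Q(\mathbf x))^2\ge 0$; the real role of $Q(\mathbf x)\neq 0$ is that $F$, which is built from derivatives of $N/Q$, is well defined in the first place.)
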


\begin{proof}
We first exploit the property of $p(\mathbf x)$ to relax $F(\mathbf x, \mathbf e)$ and eliminate explicit occurrences of error variables $\mathbf e$:
$F(\mathbf x, \mathbf e) = \sum_{i=1}^k\left| \frac{\partial \tilde f}{\partial e_i}(\mathbf x, \mathbf 0, \mathbf 0)\cdot e_i\right|  = \frac{1}{(Q(\mathbf x))^2}\sum_{i=1}^k\left| \frac{\partial \tilde f}{\partial e_i}(\mathbf x, \mathbf 0, \mathbf 0)\cdot (Q(\mathbf x))^2 \cdot e_i\right| \le {p(\mathbf x) \cdot \epsilon}/{(Q(\mathbf x))^2}$.
Accordingly, the violation probability can be relaxed by
$\mathbb P[F(\mathbf x, \mathbf e) \ge u] \le \mathbb P \left[ {p(\mathbf x) \cdot \epsilon}/{(Q(\mathbf x))^2} \ge u\right] = \mathbb P[p(\mathbf x) \cdot \epsilon - (Q(\mathbf x))^2 \cdot u \ge 0] = \mathbb P[K_u(\mathbf x) \ge 0].$
The rest of the proof is analogous to that in Theorem~\ref{thm:cmb-flag}. 
The event $K_u(\mathbf x) \ge 0$ would imply the event $|K_u(\mathbf x) - \mu| \ge |\mu|$, and therefore,
$\mathbb P[K_u(\mathbf x) \ge 0] \le \mathbb P[|K_u(\mathbf x) - \mu| \ge \mu| = \mathbb P[(K_u(\mathbf x ) - \mu)^n \ge \mu^n] \le {\mathbb E[(K_u(\mathbf x) - \mu)^n]}/{\mu^n} = \flag_u.$
Combining the inequalities, we conclude $\mathbb P[F(\mathbf x, \mathbf e)\ge u] \le \flag_u$.
\end{proof}

We now describe how PN decomposition can be employed to produce the $p(\mathbf x)$ as required in Theorem~\ref{thm:div-flag}. 
As is proved in Lemma~\ref{lem:fractionalpn}, $h_i(\mathbf x) = \frac{\partial \tilde f}{\partial e_i}(\mathbf x, \mathbf 0, \mathbf 0) \cdot (Q(\mathbf x))^2$ can always be simplified to a polynomial expression.
We may thus provide $h_1(\mathbf x), \dots, h_k(\mathbf x)$ as inputs to the PN decomposition algorithm (Algorithm~\ref{alg:posneg}), and by Theorem~\ref{thm:posnegcorrect}, the algorithm returns a function $p(\mathbf x)$ such that
$\sum_{i=1}^k\left| \frac{\partial \tilde f}{\partial e_i}(\mathbf x, \mathbf 0, \mathbf 0)\cdot (Q(\mathbf x))^2 \cdot e_i\right| \le p(\mathbf x) \cdot \epsilon$,
which precisely satisfies the requirement 
%stipulated in 
of
Theorem~\ref{thm:div-flag}.

\begin{algorithm}[ht]
\caption{The Extended Central-Moment-Based Algorithm for Fractional Expressions}
\label{alg:div}
\begin{algorithmic}[1]
\Require A fractional expression $f(\mathbf x) = N(\mathbf x) / Q(\mathbf x)$ (where $N(\mathbf x)$ and $(Q(\mathbf x)$ are division-free), distribution of the input variables, confidence level $c$, unit round-off $\epsilon$, smallest representable number $\delta$, and even analysis order $n$
\Ensure A probabilistic upper bound $U_c^*$ for the round-off error $\err(f, \mathbf x)$
\State Compute the floating-point version of the expression $\tilde f(\mathbf x, \mathbf e, \mathbf d)$.
\State Compute the expressions for $h_i(\mathbf x) = \frac{\partial \tilde f}{\partial e_i}(\mathbf x, \mathbf 0, \mathbf 0) \cdot (Q(\mathbf x))^2$.
\State Input the list of $h_i(\mathbf x)$ into Algorithm~\ref{alg:posneg} and get the expression for $p(\mathbf x)$ 
\State $\ell\gets \mathbb E[p(\mathbf x)] \cdot \epsilon / \mathbb E[(Q(\mathbf x))^2] $
\State $r\gets\ell \times 10^5$ \Comment{$10^5$ is the large multiple mentioned in the description of the algorithm.}
\State Use binary search to find the close-to-minimum $U^{(1)}_c\in (\ell, r)$ such that $\flag_{U^{(1)}_c} \le 1 - c$
\State Get a deterministic upper bound $U^{(2)}$ for $|R_2(\mathbf x, \mathbf e, \mathbf d)|$ using a sound global optimization tool
\State \textbf{Return} $U_c^* \gets U_c^{(1)} + U^{(2)}$
\end{algorithmic}
\end{algorithm}

We adopt a strategy analogous to the CMB algorithm (Section~\ref{sec:polycmb}, {Algorithm~\ref{alg:cmb}}), employing binary search~\footnote{Since the monotonicity of $\flag_u$ in the case of fractional expressions is not formally proved, binary search is a tentative plan here. It currently works across all evaluated benchmarks. As a fallback, if binary search fails to find a solution, we can instead enumerate candidate values of $u$ at each order of magnitude to identify a suitable threshold.} to find the close-to-minimum $U_c^{(1)}$ such that $\flag_{U_c^{(1)}} \le 1 - c$ {w.r.t Theorem~\ref{thm:fractionalflagu}}.
To ensure that $\mu = \mathbb E[p(\mathbf x) \cdot \epsilon - (Q(\mathbf x))^2 \cdot u]$ remains negative throughout the search, we initialize the left endpoint of the binary search to $\mathbb E[p(\mathbf x)] \cdot \epsilon / \mathbb E[(Q(\mathbf x))^2]$, and the right endpoint to a large multiple of the left endpoint. The detailed algorithm is presented in Algorithm~\ref{alg:div}.
% {\bf More explanations here. }

\begin{theorem}
Let $U_c^*$ be the output of Algorithm~\ref{alg:div}, then $\mathbb P[\err(f, \mathbf x) \ge U_c^*] \le 1-c$.
\label{thm:divcorrect}
\end{theorem}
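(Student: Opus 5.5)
The proof will follow the same two-part structure as Theorem~\ref{thm:cmbcorrect}. First I show that the first-order threshold $U_c^{(1)}$ returned by the binary search satisfies $\mathbb P[F(\mathbf x, \mathbf e) \ge U_c^{(1)}] \le 1-c$. Then I combine this with the deterministic second-order bound $U^{(2)}$.

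\textbf{Step 1: the polynomial $p(\mathbf x)$.} By Lemma~\ref{lem:fractionalpn}, each $h_i(\mathbf x) = \frac{\partial \tilde f}{\partial e_i}(\mathbf x, \mathbf 0, \mathbf 0)\cdot (Q(\mathbf x))^2$ computed in line 2 is a polynomial, so it is a legitimate input to Algorithm~\ref{alg:posneg}. Theorem~\ref{thm:posnegcorrect} then guarantees that the $p(\mathbf x)$ produced in line 3 satisfies
$\sum_{i=1}^k |h_i(\mathbf x) e_i| \le p(\mathbf x)\cdot\epsilon$ for all $\mathbf x, \mathbf e$. This is exactly the hypothesis on $p$ in Theorem~\ref{thm:div-flag}.

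\textbf{Step 2: the sign condition $\mu<0$.} This is the step needing the most care. For any candidate $u \in (\ell, r)$, linearity of expectation gives
$\mu = \mathbb E[K_u(\mathbf x)] = \epsilon\,\mathbb E[p(\mathbf x)] - u\,\mathbb E[(Q(\mathbf x))^2]$.
Since $Q(\mathbf x)\neq 0$ on the support, we have $\mathbb E[(Q(\mathbf x))^2] > 0$, assuming the relevant moments are finite. Because $u > \ell = \epsilon\,\mathbb E[p(\mathbf x)]/\mathbb E[(Q(\mathbf x))^2]$, it follows that $\mu < 0$. The choice of left endpoint in line 4 is therefore exactly what makes Theorem~\ref{thm:div-flag} applicable at every point the search visits.

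\textbf{Step 3: soundness of the search output.} The binary search may not be justified by monotonicity in the fractional case, but correctness does not depend on it. The only property I use is that the returned value satisfies $U_c^{(1)} \in (\ell, r)$ and $\flag_{U_c^{(1)}} \le 1-c$, which the algorithm checks explicitly. Monotonicity matters only for near-optimality. Theorem~\ref{thm:div-flag} applied at $u = U_c^{(1)}$ then gives $\mathbb P[F(\mathbf x,\mathbf e)\ge U_c^{(1)}]\le \flag_{U_c^{(1)}} \le 1-c$.

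If the search finds no such value, the fallback enumeration described in the footnote must likewise return only a value certified by $\flag_u\le 1-c$. I will state this as the assumption under which the theorem holds, namely that line 6 succeeds.

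\textbf{Step 4: combining with the second-order term.} The sound optimizer gives $|R_2(\mathbf x,\mathbf e,\mathbf d)|\le U^{(2)}$ deterministically. From Section~\ref{sec:taylor} we have $\err(f,\mathbf x)\le F(\mathbf x,\mathbf e)+|R_2(\mathbf x,\mathbf e,\mathbf d)|$. Hence the event $\err(f,\mathbf x)\ge U_c^{(1)}+U^{(2)}$ implies $F(\mathbf x,\mathbf e)\ge U_c^{(1)}$, which yields $\mathbb P[\err(f,\mathbf x)\ge U_c^*]\le 1-c$, just as at the end of the proof of Theorem~\ref{thm:nmcorrect}.
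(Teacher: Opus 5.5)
Your proposal is correct and follows the paper's proof essentially step for step. Lemma~\ref{lem:fractionalpn} together with Theorem~\ref{thm:posnegcorrect} supplies the required $p(\mathbf x)$; every search value $u>\ell$ forces $\mu<0$, so Theorem~\ref{thm:div-flag} applies; soundness is claimed only when line 6 returns a certified value; and the second-order term is absorbed as in Theorem~\ref{thm:nmcorrect}. Your extra remarks (that $\mathbb E[(Q(\mathbf x))^2]>0$ is needed, and that monotonicity affects only near-optimality, not soundness) are accurate refinements the paper leaves implicit. One small correction: a value from the fallback enumeration must also exceed $\ell$, not merely satisfy $\flag_u\le 1-c$, because the certificate says nothing when $\mu\ge 0$.
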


\begin{proof}
We follow a similar line of reasoning as in the proof of Theorem~\ref{thm:cmbcorrect}.
As proved in Lemma~\ref{lem:fractionalpn}, each expression $h_i(\mathbf x)$ computed at line 2 of the algorithm can be simplified to a polynomial expression, and hence qualifies as input to Algorithm~\ref{alg:posneg}. 
Applying Algorithm~\ref{alg:posneg} to the list $h_i(\mathbf x)$ and invoking Theorem~\ref{thm:posnegcorrect}, we obtain an output polynomial $p(\mathbf x)$ such that $\sum_{i=1}^k \left| \frac{\partial \tilde f}{\partial e_i}(\mathbf x, \mathbf 0, \mathbf 0) \cdot (Q(\mathbf x))^2 \cdot e_i \right| \le p(\mathbf x) \cdot \epsilon$. 
Besides, all possible values of $U_c^{(1)}$ in the search space are greater than the left endpoint of our binary search $\ell = \mathbb E[p(\mathbf x)] \cdot \epsilon / \mathbb E[(Q(\mathbf x))^2]$, and hence $\mu = \mathbb E[p(\mathbf x) \cdot \epsilon - (Q(\mathbf x))^2 \cdot U_c^{(1)}] < 0$.
Thus the preconditions of Theorem~\ref{thm:fractionalflagu} are met, and by Theorem~\ref{thm:fractionalflagu}, $\flag_u$ serves as a sound over-approximation of the first-order violation probability.

Therefore, when the binary search at line 6 is able to output a value of $U_c^{(1)}$, it satisfies $\mathbb P[F(\mathbf x, \mathbf e) \ge U_{c}^{(1)}] \le \flag_{U_c^{(1)}} \le 1-c$.
Since the second order error term is handled identically to Algorithm~\ref{alg:nm}, it follows that the final threshold $U_c^*$, including both first- and second-order error, satisfies $\mathbb P[\err (f, \mathbf x) \ge U_c^*] \le 1-c$.
\end{proof}

\begin{remark}
Our approach can be easily extended for general expressions with division, by reducing $\tilde f(\mathbf x, \mathbf e, \mathbf d)$ to a common denominator $\tilde Q(\mathbf x, \mathbf e, \mathbf d)$, and then taking partial derivatives $\frac{\partial \tilde f}{\partial e_i}(\mathbf x, \mathbf 0, \mathbf 0)$. 
Though some $e_i$ may occur in both the numerator and the denominator, in all partial derivatives $\frac{\partial\tilde f}{\partial e_i}(\mathbf x, \mathbf 0, \mathbf 0)$ the denominator divides $(\tilde Q(\mathbf x,\mathbf 0, \mathbf 0))^2$, and thus $h_i(\mathbf x) = \frac{\partial \tilde f}{\partial e_i}(\mathbf x, \mathbf 0, \mathbf 0) \cdot (\tilde Q(\mathbf x, \mathbf 0, \mathbf 0))^2$ remains a polynomial, thereby ensuring the proposed algorithm is applicable in this setting.

\end{remark}

\subsection{Refinement with Range Partition}
\label{sec:partition}

A central component of the CMB algorithm, both in its original formulation and its extension to fractional expressions, is the computation of an upper bound on first-order violation probability, $\flag_u$, for a given threshold $u$. 
In order to tighten this bound, we introduce a \emph{range partition refinement} 
for the computation of $\flag_u$.
The key idea is to partition the domain of each variable in $\mathbf x$ into smaller sub-regions, compute a local upper bound on $\flag_u$ within every resulting sub-region of the input space, and then aggregate these local bounds to obtain the global estimate of $\flag_u$.

Several new notations and definitions related to range partition are introduced for readability.
Assume there are $m$ input variables, $x_1, \dots, x_m$, and the range of variable $x_i$ is $I_i\subset \mathbb R$.
For each $i\in \{1, \dots, m\}$, we partition the range $I_i$ into $b$ disjoint sub-ranges: $I_{i,1}, \dots, I_{i,b}$, where $I_i = \bigcup_{j=1}^bI_{i,j}$ and $I_{i,j}\cap I_{i,k} = \emptyset$ for $j\neq k$.
The weight of each sub-range $I_{i,j}$ is defined as $w_{i, j} \triangleq \mathbb P[x_i \in I_{i, j}]$.
Given a multi-index $(i_1, \dots, i_m)\in \{1,\dots, b\}^m$, a sub-region is defined as 
$B_{i_1, \dots, i_m} \triangleq I_{1, i_1}\times \cdots \times I_{m, i_m}$.
Assuming the independence of variables, the weight of the sub-region is given by the product $W_{i_1, \dots, i_m} = \mathbb P[\mathbf x \in B_{i_1, \dots, i_m}] = \prod_{j=1}^mw_{j, i_j}$.

\paragraph{Illustrative Example.} 
We illustrate range partition using Example~\ref{ex:poly}.
Suppose the number of sub-ranges is $b = 2$, and the ranges are partitioned uniformly. 
Then $I_{i,1} = [-1, 0], I_{i,2} = [0,1]$ for $i\in\{1,2,3\}$, each associated with an equal weight of $0.5$.
The eight sub-regions are different combinations of the sub-ranges. 
For example, sub-region $B_{1, 1, 1}$ is $[-1, 0]^3$, and its weight is $W_{1, 1, 1} = 0.5^3 = 0.125$.
For a given value of $u$, we compute the local probability bounds $\flag_{u, B_{\mathbf j}}$ for all $\mathbf{j} \in \{1,2\}^3$ and sub-region $B_{\mathbf j}$, where the expectations are taken w.r.t the normalized distribution on the sub-region, and then get the global bound by aggregating across all sub-regions $\flag_u = \sum_{\mathbf i\in \{1, 2\}^3}\flag_{u, B_\mathbf {i}}$.

We now present the general refined computation of $\flag_u$ in Algorithm~\ref{alg:partition}. 
The algorithm begins by partitioning the ranges into sub-ranges, and computing the corresponding weights (lines 1--4).
The main loop (lines 6--16) iterates over each sub-region $B$: it first computes the weight of the sub-region, $W$ (line 7), then derives the local upper bound $\flag_{u, B}$ (lines 8--14), and finally accumulates the weighted local upper bound onto the global bound (line 15).
We provide a proof of correctness in Theorem~\ref{thm:partitioncorrect}, and then explain why the range partition refinement could result in a tighter bound.

\begin{algorithm}[ht]
\caption{Computing $\flag_u$ with range partition}
\label{alg:partition}
\begin{algorithmic}[1]
\Require Input variables $x_1, \dots, x_m$ and their ranges $I_1, \dots, I_m$, threshold $u$, number of partitions $b$, expression of $K_u(\mathbf x)$ {(either in polynomial or fractional case)}, analysis order $n$
\Ensure An upper bound for first-order violation probability $\flag_u$
\For {$i \gets 1 \text{ to }m$}
    \State Partition range $I_i$ into sub-ranges $I_{i, 1},\dots,I_{i, b}$
    \State Compute weights: $w_{i,j} \gets \mathbb P[x_i\in I_{i,j}]$ for $j = 1$ to $b$
\EndFor
\State $\flag_u \gets 0$
\For {each multi-index $\mathbf i\in \{1, \ldots, b\}^m$}
    \State Compute sub-region weight $W_{\mathbf i} \gets \prod_{j=1}^mw_{j, i_j}$ 
    \State $\mu \gets \mathbb E[K_u(\mathbf x)\mid \mathbf x\in B_{\mathbf i}]$ 
        % \Comment Expectation based on normalized distribution of $\mathbf x$.
    \If {$\mu < 0$}
        \State $\flag_{u, B_{\mathbf i}} \gets \mathbb E[(K_u(\mathbf x) - \mu)^n \mid \mathbf x \in B_{\mathbf i}]/\mu^n$
        % \Comment Subscript of $B$ omitted for brevity.
    \Else 
        \State $\flag_{u, B_{\mathbf i}} \gets 1$
    \EndIf 
    \State Set $\flag_{u, B_{\mathbf i}}$ to $1$ if $\flag_{u, B_{\mathbf i}} > 1$
    \State $\flag_u \gets \flag_u + W_{\mathbf i} \cdot \flag_{u, B_{\mathbf i}}$
\EndFor
\State \textbf{Return} $\flag_u$
\end{algorithmic}
\end{algorithm} 

% \begin{algorithm}[ht]
% \caption{Computing $\flag_u$ with range partition}
% \label{alg:partition}
% \begin{algorithmic}[1]
% \Require Input variables $x_1, \dots, x_m$ and their ranges $I_1, \dots, I_m$, threshold $u$, number of partitions $b$, expression of $K_u(\mathbf x)$ {\color{blue} (either in polynomial or division case)}, analysis order $n$
% \Ensure An upper bound for first-order violation probability $\flag_u$
% \For {$i \gets 1 \text{ to }m$}
%     \State Partition range $I_i$ into sub-ranges $I_{i, 1},\dots,I_{i, b}$
%     \State Compute weights: $w_{i,j} \gets \mathbb P[x_i\in I_{i,j}]$ for $j = 1$ to $b$
% \EndFor
% \State $\flag_u \gets 0$
% \For {each sub-region $B$ (subscripts are omitted for brevity)}
%     \State Compute sub-region weight $W \gets \prod_{j=1}^mw_{j, i_j}$ {\color{blue} ($W$ or $W_i$ or $W_b$ ?)}
%     % \State Normalize distribution of $\mathbf x$ to $B$
%     \State $\mu \gets \mathbb E[K_u(\mathbf x)\mid \mathbf x\in B]$ 
%         \Comment Expectation based on normalized distribution of $\mathbf x$
%     \If {$\mu \le 0$}
%         \State $\flag_{u, B} \gets \mathbb E[(K_u(\mathbf x) - \mu)^n \mid \mathbf x \in B]/\mu^n$
%     \Else 
%         \State $\flag_{u, B} \gets 1$
%     \EndIf 
%     \State Set $\flag_{u, B}$ to $1$ if $\flag_{u, B} > 1$
%     \State $\flag_u \gets \flag_u + W \cdot \flag_{u, B}$
% \EndFor
% \State \textbf{Return} $\flag_u$
% \end{algorithmic}
% \end{algorithm} 

\begin{theorem}
Let $\flag_u$ be the output of Algorithm~\ref{alg:partition}, then $\mathbb P[K_u(\mathbf x) \ge 0] \le \flag _u$.
\label{thm:partitioncorrect}
\end{theorem}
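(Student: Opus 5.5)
We decompose the event $\{K_u(\mathbf x)\ge 0\}$ along the sub-regions produced by Algorithm~\ref{alg:partition}, bound each conditional probability separately, and then reassemble the pieces with the law of total probability. Because the sub-ranges $I_{i,1},\dots,I_{i,b}$ are pairwise disjoint and cover $I_i$, the sub-regions $B_{\mathbf i}$, $\mathbf i\in\{1,\dots,b\}^m$, partition the support of $\mathbf x$. By independence of the input variables, $\mathbb P[\mathbf x\in B_{\mathbf i}]=\prod_j w_{j,i_j}=W_{\mathbf i}$. Hence
\[
\mathbb P[K_u(\mathbf x)\ge 0]=\sum_{\mathbf i:\,W_{\mathbf i}>0} W_{\mathbf i}\cdot \mathbb P[K_u(\mathbf x)\ge 0\mid \mathbf x\in B_{\mathbf i}],
\]
where sub-regions of weight zero contribute nothing to either side. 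Since the algorithm returns $\flag_u=\sum_{\mathbf i}W_{\mathbf i}\,\flag_{u,B_{\mathbf i}}$, it suffices to prove the local claim $\mathbb P[K_u(\mathbf x)\ge 0\mid \mathbf x\in B_{\mathbf i}]\le \flag_{u,B_{\mathbf i}}$ for every $\mathbf i$ with $W_{\mathbf i}>0$.

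For the local claim, we follow the branches of the algorithm.
\begin{itemize}
\item If $\mu=\mathbb E[K_u(\mathbf x)\mid \mathbf x\in B_{\mathbf i}]\ge 0$, the algorithm sets $\flag_{u,B_{\mathbf i}}=1$, and the claim holds trivially because any conditional probability is at most $1$.
\item If $\mu<0$, we repeat the argument of Theorem~\ref{thm:cmb-flag} under the conditional measure $\mathbb P[\,\cdot\mid \mathbf x\in B_{\mathbf i}]$, which is a genuine probability measure since $W_{\mathbf i}>0$. Under this measure, $\mu$ is exactly the expectation of $K_u(\mathbf x)$. The event $K_u(\mathbf x)\ge 0$ implies $|K_u(\mathbf x)-\mu|\ge|\mu|$, which, as $n$ is even, is the event $(K_u(\mathbf x)-\mu)^n\ge \mu^n$. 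Markov's inequality (Theorem~\ref{thm:markov}), applied to the non-negative variable $(K_u(\mathbf x)-\mu)^n$ under the conditional measure, then gives
\[
\mathbb P[K_u(\mathbf x)\ge 0\mid \mathbf x\in B_{\mathbf i}]\le \mathbb E[(K_u(\mathbf x)-\mu)^n\mid \mathbf x\in B_{\mathbf i}]/\mu^n .
\]
\item The final capping step replaces $\flag_{u,B_{\mathbf i}}$ by $\min\{\flag_{u,B_{\mathbf i}},1\}$. This keeps the bound valid, since both quantities in the minimum bound the conditional probability.
\end{itemize}

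Summing the local bounds weighted by $W_{\mathbf i}$ yields $\mathbb P[K_u(\mathbf x)\ge 0]\le\flag_u$. The argument uses nothing about the internal form of $K_u$, so it applies verbatim to both the polynomial and the fractional definitions of $K_u$.

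The main obstacle is justifying that the earlier single-region argument transfers to conditional measures. Concretely, we must check that conditioning on a product set is well defined, that Markov's inequality holds for conditional expectations, and that the weights $W_{\mathbf i}$ really are the probabilities of the cells. The last point uses both independence and the disjointness of the sub-ranges. If some sub-ranges share endpoints, as in the example $[-1,0],[0,1]$, we note that such overlaps have probability zero for continuous distributions, so the decomposition still holds up to a null set.
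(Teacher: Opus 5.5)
Your proposal is correct and follows essentially the same route as the paper's proof: you bound each sub-region by rerunning the argument of Theorem~\ref{thm:cmb-flag} under the conditional measure, use the trivial bound $1$ in the $\mu\ge 0$ and capping cases, and then aggregate via the law of total probability with $\mathbb P[\mathbf x\in B_{\mathbf i}]=W_{\mathbf i}$. Your extra care about zero-weight cells, about shared sub-range endpoints being null sets, and about the $\mu=0$ case only makes the same argument more precise; the paper's proof, written with $\mu>0$, glosses over $\mu=0$, although the algorithm's else-branch covers it.
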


\begin{proof}
We first establish the correctness of each iteration of the for-loop (line 6 -- 16) by showing that the $\flag_{u, B_{\mathbf i}}$ generated in line 14 is a sound local upper bound 
$\mathbb P[K_u \ge 0 \mid \mathbf x \in B_{\mathbf i}] \le \flag_{u, B_{\mathbf i}}.$
If $\mu > 0$  or the value computed in line 10 exceeds $1$, the algorithm sets $\flag_{u,B_{\mathbf i}}$ to $1$, and the desired inequality holds trivially. 
Otherwise, the validity of the bound follows by an argument analogous to that in Theorem~\ref{thm:cmb-flag}.
Specifically, under the normalized distribution of $\mathbf x$ conditioned on $\mathbf x\in B_{\mathbf i}$,
$$\begin{aligned}
\mathbb P[K_u(\mathbf x) \ge 0\mid \mathbf x \in B_{\mathbf i}] &\le \mathbb P[|K_u(\mathbf x) - \mu| \ge |\mu|\mid \mathbf x \in B_{\mathbf i}] =\mathbb P[(K_u(\mathbf x) - \mu)^n \ge \mu^n\mid \mathbf x \in B_{\mathbf i}] \\
&\le {\mathbb E[(K_u(\mathbf x) - \mu)^n\mid \mathbf x \in B_{\mathbf i}]}/{\mu^n} = \flag_{u, B_{\mathbf i}}.
\end{aligned}$$
This confirms that the local bound $\flag_{u, B_{\mathbf i}}$ indeed bounds the probability. 

Note that the returned $\flag_u$ is equal to the weighted sum of all local bounds, i.e.,
$\flag_u = \sum_{\mathbf i \in (1, \dots, b)^m}\flag_{u, B_{\mathbf i}}W_{\mathbf i}.$
Then we extend the local guarantee to a global one.
Since $\{B_{\mathbf i}\}_{\mathbf i\in (1, \dots, b)^m}$ is a set of disjoint sub-regions covering the domain of $\mathbf x$,
by the law of total probability,
$$\begin{aligned}
\mathbb P[K_u(\mathbf x) \ge 0] &= \textstyle\sum_{\mathbf i \in (1, \dots, b)^m} \mathbb P[K_u(\mathbf x) \ge 0 \mid \mathbf x \in B_{\mathbf i}] \cdot \mathbb P[\mathbf x \in B_{\mathbf i}] 
\le \textstyle\sum_{\mathbf i \in (1, \dots, b)^m} \mathbb  \flag_{u, B_{\mathbf i}}\cdot W_{\mathbf i} = \flag_u.
\end{aligned}$$
\end{proof}

By the proofs in Theorem~\ref{thm:cmb-flag} and Theorem~\ref{thm:div-flag}, we know that $\mathbb P[F(\mathbf x, \mathbf e) \ge u] \le \mathbb P[K_u(\mathbf x) \ge 0]$ for the CMB algorithm, both in its original form and its extension for fractional expressions. 
Although the specific form of the auxiliary expression $K_u(\mathbf x)$ differs between the two cases, the range partition refinement applies to both settings.
Consequently, in either case, the first-order violation probability satisfies
$\mathbb P[F(\mathbf x, \mathbf e) \ge u] \le \mathbb P[K_u(\mathbf x) > 0] \le \flag_u$
for the $\flag_u$ given by Algorithm~\ref{alg:partition}.

\begin{remark}
Range partition enhances the CMB method because the mean value of $K_u(\mathbf x)$ differs across sub-regions, enabling the computation of CMB bounds with sub-region-specific means rather than one global mean. 
Since CMB bounds are fractional, these differences influence the final aggreagated result.
By contract, partition offers no benefit for the NM method, since local bounds are combined by direct summation, which ultimately depends on the global mean over the entire range.
\qed
\end{remark}

\begin{remark}
Our current approach does not support transcendental functions. 
While our implementation can handle polynomial approximations of transcendental functions (such as benchmarks \texttt{ksin} and \texttt{kcos}), extending our approach to handle transcendental functions directly would require substantial effort. 
In particular, this would require precise knowledge of the polynomial or piecewise-polynomial approximations used in specific languages, libraries, or hardware, as well as accounting for the additional error introduced by replacing the transcendental functions with its (piecewise) polynomial approximation. Handling this approximation error falls outside the scope of this paper.
\qed
\end{remark}

\section{Implementation and Evaluation}
\label{sec:implementation}

\noindent {\bf Implementation.}
We have implemented our algorithms (Algorithm~\ref{alg:nm}, Algorithm \ref{alg:cmb} along with its extension for fractional expressions, Algorithm~\ref{alg:div}, and their refinement using Algorithm~\ref{alg:partition}) in a prototype tool named ProbTaylor. 
The implementation comprises approximately 2000 lines of code in OCaml, relying only on standard libraries -- \texttt{List}, \texttt{ocamllex}, and \texttt{ocamlyacc}.
We obtain second-order errors using the off-the-shelf sound optimizer, GELPIA~\cite{gelpia}, to carry out global optimization.
ProbTaylor offers three operational modes: \texttt{nm}, \texttt{cmb}, and \texttt{div}, corresponding to the Naive Markov algorithm, the Central-Moment-Based algorithm, and its extension for fractional expressions, respectively. 
ProbTaylor accepts as input a text file specifying the distribution of the input variables and the floating-point expression to be analyzed. 
Additionally, the user may also specify the desired precision parameters $\epsilon$ and $\delta$, the target confidence level $c$, the analysis order $n$, and the number of partitions $b$ (the latter applicable only to \texttt{cmb} and \texttt{div} modes).
Three types of input distributions are currently supported: uniform, normal, and double exponential (a.k.a. Laplace distribution) with various variance values.
The standard normal and double exponential distributions are truncated to a certain range specified by the input file.
Four basic arithmetic operations are supported: addition, subtraction, multiplication, and division.
The output of ProbTaylor is a valid threshold that is satisfied with probability of at least the given confidence level $c$.

It is worth mentioning that the expectation of given expressions is computed through exact symbolic derivation instead of numerical approximation.
This not only enhances the computational efficiency, but also produces a more accurate result. 
See Appendix~\ref{sec:symbolicexp} for details.

\smallskip
\noindent{\bf Benchmarks.}
The benchmark set we use to evaluate our prototype includes three parts -- polynomial benchmarks, fractional benchmarks, and scalability benchmarks.
The polynomial benchmarks include all 24 polynomial benchmarks from PAF~\cite{constantinides2021rigorous}, which are originally adapted from FPBench~\cite{damouche2017toward}, 
and 2 benchmarks adapted from \texttt{fdlibm}~\footnote{\texttt{fdlibm}: freely distributable math library for C programming language originally developed by Sun Microsystems, available at \url{http://www.netlib.org/fdlibm/}.} (\texttt{ksin}, \texttt{kcos}).
The number of operations in these polynomial benchmarks ranges from 1 to 31.
The fractional benchmarks include all 3 fractional benchmarks from PAF, and 4 additional ones from FPBench (\texttt{nonlin1}, \texttt{nonlin2}, \texttt{predator}, \texttt{verhulst}).
The number of operations in these fractional benchmarks ranges from 2 to 12. 
The scalability benchmarks are explained in Section~\ref{sec:scalability}.
Additional benchmarks are derived by widening the input distributions of existing benchmarks.
We assume single precision computation and $99\%$ target confidence level for all experiments in this section.

\smallskip
\noindent {\bf Evaluation Criteria.}
In the evaluation, we consider the following four research questions:
\begin{itemize}
    \item \textbf{RQ1:} How does the NM algorithm compare to the SOTA 
    tools, PAF and PrAn, in terms of accuracy and time efficiency on polynomial benchmarks?
    \item \textbf{RQ2:} How well does our approach scale when analyzing expressions with a large number of variables and operations?
    \item \textbf{RQ3:} How does the CMB algorithm, equipped with range partition refinement, perform relative to the NM algorithm and the existing tools on polynomial benchmarks? How does its extension for fractional expressions perform compared to existing tools?
    \item \textbf{RQ4:} To what extent can ProbTaylor improve upon the results obtained from conservative deterministic analysis tools (such as FPTaylor)?
\end{itemize}

\smallskip
\noindent {\bf Experiment Setup.}
The experiments were run on a 2017 iMac Pro with a 2.3 GHz 18-core Intel Xeon W processor and 128 GB of memory, running MacOS 14.4.1. 
Since PAF is only compatible with Ubuntu 18.04, all experiments, including those with PAF and PrAn, and with ProbTaylor, were run on a virtual machine configured with Ubuntu 18.04 hosted on the aforementioned iMac Pro.

\smallskip
\noindent {\bf Underflow and Exceptions.}
Due to numerical limits of OCaml and Python, underflow arises during the analysis in both ProbTaylor (e.g. on \texttt{classids} benchmarks with Laplace distribution with $\sigma = 0.01$), and in existing tools -- PAF (e.g. on \texttt{nonlin2}) and PrAn (e.g. on \texttt{doppler}).
When such underflow occurs in the denominator, it could potentially lead to division by zero, and produce NaN values.
In these cases, the corresponding entries in the results are marked as ``DZ''.
Following the same assumption of PAF, we assume no overflow or other exceptions during computations. 
This assumption is further validated since every benchmark passes the exception checker of FPTaylor.

\begin{table}[ht]
\centering
\caption{Experimental results for uniform distribution. The thresholds are produced assuming computations done in single precision and $99\%$ confidence level. The optimal $n$ is the analysis order with which the NM algorithm produces the tightest threshold. The threshold ratios are the ratios of thresholds given by the NM algorithm to those given by PAF/PrAn. A threshold ratio smaller than $100\%$ (marked in bold) indicates that the NM algorithm produces a tighter threshold, and vice versa. ``TO'' indicates timeout for competing tool.}
\label{tab:rq1uniform}
\begin{tabular}{|c|ccccccc|}
\hline
 & \multicolumn{7}{c|}{Uniform distribution} \\ \hline
 & \multicolumn{3}{c|}{Naive Markov algorithm} & \multicolumn{2}{c|}{Comparison with PAF} & \multicolumn{2}{c|}{Comparison with PrAn} \\ \hline
Benchmark & \multicolumn{1}{c|}{Time (s)} & \multicolumn{1}{c|}{Threshold} & \multicolumn{1}{c|}{Optimal $n$} & \multicolumn{1}{c|}{Speedup} & \multicolumn{1}{c|}{\begin{tabular}[c]{@{}c@{}}Threshold\\ ratio\end{tabular}} & \multicolumn{1}{c|}{Speedup} & \begin{tabular}[c]{@{}c@{}}Threshold\\ ratio\end{tabular} \\ \hline
bsplines0 & \multicolumn{1}{c|}{0.84} & \multicolumn{1}{c|}{4.85E-07} & \multicolumn{1}{c|}{36} & \multicolumn{1}{c|}{1,466x} & \multicolumn{1}{c|}{849.39\%} & \multicolumn{1}{c|}{21x} & 558.11\% \\ \hline
bsplines1 & \multicolumn{1}{c|}{0.74} & \multicolumn{1}{c|}{5.61E-07} & \multicolumn{1}{c|}{36} & \multicolumn{1}{c|}{2,210x} & \multicolumn{1}{c|}{301.61\%} & \multicolumn{1}{c|}{37x} & 296.83\% \\ \hline
bsplines2 & \multicolumn{1}{c|}{1.47} & \multicolumn{1}{c|}{5.60E-07} & \multicolumn{1}{c|}{36} & \multicolumn{1}{c|}{1,869x} & \multicolumn{1}{c|}{288.66\%} & \multicolumn{1}{c|}{24x} & 264.15\% \\ \hline
bsplines3 & \multicolumn{1}{c|}{0.26} & \multicolumn{1}{c|}{3.96E-08} & \multicolumn{1}{c|}{36} & \multicolumn{1}{c|}{3,565x} & \multicolumn{1}{c|}{\textbf{93.84\%}} & \multicolumn{1}{c|}{58x} & \textbf{69.35\%} \\ \hline
classids0 & \multicolumn{1}{c|}{98.12} & \multicolumn{1}{c|}{1.17E-05} & \multicolumn{1}{c|}{36} & \multicolumn{1}{c|}{\textgreater{}36x} & \multicolumn{1}{c|}{168.83\%} & \multicolumn{1}{c|}{1.08x} & 135.26\% \\ \hline
classids1 & \multicolumn{1}{c|}{99.76} & \multicolumn{1}{c|}{6.56E-06} & \multicolumn{1}{c|}{36} & \multicolumn{1}{c|}{\textgreater{}36x} & \multicolumn{1}{c|}{176.82\%} & \multicolumn{1}{c|}{1.46x} & 139.87\% \\ \hline
classids2 & \multicolumn{1}{c|}{98.29} & \multicolumn{1}{c|}{1.02E-05} & \multicolumn{1}{c|}{36} & \multicolumn{1}{c|}{\textgreater{}36x} & \multicolumn{1}{c|}{195.03\%} & \multicolumn{1}{c|}{1.22x} & 134.56\% \\ \hline
filters1 & \multicolumn{1}{c|}{0.19} & \multicolumn{1}{c|}{8.58E-08} & \multicolumn{1}{c|}{36} & \multicolumn{1}{c|}{3,200x} & \multicolumn{1}{c|}{\textbf{68.64\%}} & \multicolumn{1}{c|}{18x} & \textbf{42.27\%} \\ \hline
filters2 & \multicolumn{1}{c|}{0.75} & \multicolumn{1}{c|}{8.07E-07} & \multicolumn{1}{c|}{36} & \multicolumn{1}{c|}{4,248x} & \multicolumn{1}{c|}{101.77\%} & \multicolumn{1}{c|}{86x} & \textbf{79.90\%} \\ \hline
filters3 & \multicolumn{1}{c|}{19.49} & \multicolumn{1}{c|}{2.57E-06} & \multicolumn{1}{c|}{36} & \multicolumn{1}{c|}{\textgreater{}184x} & \multicolumn{1}{c|}{109.83\%} & \multicolumn{1}{c|}{10x} & \textbf{89.86\%} \\ \hline
filters4 & \multicolumn{1}{c|}{168.95} & \multicolumn{1}{c|}{5.62E-06} & \multicolumn{1}{c|}{36} & \multicolumn{1}{c|}{\textgreater{}21x} & \multicolumn{1}{c|}{135.42\%} & \multicolumn{1}{c|}{2.54x} & 108.08\% \\ \hline
rigidbody1 & \multicolumn{1}{c|}{97.02} & \multicolumn{1}{c|}{1.75E-04} & \multicolumn{1}{c|}{18} & \multicolumn{1}{c|}{\textgreater{}37x} & \multicolumn{1}{c|}{100.57\%} & \multicolumn{1}{c|}{0.28x} & 101.16\% \\ \hline
rigidbody2 & \multicolumn{1}{c|}{137.21} & \multicolumn{1}{c|}{1.55E-02} & \multicolumn{1}{c|}{8} & \multicolumn{1}{c|}{\textgreater{}26x} & \multicolumn{1}{c|}{\textbf{79.08\%}} & \multicolumn{1}{c|}{0.33x} & 159.79\% \\ \hline
sine & \multicolumn{1}{c|}{3.9} & \multicolumn{1}{c|}{5.52E-07} & \multicolumn{1}{c|}{36} & \multicolumn{1}{c|}{\textgreater{}923x} & \multicolumn{1}{c|}{232.91\%} & \multicolumn{1}{c|}{34x} & 229.05\% \\ \hline
solvecubic & \multicolumn{1}{c|}{31.03} & \multicolumn{1}{c|}{1.91E-05} & \multicolumn{1}{c|}{36} & \multicolumn{1}{c|}{\textgreater{}116x} & \multicolumn{1}{c|}{107.30\%} & \multicolumn{1}{c|}{5.82x} & \textbf{95.02\%} \\ \hline
sqrt & \multicolumn{1}{c|}{1.18} & \multicolumn{1}{c|}{1.39E-04} & \multicolumn{1}{c|}{36} & \multicolumn{1}{c|}{\textgreater{}3,050x} & \multicolumn{1}{c|}{\textbf{90.26\%}} & \multicolumn{1}{c|}{47x} & \textbf{90.26\%} \\ \hline
traincars1 & \multicolumn{1}{c|}{12.82} & \multicolumn{1}{c|}{2.47E-03} & \multicolumn{1}{c|}{36} & \multicolumn{1}{c|}{267x} & \multicolumn{1}{c|}{140.34\%} & \multicolumn{1}{c|}{3.92x} & 126.02\% \\ \hline
traincars2 & \multicolumn{1}{c|}{185.36} & \multicolumn{1}{c|}{1.84E-03} & \multicolumn{1}{c|}{18} & \multicolumn{1}{c|}{\textgreater{}19x} & \multicolumn{1}{c|}{176.92\%} & \multicolumn{1}{c|}{1.10x} & 138.35\% \\ \hline
traincars3 & \multicolumn{1}{c|}{138.94} & \multicolumn{1}{c|}{2.80E-02} & \multicolumn{1}{c|}{8} & \multicolumn{1}{c|}{\textgreater{}25x} & \multicolumn{1}{c|}{160.00\%} & \multicolumn{1}{c|}{0.13x} & 122.27\% \\ \hline
traincars4 & \multicolumn{1}{c|}{115.24} & \multicolumn{1}{c|}{3.53E-01} & \multicolumn{1}{c|}{6} & \multicolumn{1}{c|}{\textgreater{}31x} & \multicolumn{1}{c|}{195.03\%} & \multicolumn{1}{c|}{0.22x} & 153.48\% \\ \hline
trid1 & \multicolumn{1}{c|}{158.33} & \multicolumn{1}{c|}{6.78E-03} & \multicolumn{1}{c|}{18} & \multicolumn{1}{c|}{\textgreater{}22x} & \multicolumn{1}{c|}{112.81\%} & \multicolumn{1}{c|}{0.43x} & 110.78\% \\ \hline
trid2 & \multicolumn{1}{c|}{195.14} & \multicolumn{1}{c|}{1.35E-02} & \multicolumn{1}{c|}{10} & \multicolumn{1}{c|}{\textgreater{}18x} & \multicolumn{1}{c|}{131.07\%} & \multicolumn{1}{c|}{0.30x} & 115.38\% \\ \hline
trid3 & \multicolumn{1}{c|}{116.82} & \multicolumn{1}{c|}{2.46E-02} & \multicolumn{1}{c|}{6} & \multicolumn{1}{c|}{\textgreater{}30x} & \multicolumn{1}{c|}{140.57\%} & \multicolumn{1}{c|}{2.67x} & 126.15\% \\ \hline
trid4 & \multicolumn{1}{c|}{100.17} & \multicolumn{1}{c|}{4.64E-02} & \multicolumn{1}{c|}{4} & \multicolumn{1}{c|}{\textgreater{}35x} & \multicolumn{1}{c|}{172.49\%} & \multicolumn{1}{c|}{10x} & 161.11\% \\ \hline
ksin & \multicolumn{1}{c|}{8.09} & \multicolumn{1}{c|}{7.78E-08} & \multicolumn{1}{c|}{36} & \multicolumn{1}{c|}{\textgreater{}444x} & \multicolumn{1}{c|}{TO} & \multicolumn{1}{c|}{253x} & 112.27\% \\ \hline
kcos & \multicolumn{1}{c|}{6.72} & \multicolumn{1}{c|}{2.05E-07} & \multicolumn{1}{c|}{36} & \multicolumn{1}{c|}{\textgreater{}535x} & \multicolumn{1}{c|}{TO} & \multicolumn{1}{c|}{273x} & 170.83\% \\ \hline
\end{tabular}
\end{table}
\vspace{\tableskip}

\begin{table}[ht]
\centering
\caption{Experimental results for truncated standard normal distribution. The setup is the same as in Table~\ref{tab:rq1uniform}.}
\label{tab:rq1normal}
\begin{tabular}{|c|ccccccc|}
\hline
 & \multicolumn{7}{c|}{Truncated standard normal distribution} \\ \hline
 & \multicolumn{3}{c|}{Naive Markov algorithm} & \multicolumn{2}{c|}{Comparison with PAF} & \multicolumn{2}{c|}{Comparison with PrAn} \\ \hline
Benchmark & \multicolumn{1}{c|}{Time (s)} & \multicolumn{1}{c|}{Threshold} & \multicolumn{1}{c|}{Optimal $n$} & \multicolumn{1}{c|}{Speedup} & \multicolumn{1}{c|}{\begin{tabular}[c]{@{}c@{}}Threshold\\ ratio\end{tabular}} & \multicolumn{1}{c|}{Speedup} & \begin{tabular}[c]{@{}c@{}}Threshold\\ ratio\end{tabular} \\ \hline
bsplines0 & \multicolumn{1}{c|}{0.86} & \multicolumn{1}{c|}{7.56E-07} & \multicolumn{1}{c|}{8} & \multicolumn{1}{c|}{1,426x} & \multicolumn{1}{c|}{1323.99\%} & \multicolumn{1}{c|}{33x} & 869.97\% \\ \hline
bsplines1 & \multicolumn{1}{c|}{0.74} & \multicolumn{1}{c|}{7.56E-07} & \multicolumn{1}{c|}{6} & \multicolumn{1}{c|}{2,306x} & \multicolumn{1}{c|}{406.45\%} & \multicolumn{1}{c|}{72x} & 360.00\% \\ \hline
bsplines2 & \multicolumn{1}{c|}{1.47} & \multicolumn{1}{c|}{7.59E-07} & \multicolumn{1}{c|}{6} & \multicolumn{1}{c|}{1,889x} & \multicolumn{1}{c|}{391.24\%} & \multicolumn{1}{c|}{42x} & 358.02\% \\ \hline
bsplines3 & \multicolumn{1}{c|}{0.26} & \multicolumn{1}{c|}{6.18E-08} & \multicolumn{1}{c|}{4} & \multicolumn{1}{c|}{3,515x} & \multicolumn{1}{c|}{146.45\%} & \multicolumn{1}{c|}{86x} & 108.23\% \\ \hline
classids0 & \multicolumn{1}{c|}{98.77} & \multicolumn{1}{c|}{4.54E-06} & \multicolumn{1}{c|}{18} & \multicolumn{1}{c|}{\textgreater{}36x} & \multicolumn{1}{c|}{102.02\%} & \multicolumn{1}{c|}{2.83x} & \textbf{51.01\%} \\ \hline
classids1 & \multicolumn{1}{c|}{98.34} & \multicolumn{1}{c|}{2.67E-06} & \multicolumn{1}{c|}{18} & \multicolumn{1}{c|}{\textgreater{}36x} & \multicolumn{1}{c|}{\textbf{99.63\%}} & \multicolumn{1}{c|}{3.02x} & \textbf{56.09\%} \\ \hline
classids2 & \multicolumn{1}{c|}{98.89} & \multicolumn{1}{c|}{4.47E-06} & \multicolumn{1}{c|}{18} & \multicolumn{1}{c|}{\textgreater{}36x} & \multicolumn{1}{c|}{116.10\%} & \multicolumn{1}{c|}{2.74x} & \textbf{58.82\%} \\ \hline
filters1 & \multicolumn{1}{c|}{0.18} & \multicolumn{1}{c|}{8.37E-08} & \multicolumn{1}{c|}{24} & \multicolumn{1}{c|}{3,227x} & \multicolumn{1}{c|}{\textbf{67.50\%}} & \multicolumn{1}{c|}{30x} & \textbf{41.23\%} \\ \hline
filters2 & \multicolumn{1}{c|}{0.76} & \multicolumn{1}{c|}{7.48E-07} & \multicolumn{1}{c|}{24} & \multicolumn{1}{c|}{4,236x} & \multicolumn{1}{c|}{122.02\%} & \multicolumn{1}{c|}{159x} & \textbf{74.06\%} \\ \hline
filters3 & \multicolumn{1}{c|}{19.76} & \multicolumn{1}{c|}{2.30E-06} & \multicolumn{1}{c|}{18} & \multicolumn{1}{c|}{\textgreater{}182x} & \multicolumn{1}{c|}{112.20\%} & \multicolumn{1}{c|}{14x} & \textbf{80.14\%} \\ \hline
filters4 & \multicolumn{1}{c|}{168.12} & \multicolumn{1}{c|}{4.89E-06} & \multicolumn{1}{c|}{18} & \multicolumn{1}{c|}{\textgreater{}21x} & \multicolumn{1}{c|}{117.83\%} & \multicolumn{1}{c|}{4.91x} & \textbf{94.04\%} \\ \hline
rigidbody1 & \multicolumn{1}{c|}{87.27} & \multicolumn{1}{c|}{4.19E-06} & \multicolumn{1}{c|}{6} & \multicolumn{1}{c|}{\textgreater{}41x} & \multicolumn{1}{c|}{\textbf{68.24\%}} & \multicolumn{1}{c|}{0.52x} & \textbf{2.42\%} \\ \hline
rigidbody2 & \multicolumn{1}{c|}{133.20} & \multicolumn{1}{c|}{2.35E-05} & \multicolumn{1}{c|}{4} & \multicolumn{1}{c|}{\textgreater{}27x} & \multicolumn{1}{c|}{\textbf{39.23\%}} & \multicolumn{1}{c|}{0.71x} & \textbf{0.24\%} \\ \hline
sine & \multicolumn{1}{c|}{3.77} & \multicolumn{1}{c|}{6.85E-07} & \multicolumn{1}{c|}{6} & \multicolumn{1}{c|}{\textgreater{}954x} & \multicolumn{1}{c|}{289.03\%} & \multicolumn{1}{c|}{126x} & 284.23\% \\ \hline
solvecubic & \multicolumn{1}{c|}{31.37} & \multicolumn{1}{c|}{4.94E-06} & \multicolumn{1}{c|}{18} & \multicolumn{1}{c|}{\textgreater{}114x} & \multicolumn{1}{c|}{\textbf{72.22\%}} & \multicolumn{1}{c|}{7.34x} & \textbf{24.46\%} \\ \hline
sqrt & \multicolumn{1}{c|}{1.19} & \multicolumn{1}{c|}{2.98E-06} & \multicolumn{1}{c|}{4} & \multicolumn{1}{c|}{\textgreater{}3,025x} & \multicolumn{1}{c|}{1211.38\%} & \multicolumn{1}{c|}{97x} & \textbf{1.94\%} \\ \hline
traincars1 & \multicolumn{1}{c|}{12.75} & \multicolumn{1}{c|}{9.53E-04} & \multicolumn{1}{c|}{18} & \multicolumn{1}{c|}{235x} & \multicolumn{1}{c|}{115.38\%} & \multicolumn{1}{c|}{7.32x} & \textbf{48.62\%} \\ \hline
traincars2 & \multicolumn{1}{c|}{180.03} & \multicolumn{1}{c|}{4.07E-04} & \multicolumn{1}{c|}{12} & \multicolumn{1}{c|}{\textgreater{}19x} & \multicolumn{1}{c|}{112.43\%} & \multicolumn{1}{c|}{2.64x} & \textbf{30.60\%} \\ \hline
traincars3 & \multicolumn{1}{c|}{138.61} & \multicolumn{1}{c|}{5.92E-03} & \multicolumn{1}{c|}{8} & \multicolumn{1}{c|}{\textgreater{}25x} & \multicolumn{1}{c|}{\textbf{61.92\%}} & \multicolumn{1}{c|}{0.13x} & \textbf{26.55\%} \\ \hline
traincars4 & \multicolumn{1}{c|}{114.32} & \multicolumn{1}{c|}{6.95E-02} & \multicolumn{1}{c|}{6} & \multicolumn{1}{c|}{\textgreater{}31x} & \multicolumn{1}{c|}{\textbf{78.35\%}} & \multicolumn{1}{c|}{0.23x} & \textbf{30.22\%} \\ \hline
trid1 & \multicolumn{1}{c|}{158.87} & \multicolumn{1}{c|}{7.32E-06} & \multicolumn{1}{c|}{8} & \multicolumn{1}{c|}{\textgreater{}22x} & \multicolumn{1}{c|}{\textbf{46.33\%}} & \multicolumn{1}{c|}{1.59x} & \textbf{0.12\%} \\ \hline
trid2 & \multicolumn{1}{c|}{195.4} & \multicolumn{1}{c|}{1.21E-05} & \multicolumn{1}{c|}{8} & \multicolumn{1}{c|}{\textgreater{}18x} & \multicolumn{1}{c|}{\textbf{50.00\%}} & \multicolumn{1}{c|}{0.71x} & \textbf{0.10\%} \\ \hline
trid3 & \multicolumn{1}{c|}{116.59} & \multicolumn{1}{c|}{1.88E-05} & \multicolumn{1}{c|}{6} & \multicolumn{1}{c|}{\textgreater{}30x} & \multicolumn{1}{c|}{\textbf{27.65\%}} & \multicolumn{1}{c|}{4.60x} & \textbf{0.10\%} \\ \hline
trid4 & \multicolumn{1}{c|}{100.29} & \multicolumn{1}{c|}{3.34E-05} & \multicolumn{1}{c|}{4} & \multicolumn{1}{c|}{\textgreater{}35x} & \multicolumn{1}{c|}{\textbf{12.65\%}} & \multicolumn{1}{c|}{15.71x} & \textbf{0.11\%} \\ \hline
ksin & \multicolumn{1}{c|}{8.3} & \multicolumn{1}{c|}{1.49E-07} & \multicolumn{1}{c|}{4} & \multicolumn{1}{c|}{\textgreater{}433x} & \multicolumn{1}{c|}{TO} & \multicolumn{1}{c|}{\textgreater{}433x} & TO \\ \hline
kcos & \multicolumn{1}{c|}{7.18} & \multicolumn{1}{c|}{4.69E-07} & \multicolumn{1}{c|}{4} & \multicolumn{1}{c|}{\textgreater{}501x} & \multicolumn{1}{c|}{TO} & \multicolumn{1}{c|}{\textgreater{}501x} & TO \\ \hline
\end{tabular}
\end{table}

\begin{table}[ht]
\centering
\caption{Experimental results for truncated double exponential distribution. The setup is the same as in Table~\ref{tab:rq1uniform}. For benchmarks \texttt{classids0}, \texttt{classids1}, \texttt{classids2}, and \texttt{solvecubic}, the variance value is set to $1$. For all other benchmarks, the variance value is set to $0.01$.}
\label{tab:rq1laplace}
\begin{tabular}{|c|ccccc|}
\hline
 & \multicolumn{5}{c|}{Truncated double exponential distribution} \\ \hline
 & \multicolumn{3}{c|}{Naive Markov algorithm} & \multicolumn{2}{c|}{Comparison with PAF} \\ \hline
Benchmark & \multicolumn{1}{c|}{Time (s)} & \multicolumn{1}{c|}{Threshold} & \multicolumn{1}{c|}{Optimal $n$} & \multicolumn{1}{c|}{Speedup} & \begin{tabular}[c]{@{}c@{}}Threshold\\ ratio\end{tabular} \\ \hline
bsplines0 & \multicolumn{1}{c|}{0.84} & \multicolumn{1}{c|}{7.41E-08} & \multicolumn{1}{c|}{30} & \multicolumn{1}{c|}{3,129x} & 129.77\% \\ \hline
bsplines1 & \multicolumn{1}{c|}{0.74} & \multicolumn{1}{c|}{9.04E-08} & \multicolumn{1}{c|}{36} & \multicolumn{1}{c|}{2,974x} & 130.07\% \\ \hline
bsplines2 & \multicolumn{1}{c|}{1.47} & \multicolumn{1}{c|}{2.96E-08} & \multicolumn{1}{c|}{18} & \multicolumn{1}{c|}{\textgreater{}2,448x} & 140.28\% \\ \hline
bsplines3 & \multicolumn{1}{c|}{0.25} & \multicolumn{1}{c|}{1.07E-11} & \multicolumn{1}{c|}{2} & \multicolumn{1}{c|}{3,752x} & 140.42\% \\ \hline
classids0 & \multicolumn{1}{c|}{381.09} & \multicolumn{1}{c|}{6.76E-06} & \multicolumn{1}{c|}{18} & \multicolumn{1}{c|}{64x} & 135.20\% \\ \hline
classids1 & \multicolumn{1}{c|}{382.22} & \multicolumn{1}{c|}{3.90E-06} & \multicolumn{1}{c|}{18} & \multicolumn{1}{c|}{42x} & 126.62\% \\ \hline
classids2 & \multicolumn{1}{c|}{382.39} & \multicolumn{1}{c|}{6.23E-06} & \multicolumn{1}{c|}{18} & \multicolumn{1}{c|}{35x} & 150.12\% \\ \hline
filters1 & \multicolumn{1}{c|}{0.19} & \multicolumn{1}{c|}{2.69E-09} & \multicolumn{1}{c|}{6} & \multicolumn{1}{c|}{3,857x} & \textbf{49.54\%} \\ \hline
filters2 & \multicolumn{1}{c|}{0.74} & \multicolumn{1}{c|}{2.06E-08} & \multicolumn{1}{c|}{6} & \multicolumn{1}{c|}{\textgreater{}4,864x} & \textbf{71.03\%} \\ \hline
filters3 & \multicolumn{1}{c|}{19.88} & \multicolumn{1}{c|}{5.54E-08} & \multicolumn{1}{c|}{8} & \multicolumn{1}{c|}{\textgreater{}181x} & \textbf{50.83\%} \\ \hline
filters4 & \multicolumn{1}{c|}{168.43} & \multicolumn{1}{c|}{1.12E-07} & \multicolumn{1}{c|}{8} & \multicolumn{1}{c|}{\textgreater{}21x} & \textbf{24.30\%} \\ \hline
rigidbody1 & \multicolumn{1}{c|}{97.11} & \multicolumn{1}{c|}{9.01E-09} & \multicolumn{1}{c|}{6} & \multicolumn{1}{c|}{\textgreater{}37x} & \textbf{1.88\%} \\ \hline
rigidbody2 & \multicolumn{1}{c|}{132.93} & \multicolumn{1}{c|}{1.20E-08} & \multicolumn{1}{c|}{4} & \multicolumn{1}{c|}{\textgreater{}27x} & \textbf{1.26\%} \\ \hline
sine & \multicolumn{1}{c|}{3.82} & \multicolumn{1}{c|}{1.16E-08} & \multicolumn{1}{c|}{6} & \multicolumn{1}{c|}{\textgreater{}942x} & \textbf{77.85\%} \\ \hline
solvecubic & \multicolumn{1}{c|}{116.66} & \multicolumn{1}{c|}{1.52E-05} & \multicolumn{1}{c|}{12} & \multicolumn{1}{c|}{143x} & 107.04\% \\ \hline
sqrt & \multicolumn{1}{c|}{1.2} & \multicolumn{1}{c|}{2.91E-07} & \multicolumn{1}{c|}{24} & \multicolumn{1}{c|}{\textgreater{}3,000x} & 118.29\% \\ \hline
traincars1 & \multicolumn{1}{c|}{13.83} & \multicolumn{1}{c|}{3.10E-04} & \multicolumn{1}{c|}{36} & \multicolumn{1}{c|}{\textgreater{}260x} & \textbf{68.89\%} \\ \hline
traincars2 & \multicolumn{1}{c|}{181.78} & \multicolumn{1}{c|}{8.64E-06} & \multicolumn{1}{c|}{10} & \multicolumn{1}{c|}{\textgreater{}19x} & \textbf{30.53\%} \\ \hline
traincars3 & \multicolumn{1}{c|}{223.37} & \multicolumn{1}{c|}{1.17E-04} & \multicolumn{1}{c|}{8} & \multicolumn{1}{c|}{\textgreater{}16x} & \textbf{13.07\%} \\ \hline
traincars4 & \multicolumn{1}{c|}{114.92} & \multicolumn{1}{c|}{1.58E-03} & \multicolumn{1}{c|}{6} & \multicolumn{1}{c|}{\textgreater{}31x} & \textbf{21.56\%} \\ \hline
trid1 & \multicolumn{1}{c|}{158.35} & \multicolumn{1}{c|}{5.55E-07} & \multicolumn{1}{c|}{18} & \multicolumn{1}{c|}{\textgreater{}22x} & \textbf{3.51\%} \\ \hline
trid2 & \multicolumn{1}{c|}{195.13} & \multicolumn{1}{c|}{1.07E-06} & \multicolumn{1}{c|}{10} & \multicolumn{1}{c|}{\textgreater{}18x} & \textbf{4.40\%} \\ \hline
trid3 & \multicolumn{1}{c|}{117.04} & \multicolumn{1}{c|}{2.25E-06} & \multicolumn{1}{c|}{6} & \multicolumn{1}{c|}{\textgreater{}30x} & \textbf{3.32\%} \\ \hline
trid4 & \multicolumn{1}{c|}{99.84} & \multicolumn{1}{c|}{4.86E-06} & \multicolumn{1}{c|}{4} & \multicolumn{1}{c|}{\textgreater{}36x} & \textbf{1.84\%} \\ \hline
ksin & \multicolumn{1}{c|}{8.36} & \multicolumn{1}{c|}{3.87E-09} & \multicolumn{1}{c|}{6} & \multicolumn{1}{c|}{\textgreater{}430x} & TO \\ \hline
kcos & \multicolumn{1}{c|}{6.76} & \multicolumn{1}{c|}{1.89E-07} & \multicolumn{1}{c|}{10} & \multicolumn{1}{c|}{\textgreater{}532x} & TO \\ \hline
\end{tabular}
\end{table}

\subsection{Comparing the Naive Markov algorithm to PAF and PrAn (Answering RQ1)}
\label{sec:nmvspaf}

The NM algorithm (Algorithm~\ref{alg:nm}) allows customization of the analysis order $n$, thereby offering flexibility.
It is intended as a light-weight static analysis algorithm capable of producing results with relatively low computational overhead.
Therefore, in the evaluation, we execute the NM algorithm on orders 2, 4, 6, 8, 10, 12, 18, 24, 30, 36, with a timeout of 90 seconds for each individual analysis order, and terminate the execution upon a timeout.~\footnote{We are not aware of an existing method to determine the optimal analysis order in advance. Nonetheless, evaluating multiple values of $n$ remains computationally inexpensive.}
We report the tightest probabilistic threshold generated among these configurations, and report the running time as the total time elapsed across analysis orders, plus the time spent computing and bounding second-order error via GELPIA \footnote{For benchmarks \texttt{ksin} and \texttt{kcos}, GELPIA crashes when analyzing the second-order error. Therefore, we developed a simple brute-force algorithm to bound the second-order error in those cases. The brute-force algorithm is explained in Appendix~\ref{sec:bruteforce}}.

Since the NM algorithm only applies to division-free expressions, 
the experiments in this subsection exclude benchmarks involving division by non-constants.
We compare our thresholds and running time with those obtained using PAF and PrAn (settings C and D in \cite{lohar2019sound}). 
We consider the case where all operations are performed in single precision and the target confidence level is $99\%$.

We document the experimental results on the benchmarks with the original distributions from PAF in Table~\ref{tab:rq1uniform} (uniform distribution), Table~\ref{tab:rq1normal} (standard normal distribution), and Table~\ref{tab:rq1laplace} (Laplace distribution). 
In the case of Laplace distribution, we adopt a variance value of  $\sigma = 0.01$ to maintain consistency with PAF, except for four benchmarks (\texttt{classids0}, \texttt{classids1}, \texttt{classids2}, and \texttt{solvecubic}) where numerical underflow occurs.
For these four benchmarks, we instead set $\sigma=1$.
We impose a timeout of $3600$ seconds per benchmark.
For the PAF benchmarks where PAF exceeds this time limit,  we report the output thresholds from PAF as documented in  \cite{constantinides2021rigorous} and run PAF exhaustively to get thresholds for the $\sigma=1$ double exponential benchmarks as mentioned above.

Due to page limits, the output thresholds and running times of PAF and PrAn are not presented in the main tables.
These data can be found in Appendix~\ref{sec:fullexperiment}.
The tables focus on comparing our method against PAF and PrAn.
The columns labeled ``Speedup'' indicate how much faster (or, in a few cases, slower) the NM algorithm is compared to PAF/PrAn, expressed as the ratio of their respective running times.
For those benchmarks where PAF fails to complete within the $3600$-second timeout, we report a lower bound on the speedup.
The columns labeled ``Threshold ratio'' provide a quantitative comparison of the thresholds, expressing the thresholds derived using the NM algorithm as a percentage of those obtained with PAF/PrAn.
For uniform (Table~\ref{tab:rq1uniform}) and truncated normal distribution (Table~\ref{tab:rq1normal}), we compare the NM method to both PAF and PrAn. 
For truncated Laplace distribution (Table~\ref{tab:rq1laplace}), which is not supported by PrAn, we compare only to PAF.

The NM algorithm successfully completes all benchmarks within $200$ seconds, whereas PAF exceeds the time limit on 56 of 78 benchmarks.
This shows that our approach is significantly more time-efficient compared to PAF.
Furthermore, PrAn fails to finish analysis on 2 of 52 benchmarks.
Our approach outperforms PrAn in terms of runtime, achieving an average speedup of 49x and completing the analysis faster on 41 of 52 benchmarks.
Importantly, the advantage in time efficiency does not sacrifice the accuracy of the derived threshold, as the NM algorithm produces probabilistic thresholds that are within the same order of magnitude as those produced by PrAn on all benchmarks, and by PAF on all but two.
Moreover, the NM algorithm yields tighter thresholds than PAF on 30 of 72 benchmarks, 
and achieves thresholds more than an order of magnitude tighter on 6 of the benchmarks.
In comparison to PrAn, the NM algorithm produces tighter thresholds on 25 of 50 benchmarks, with improvements exceeding an order of magnitude on 7 benchmarks.
Overall, these results indicate that the NM algorithm generally derives comparable or better thresholds in substantially less time. 
Additionally, the accuracy of the results can be further improved through the CMB algorithm and range partition.

\subsection{Scalability Experiments (Answering RQ2)}
\label{sec:scalability}

\begin{table}[ht]
\centering
\caption{Scalability experiments. The benchmarks are constructed as the inner product of two vectors. All entries are independently and uniformly distributed over the interval $(0, 1)$. Analysis assumes single precision computation and $99\%$ target confidence level. Computational times for first-order and second-order error threshold are reported separately in columns labeled ``FO time (s)'' and ``SO time (s)''. ``TO'' stands for timeout.}
\label{tab:rq2scalability}
\begin{tabular}{|c|c|c|c|c|c|c|c|}
\hline
\begin{tabular}[c]{@{}c@{}}Vector\\ length\end{tabular} & Threshold & \begin{tabular}[c]{@{}c@{}}FO \\ time (s)\end{tabular} & \begin{tabular}[c]{@{}c@{}}SO\\ time (s)\end{tabular} & \begin{tabular}[c]{@{}c@{}}Vector\\ length\end{tabular} & Threshold & \begin{tabular}[c]{@{}c@{}}FO\\ time (s)\end{tabular} & \begin{tabular}[c]{@{}c@{}}SO\\ time(s)\end{tabular} \\ \hline
25 & 5.30E-05 & 0.09 & 53.64 & 175 & 2.33E-03 & 14.5 & TO \\ \hline
50 & 1.99E-04 & 0.45 & 1779.53 & 200 & 3.03E-03 & 21.58 & TO \\ \hline
75 & 4.39E-04 & 1.23 & 13965.87 & 225 & 3.83E-03 & 31.32 & TO \\ \hline
100 & 7.71E-04 & 2.88 & TO & 250 & 4.72E-03 & 42.38 & TO \\ \hline
125 & 1.20E-03 & 5.13 & TO & 275 & 5.71E-03 & 57.4 & TO \\ \hline
150 & 1.72E-03 & 9.12 & TO & 300 & 6.78E-03 & 75.37 & TO \\ \hline
\end{tabular}
\end{table}

To evaluate the scalability of the NM algorithm, we designed a set of large-scale benchmarks, constructed as the inner product of two vectors whose entries are independently and uniformly distributed over $(0, 1)$.
The analysis is performed with analysis order $n=2$.
Due to GELPIA failing to bound the second-order errors on these large benchmarks, the brute-force method is employed instead.
A 4-hour timeout is enforced on the analysis of second-order errors.
We report the running time for analyzing first- and second-order error separately.
We report the results in Table~\ref{tab:rq2scalability}.

The experimental results demonstrate that the NM algorithm maintains reasonable computational performance on large-scale inputs -- the first-order analysis completes within 0.1 to 76 seconds as the vector length increases from 25 to 300.
However, the computation of the second-order error expression and the brute-force method for bounding second-order error are considerably more expensive, consuming significantly more time or even being infeasibly slow. \jiawei{What is the timeout for deriving SO errors?}
Overall, these findings suggest that the NM algorithm scales effectively for first-order error analysis, while efficiency improvements for second-order error estimation remains an important direction for future work.

\subsection{Evaluating the Central-Moment-Based Algorithm (Answering RQ3)}
\label{sec:ablation}

\begin{table}[ht]
\centering
\caption{Comparison of the thresholds generated by the CMB algorithm and by the NM algorithm. The data reported are the ratio of the threshold generated by the Central-Moment-Based algorithm with range partition refinement to those by the Naive Markov Method. The experiments use an analysis order of $n=4$. For the Central-Moment-Based algorithm, the number of partitions is $b=8$ for benchmarks with fewer than four input variables and fewer than ten operations, and range partitioning is disabled otherwise. }
\label{tab:rq3polyratio}
\begin{tabular}{|c|c|c|c|c|c|c|c|}
\hline
Benchmark  & Uniform & Normal  & Laplace  & Benchmark  & Uniform & Normal  & Laplace  \\ \hline
bsplines0  & 52.54\% & 55.64\% & 56.13\%  & sine       & 59.60\% & 18.68\% & 100.00\% \\ \hline
bsplines1  & 55.53\% & 24.12\% & 61.00\%  & solvecubic & 53.23\% & 32.37\% & 67.01\%  \\ \hline
bsplines2  & 57.03\% & 22.66\% & 63.58\%  & sqrt       & 67.57\% & 16.04\% & 24.52\%  \\ \hline
bsplines3  & 62.39\% & 14.13\% & 70.29\%  & traincars1 & 47.88\% & 30.64\% & 58.57\%  \\ \hline
classids0  & 60.58\% & 51.23\% & 61.21\%  & traincars2 & 64.89\% & 71.70\% & 81.63\%  \\ \hline
classids1  & 62.33\% & 54.88\% & 62.93\%  & traincars3 & 64.79\% & 69.86\% & 76.64\%  \\ \hline
classids2  & 57.75\% & 51.31\% & 58.23\%  & traincars4 & 60.93\% & 61.85\% & 69.57\%  \\ \hline
filters1   & 51.53\% & 21.67\% & 59.38\%  & trid1      & 64.44\% & 22.14\% & 12.94\%  \\ \hline
filters2   & 51.06\% & 25.04\% & 62.06\%  & trid2      & 76.47\% & 80.13\% & 98.02\%  \\ \hline
filters3   & 51.67\% & 27.04\% & 59.59\%  & trid3      & 75.54\% & 77.22\% & 94.69\%  \\ \hline
filters4   & 65.14\% & 69.46\% & 71.03\%  & trid4      & 77.58\% & 74.85\% & 93.11\%  \\ \hline
rigidbody1 & 68.82\% & 16.23\% & 91.15\%  & ksin       & 77.40\% & 93.96\% & 66.14\%  \\ \hline
rigidbody2 & 93.82\% & 97.45\% & 100.00\% & kcos       & 49.79\% & 51.60\% & 16.20\%  \\ \hline
\end{tabular}
\end{table}

\begin{table}[ht]
\centering 
\caption{Comparison between thresholds generated by the CMB algorithm and by existing tools. The setup is the same as in Table~\ref{tab:rq3polyratio}. The CMB algorithm is run with analysis order $n=4$ and number of partitions $b=8$ for benchmarks with fewer than four input variables and fewer than ten operations, and range partition is disabled otherwise. The columns marked ``CMB/PAF'' are ratios of thresholds given by the CMB algorithm to those given by PAF, and ``CMB/PrAn'' is defined similarly. Ratios below $100\%$ indicate that the CMB algorithm yields a tighter threshold than the competing tool, and is highlighted in bold. ``TO'' indicates timeout for competing tool.}
\label{tab:rq3polycomp}
\begin{tabular}{|c|cc|cc|c|}
\hline
           & \multicolumn{2}{c|}{Uniform}                             & \multicolumn{2}{c|}{Normal}                              & Laplace          \\ \hline
Benchmarks & \multicolumn{1}{c|}{CMB/PAF}          & CMB/PrAn         & \multicolumn{1}{c|}{CMB/PAF}          & CMB/PrAn         & CMB/PAF          \\ \hline
bsplines0  & \multicolumn{1}{c|}{868.65\%}         & 570.77\%         & \multicolumn{1}{c|}{872.15\%}         & 573.07\%         & 112.61\%         \\ \hline
bsplines1  & \multicolumn{1}{c|}{307.53\%}         & 302.65\%         & \multicolumn{1}{c|}{125.27\%}         & 110.95\%         & 119.86\%         \\ \hline
bsplines2  & \multicolumn{1}{c|}{296.91\%}         & 271.70\%         & \multicolumn{1}{c|}{110.82\%}         & 101.42\%         & 104.74\%         \\ \hline
bsplines3  & \multicolumn{1}{c|}{\textbf{97.87\%}} & \textbf{72.33\%} & \multicolumn{1}{c|}{\textbf{20.69\%}} & \textbf{15.29\%} & \textbf{3.27\%}  \\ \hline
classids0  & \multicolumn{1}{c|}{210.68\%}         & 168.79\%         & \multicolumn{1}{c|}{107.64\%}         & \textbf{53.82\%} & 142.00\%         \\ \hline
classids1  & \multicolumn{1}{c|}{223.45\%}         & 176.76\%         & \multicolumn{1}{c|}{104.85\%}         & \textbf{59.03\%} & 131.17\%         \\ \hline
classids2  & \multicolumn{1}{c|}{235.18\%}         & 162.27\%         & \multicolumn{1}{c|}{122.34\%}         & \textbf{61.97\%} & 158.55\%         \\ \hline
filters1   & \multicolumn{1}{c|}{\textbf{72.56\%}} & \textbf{44.68\%} & \multicolumn{1}{c|}{\textbf{25.16\%}} & \textbf{15.37\%} & \textbf{7.99\%}  \\ \hline
filters2   & \multicolumn{1}{c|}{103.66\%}         & \textbf{81.39\%} & \multicolumn{1}{c|}{\textbf{51.88\%}} & \textbf{31.49\%} & \textbf{14.90\%} \\ \hline
filters3   & \multicolumn{1}{c|}{112.39\%}         & \textbf{91.96\%} & \multicolumn{1}{c|}{\textbf{51.71\%}} & \textbf{36.93\%} & \textbf{52.48\%} \\ \hline
filters4   & \multicolumn{1}{c|}{174.22\%}         & 139.04\%         & \multicolumn{1}{c|}{141.93\%}         & 113.27\%         & \textbf{25.38\%} \\ \hline
rigidbody1 & \multicolumn{1}{c|}{104.02\%}         & 104.62\%         & \multicolumn{1}{c|}{\textbf{12.05\%}} & \textbf{0.43\%}  & \textbf{0.41\%}  \\ \hline
rigidbody2 & \multicolumn{1}{c|}{\textbf{85.20\%}} & 172.16\%         & \multicolumn{1}{c|}{\textbf{38.23\%}} & \textbf{0.24\%}  & \textbf{0.92\%}  \\ \hline
sine       & \multicolumn{1}{c|}{254.01\%}         & 249.79\%         & \multicolumn{1}{c|}{\textbf{67.93\%}} & \textbf{66.80\%} & \textbf{12.48\%} \\ \hline
solvecubic & \multicolumn{1}{c|}{111.24\%}         & \textbf{98.51\%} & \multicolumn{1}{c|}{\textbf{45.91\%}} & \textbf{15.54\%} & 114.08\%         \\ \hline
sqrt       & \multicolumn{1}{c|}{\textbf{97.40\%}} & \textbf{97.40\%} & \multicolumn{1}{c|}{194.31\%}         & \textbf{0.31\%}  & 102.03\%         \\ \hline
traincars1 & \multicolumn{1}{c|}{140.91\%}         & 126.53\%         & \multicolumn{1}{c|}{\textbf{63.80\%}} & \textbf{26.89\%} & \textbf{63.33\%} \\ \hline
traincars2 & \multicolumn{1}{c|}{222.12\%}         & 173.68\%         & \multicolumn{1}{c|}{124.59\%}         & \textbf{33.91\%} & \textbf{30.99\%} \\ \hline
traincars3 & \multicolumn{1}{c|}{164.00\%}         & 125.33\%         & \multicolumn{1}{c|}{\textbf{65.69\%}} & \textbf{28.16\%} & \textbf{12.85\%} \\ \hline
traincars4 & \multicolumn{1}{c|}{166.30\%}         & 130.87\%         & \multicolumn{1}{c|}{\textbf{67.08\%}} & \textbf{25.87\%} & \textbf{17.33\%} \\ \hline
trid1      & \multicolumn{1}{c|}{115.81\%}         & 113.73\%         & \multicolumn{1}{c|}{\textbf{12.59\%}} & \textbf{0.03\%}  & \textbf{2.84\%}  \\ \hline
trid2      & \multicolumn{1}{c|}{151.46\%}         & 133.33\%         & \multicolumn{1}{c|}{\textbf{51.65\%}} & \textbf{0.11\%}  & \textbf{3.13\%}  \\ \hline
trid3      & \multicolumn{1}{c|}{139.43\%}         & 125.13\%         & \multicolumn{1}{c|}{\textbf{26.91\%}} & \textbf{0.09\%}  & \textbf{1.67\%}  \\ \hline
trid4      & \multicolumn{1}{c|}{128.62\%}         & 120.14\%         & \multicolumn{1}{c|}{\textbf{9.47\%}}  & \textbf{0.08\%}  & \textbf{0.63\%}  \\ \hline
ksin       & \multicolumn{1}{c|}{TO}               & 163.06\%         & \multicolumn{1}{c|}{TO}               & TO               & TO               \\ \hline
kcos       & \multicolumn{1}{c|}{TO}               & 197.50\%         & \multicolumn{1}{c|}{TO}               & TO               & TO               \\ \hline
\end{tabular}
\end{table}

\begin{table}[ht]
\centering
\caption{Extended CMB algorithm for fractional expressions. The experiments use an analysis order of $n=2$, and number of partitions $b=16$. ``DZ'' indicates that the competing tool encounters underflow-related NaNs (PAF, PrAn). ``N/A'' indicates lack of support for the corresponding distribution (PrAn laplace).}
\label{tab:rq3div}
\begin{tabular}{|c|c|cc|cc|cc|}
\hline
 &  & \multicolumn{2}{c|}{ProbTaylor} & \multicolumn{2}{c|}{\begin{tabular}[c]{@{}c@{}}Comparison\\ with PAF\end{tabular}} & \multicolumn{2}{c|}{\begin{tabular}[c]{@{}c@{}}Comparison\\ with PrAn\end{tabular}} \\ \hline
Benchmark & Distribution & \multicolumn{1}{c|}{Time (s)} & Threshold & \multicolumn{1}{c|}{Speedup} & \begin{tabular}[c]{@{}c@{}}Threshold\\ ratio\end{tabular} & \multicolumn{1}{c|}{Speedup} & \begin{tabular}[c]{@{}c@{}}Threshold\\ ratio\end{tabular} \\ \hline
doppler1 & uniform & \multicolumn{1}{c|}{758.96} & 7.01E-06 & \multicolumn{1}{c|}{\textgreater{}4.74x} & \textbf{8.82\%} & \multicolumn{1}{c|}{DZ} & DZ \\ \hline
doppler2 & uniform & \multicolumn{1}{c|}{760.92} & 1.86E-05 & \multicolumn{1}{c|}{\textgreater{}4.73x} & \textbf{13.01\%} & \multicolumn{1}{c|}{DZ} & DZ \\ \hline
doppler3 & uniform & \multicolumn{1}{c|}{1487.33} & 4.29E-06 & \multicolumn{1}{c|}{\textgreater{}2.42x} & \textbf{9.43\%} & \multicolumn{1}{c|}{DZ} & DZ \\ \hline
nonlin1 & uniform & \multicolumn{1}{c|}{2.5} & 6.33E-08 & \multicolumn{1}{c|}{336x} & \textbf{94.34\%} & \multicolumn{1}{c|}{9.33x} & \textbf{83.95\%} \\ \hline
nonlin2 & uniform & \multicolumn{1}{c|}{26.19} & 7.20E-06 & \multicolumn{1}{c|}{DZ} & DZ & \multicolumn{1}{c|}{8.60x} & 196.72\% \\ \hline
predator & uniform & \multicolumn{1}{c|}{23.22} & 4.42E-07 & \multicolumn{1}{c|}{41x} & 444.22\% & \multicolumn{1}{c|}{3.66x} & 442.00\% \\ \hline
verhulst & uniform & \multicolumn{1}{c|}{5.9} & 2.43E-07 & \multicolumn{1}{c|}{59x} & 141.28\% & \multicolumn{1}{c|}{5.09x} & 135.00\% \\ \hline
nonlin1 & normal & \multicolumn{1}{c|}{2.52} & 6.02E-08 & \multicolumn{1}{c|}{334x} & \textbf{89.72\%} & \multicolumn{1}{c|}{6.32x} & \textbf{79.84\%} \\ \hline
nonlin2 & normal & \multicolumn{1}{c|}{25.74} & 7.94E-05 & \multicolumn{1}{c|}{DZ} & DZ & \multicolumn{1}{c|}{26.48x} & 2577.92\% \\ \hline
predator & normal & \multicolumn{1}{c|}{22.26} & 4.42E-07 & \multicolumn{1}{c|}{43x} & 444.22\% & \multicolumn{1}{c|}{7.81x} & 442.00\% \\ \hline
verhulst & normal & \multicolumn{1}{c|}{6.12} & 2.43E-07 & \multicolumn{1}{c|}{57x} & 141.28\% & \multicolumn{1}{c|}{7.97x} & 135.00\% \\ \hline
nonlin1 & laplace & \multicolumn{1}{c|}{2.52} & 6.04E-08 & \multicolumn{1}{c|}{333x} & \textbf{90.01\%} & \multicolumn{1}{c|}{N/A} & N/A \\ \hline
nonlin2 & laplace & \multicolumn{1}{c|}{27.81} & 1.72E-05 & \multicolumn{1}{c|}{DZ} & DZ & \multicolumn{1}{c|}{N/A} & N/A \\ \hline
predator & laplace & \multicolumn{1}{c|}{22.9} & 4.42E-07 & \multicolumn{1}{c|}{41x} & 446.92\% & \multicolumn{1}{c|}{N/A} & N/A \\ \hline
verhulst & laplace & \multicolumn{1}{c|}{6.42} & 2.43E-07 & \multicolumn{1}{c|}{54x} & 143.79\% & \multicolumn{1}{c|}{N/A} & N/A \\ \hline
\end{tabular}
\end{table}

In this subsection, we compare the NM method with the CMB method (Algorithm~\ref{alg:cmb}) with range partition (Algorithm~\ref{alg:partition}) on polynomial benchmarks.
We also evaluate the extended CMB method for fractional expressions (Algorithm~\ref{alg:div}) by comparing with those obtained by PAF and PrAn.

We run both the NM algorithm and the CMB algorithm on the polynomial benchmarks with analysis order $n=4$. 
For the CMB algorithm, we configure the number of partitions to $b=8$ for benchmarks with fewer than four input variables and ten operations, while disabling range partition for the others. 
To avoid numerical underflow with double exponential distributions, we set the variance to $\sigma = 1$ when comparing with the NM algorithm.
The resulting output threshold ratios are documented in Table~\ref{tab:rq3polyratio}.
The threshold values and running times can be found in Appendix~\ref{sec:fullexperiment}.
% For sake of space, the running time 

We use the results from the NM algorithm as a baseline for comparison to evaluate the amount of refinement that can be achieved with the CMB algorithm.
From the experimental data, we conclude that the CMB algorithm with range partition refinement significantly improves the derived thresholds.
Among the 78 polynomial benchmarks,
the CMB algorithm yields thresholds less than half of the baseline values in 16 cases. 
Furthermore, the average threshold refinement, defined as the ratio of the refined threshold to the baseline, is observed to be $62.7\%$ for uniform distribution benchmarks, $47.4\%$ for normal distribution benchmarks, and $66.8\%$ for double exponential benchmarks. 
The results underscore the effectiveness of the CMB algorithm and the range partition technique in deriving tighter probabilistic error thresholds. 
Besides, the CMB's time efficiency is not much worse than the NM's, with all but three benchmarks finishing within 90 seconds.

We then perform a comparison between the overall results produced by ProbTaylor and the other tools over all polynomial benchmarks.
The threshold ratios are presented in Table~\ref{tab:rq3polycomp}~\footnote{
Readers might have noticed that ProbTaylor exhibits relatively better performance on the normal and Laplace distributions. This behavior is expected, as our approach fundamentally relies on the application of ``concentration'' inequalities. Consequently, the approach is particularly effective for distributions that are more ``concentrated'' -- namely, those with relatively small variance.
}.
ProbTaylor achieves more accurate thresholds than PAF on 35 out of 72 polynomial benchmarks (PAF times out on 6), improving by more than an order of magnitude in 9 instances. 
Compared to PrAn, ProbTaylor produces tighter thresholds on 26 out of 50 polynomial benchmarks (PrAn times out on 2), improving by more than an order of magnitude in 7 instances.
In terms of threshold accuracy, ProbTaylor never underperforms PAF or PrAn by more than an order of magnitude.
The CMB algorithm with the range partition refinement remains faster than competing tools, with all benchmarks except \texttt{rigidbody1} finishing within 2 minutes.

Next, we evaluate the extension of the CMB algorithm for handling fractional expressions and compare the results of our tool to PAF and PrAn.
For these experiments, we set the analysis order to $n=2$ and number of partitions to $b=16$.
We run the extended CMB algorithm, PAF, and PrAn on the three \texttt{doppler} benchmarks from PAF with uniform distribution due to underflow issues in OCaml with the normal and Laplace distributions. We run four additional benchmarks using all three distributions.
We document the running time and resulting thresholds, as well as comparison with PAF and PrAn in Table~\ref{tab:rq3div}.
``Speedup'' and ``threshold ratio'' are defined in the same way as in Section~\ref{sec:nmvspaf}.
Cases where PAF or PrAn encounters NaNs caused by underflow are indicated with ``DZ''.
PrAn is not run in Laplace distribution cases due to its lack of support.

From the experimental results on fractional expressions, ProbTaylor's CMB algorithm is substantially more time-efficient than PAF, achieving an average speedup of 109x.
Moreover, ProbTaylor produces tighter thresholds than PAF on 6 out of 12 benchmarks that PAF is able to analyze, with 2 of these improvements exceeding an order of magnitude.
When compared to PrAn, ProbTaylor also exhibits significant time-efficiency advantage, with an average speedup of 9.4x.
Across all 11 benchmarks evaluated against PrAn, ProbTaylor successfully produces result for 3 benchmarks where PrAn crashes, and yields tighter threshold on 2 of them.

\subsection{Comparison to FPTaylor (Answering RQ4)}
\label{sec:fptaylor}

\begin{table}[ht]
\centering 
\caption{Comparison with FPTaylor on benchmarks with widened input range. The column ``original'' thresholds given on benchmarks with original input variable range. 
Here the ``threshold ratio'' is defined as the ratio of threshold generated by ProbTaylor with optimal analysis order $n$ and $99\%$ confidence on normally distributed input variables and the deterministic bound generated by FPTaylor.
The column ``doubled'' and ``quadrupled'' correspond to the threshold ratios on benchmarks with doubled and quadrupled input ranges respectively. Lower numbers indicate better result by probabilistic analysis.
}
\label{tab:rq4fptaylor}
\begin{tabular}{|cccccccc|}
\hline
\multicolumn{8}{|c|}{\begin{tabular}[c]{@{}c@{}}Threshold ratio $\left( \frac{\text{ProbTaylor}}{\text{FPTaylor}}\right)$ on widened input ranges\\ on truncated standard normal distribution\end{tabular}} \\ \hline
\multicolumn{1}{|c|}{benchmark} & \multicolumn{1}{c|}{original} & \multicolumn{1}{c|}{doubled} & \multicolumn{1}{c|}{quadrupled} & \multicolumn{1}{c|}{benchmark} & \multicolumn{1}{c|}{original} & \multicolumn{1}{c|}{doubled} & quadrupled \\ \hline
\multicolumn{1}{|c|}{bsplines0} & \multicolumn{1}{c|}{1321.68\%} & \multicolumn{1}{c|}{1910.24\%} & \multicolumn{1}{c|}{245.03\%} & \multicolumn{1}{c|}{rigidbody2} & \multicolumn{1}{c|}{0.12\%} & \multicolumn{1}{c|}{0.01\%} & 0.00\% \\ \hline
\multicolumn{1}{|c|}{bsplines1} & \multicolumn{1}{c|}{404.28\%} & \multicolumn{1}{c|}{320.28\%} & \multicolumn{1}{c|}{88.17\%} & \multicolumn{1}{c|}{sine} & \multicolumn{1}{c|}{287.82\%} & \multicolumn{1}{c|}{234.48\%} & 14.50\% \\ \hline
\multicolumn{1}{|c|}{bsplines2} & \multicolumn{1}{c|}{389.23\%} & \multicolumn{1}{c|}{272.48\%} & \multicolumn{1}{c|}{74.80\%} & \multicolumn{1}{c|}{solvecubic} & \multicolumn{1}{c|}{30.68\%} & \multicolumn{1}{c|}{DZ} & DZ \\ \hline
\multicolumn{1}{|c|}{bsplines3} & \multicolumn{1}{c|}{146.45\%} & \multicolumn{1}{c|}{95.56\%} & \multicolumn{1}{c|}{43.70\%} & \multicolumn{1}{c|}{sqrt} & \multicolumn{1}{c|}{1.97\%} & \multicolumn{1}{c|}{0.12\%} & 0.01\% \\ \hline
\multicolumn{1}{|c|}{classids0} & \multicolumn{1}{c|}{66.28\%} & \multicolumn{1}{c|}{45.69\%} & \multicolumn{1}{c|}{DZ} & \multicolumn{1}{c|}{traincars1} & \multicolumn{1}{c|}{54.77\%} & \multicolumn{1}{c|}{26.62\%} & 13.63\% \\ \hline
\multicolumn{1}{|c|}{classids1} & \multicolumn{1}{c|}{73.76\%} & \multicolumn{1}{c|}{47.38\%} & \multicolumn{1}{c|}{DZ} & \multicolumn{1}{c|}{traincars2} & \multicolumn{1}{c|}{43.02\%} & \multicolumn{1}{c|}{21.59\%} & 10.82\% \\ \hline
\multicolumn{1}{|c|}{classids2} & \multicolumn{1}{c|}{86.80\%} & \multicolumn{1}{c|}{59.42\%} & \multicolumn{1}{c|}{DZ} & \multicolumn{1}{c|}{traincars3} & \multicolumn{1}{c|}{32.89\%} & \multicolumn{1}{c|}{16.52\%} & 8.25\% \\ \hline
\multicolumn{1}{|c|}{filters1} & \multicolumn{1}{c|}{66.96\%} & \multicolumn{1}{c|}{51.60\%} & \multicolumn{1}{c|}{26.15\%} & \multicolumn{1}{c|}{traincars4} & \multicolumn{1}{c|}{38.40\%} & \multicolumn{1}{c|}{19.81\%} & 9.90\% \\ \hline
\multicolumn{1}{|c|}{filters2} & \multicolumn{1}{c|}{94.33\%} & \multicolumn{1}{c|}{65.41\%} & \multicolumn{1}{c|}{33.44\%} & \multicolumn{1}{c|}{trid1} & \multicolumn{1}{c|}{0.12\%} & \multicolumn{1}{c|}{0.03\%} & 0.01\% \\ \hline
\multicolumn{1}{|c|}{filters3} & \multicolumn{1}{c|}{103.14\%} & \multicolumn{1}{c|}{67.42\%} & \multicolumn{1}{c|}{33.89\%} & \multicolumn{1}{c|}{trid2} & \multicolumn{1}{c|}{0.12\%} & \multicolumn{1}{c|}{0.03\%} & 0.01\% \\ \hline
\multicolumn{1}{|c|}{filters4} & \multicolumn{1}{c|}{128.35\%} & \multicolumn{1}{c|}{82.18\%} & \multicolumn{1}{c|}{41.24\%} & \multicolumn{1}{c|}{trid3} & \multicolumn{1}{c|}{0.11\%} & \multicolumn{1}{c|}{0.03\%} & 0.01\% \\ \hline
\multicolumn{1}{|c|}{rigidbody1} & \multicolumn{1}{c|}{2.65\%} & \multicolumn{1}{c|}{0.66\%} & \multicolumn{1}{c|}{0.17\%} & \multicolumn{1}{c|}{trid4} & \multicolumn{1}{c|}{0.13\%} & \multicolumn{1}{c|}{0.03\%} & 0.01\% \\ \hline
\end{tabular}
\end{table}

To demonstrate that ProbTaylor is capable of producing thresholds in situations where deterministic analysis is too conservative, we adopt the standard deterministic analysis tool FPTaylor~\cite{solovyev2018rigorous} as a baseline. 
We conduct experiments using benchmarks with original distribution ranges from Table~\ref{tab:rq1uniform}, Table~\ref{tab:rq1normal} and Table~\ref{tab:rq1laplace} (excluding two realistic benchmarks from \texttt{fdlibm}, which are designed specifically for certain input ranges).
We derive additional benchmarks by doubling and quadrupling (the end-points of) the ranges of the distributions.
For example, for a uniform distribution with range $(-2, 4)$, we double the range to get the uniform distribution over $(-4, 8)$, and quadruple the range to get the uniform distribution over $(-8, 16)$. 
The cases for truncated normal and double exponential distributions are similar: we widen the range, and keep mean and variance unchanged.

The experimental results for standard normal distribution are collected in Table~\ref{tab:rq4fptaylor}.
Here, ``DZ'' indicates NaN produced by ProbTaylor.
The ProbTaylor configuration follows the same settings as described in Section~\ref{sec:nmvspaf}, with single precision, target confidence level $99\%$, and the same strategy -- reporting the optimal threshold among a set of analysis orders with a 90-second timeout per order.

We observe that ProbTaylor produces significantly tighter thresholds compared to FPTaylor with expanded input ranges.
On average, doubling the input range results in a $45\%$ decrease in threshold ratio, and quadrupling the range leads to a $82\%$ decrease compared to original benchmarks.
We focus only on comparing accuracy rather than runtime as (i) both tools run within 200 seconds for every benchmark and (ii) they have different objectives (deterministic vs. probabilistic analysis). \jiawei{Why would the different objectives preclude runtime analysis?}

\smallskip
\noindent {\bf Summary of Experimental Evaluations. }
When employing the NM algorithm, ProbTaylor is significantly faster than both PAF and PrAn while producing thresholds of comparable precision.
The NM algorithm itself scales well, capable of handling polynomial expressions with hundreds of operations within minutes.
However, the analysis of second-order error remains a scalability bottleneck in the overall toolchain.
Nevertheless, considering the CMB algorithm and its extensions, ProbTaylor consistently remains faster than existing tools. 
In terms of accuracy, the extended variants yield tighter thresholds than PAF and PrAn on roughly half of the benchmarks.
Moreover, comparisons against deterministic analysis show our probabilistic approach yields tighter thresholds under widened input ranges, demonstrating robustness with large input distribution's support.

\section{Related Works}
\label{sec:related}

Static analysis of round-off errors is an active research area. Most existing approaches do not involve probabilistic inputs, but rather are deterministic methods and produce bounds valid even in worst-case scenarios.
Many of these, including Gappa~\cite{daumas2010certification}, Gappa++~\cite{linderman2010towards}, FLUCTUAT~\cite{delmas2009towards}, RangeLab~\cite{martel2011rangelab}, Rosa~\cite{darulova2014sound} and Daisy~\cite{darulova2018daisy},
use abstract interpretation techniques~\cite{cousot1977abstract}, and employ abstract domains such as intervals~\cite{moore1966interval}, affine forms~\cite{ld1993affine}, or polyhedra~\cite{chen2008sound} to analyze errors.

An alternative approach, used by PRECiSA~\cite{titolo2018abstract} and Real2Float~\cite{magron2017certified}, formulates the bounding of round-off errors as an optimization problem.
Lee et al.~\cite{lee2016verifying} also employ optimization-based techniques and deal with the interplay between floating-point and bit-level operations.
FPTaylor~\cite{solovyev2018rigorous} leverages symbolic Taylor expansion, and optimizes its lower-order terms.
Our approach draws on FPTaylor's idea for the initial transformation and over-approximation of the target problem.

Recently, researchers have started to focus on probabilistic analysis of round-off error, aiming to produce bounds that are less pessimistic yet still valid with high probability.
PrAn~\cite{lohar2019sound} is the first work we are aware of that provides probabilistic analysis of round-off errors with specified input distributions. 
It extends probabilistic affine arithmetic~\cite{bouissou2012generalization} and utilizes probabilistic interval subdivision techniques. 
PAF~\cite{constantinides2021rigorous} is the current state-of-the-art, deriving tight bounds on most benchmarks, but is very slow even for moderate size benchmarks.
There are also results on analyzing probabilistic programs via concentration inequalities~\cite{DBLP:conf/cav/SunFCG23,DBLP:journals/toplas/ChatterjeeFNH18,DBLP:conf/pldi/WangS0CG21,DBLP:conf/pldi/Wang0R21,DBLP:conf/tacas/KuraUH19,DBLP:conf/popl/ChatterjeeNZ17,DBLP:journals/pacmpl/ChatterjeeGMZ24,DBLP:conf/cav/ChakarovS13,bouissou2016uncertainty}. Our result is orthogonal to them as we focus on floating-point arithmetic and deal with absolute values in the Taylor expansion.

\section{Conclusion and Future Work}
\label{sec:conclusion}

In this paper, we propose a novel approach to sound probabilistic analysis of floating-point round-off errors.
Our method employs Taylor expansion and concentration inequalities, combined with a positive-negative decomposition technique, to compute rigorous probabilistic thresholds of round-off errors.
We implemented our algorithms in a prototype tool, ProbTaylor, and evaluated its performance on a wide range of benchmarks.
Experimental results demonstrate that ProbTaylor yields thresholds  at least comparable to, and sometimes significantly more accurate than, those produced by existing tools such as PAF and PrAn, while substantially more time-efficient.
Besides, it potentially scales well to large benchmarks and can rule out unlikely worst-case scenarios that often hinder deterministic analysis.
For future work, we plan to expand our supported range of operations, including trigonometrics, logarithms, exponentials, etc. 
Another possible direction would be to use concentration bounds other than Markov's inequality, such as Chernoff bounds. 

\paragraph{Data Availability Statement.}
Our artifact can be found at \url{doi.org/10.5281/zenodo.18499134}.
It supports an implementation of ProbTaylor, and its comparison with PrAn (PAF is out of scope because it takes several days to run), and experiments for Sections~\ref{sec:nmvspaf}, \ref{sec:scalability} and \ref{sec:ablation}. Experiments in Section~\ref{sec:fptaylor} is out of scope, but FPTaylor can be run at \url{monadius.github.io/FPTaylorJS}.

%%
%% The acknowledgments section is defined using the "acks" environment
%% (and NOT an unnumbered section). This ensures the proper
%% identification of the section in the article metadata, and the
%% consistent spelling of the heading.
\begin{acks}
We thank Karthik Duraisamy, Sahil Bhola and Daisuke Uchida for fruitful discussions, as well as the anonymous reviewers for helpful comments. This research was supported in part by NSF grants CCF-2219997, CCF-2348706 and CCF-2446214.
\end{acks}

%%
%% The next two lines define the bibliography style to be used, and
%% the bibliography file.
\bibliographystyle{ACM-Reference-Format}
\bibliography{bib}

\newpage
%%
%% If your work has an appendix, this is the place to put it.
\appendix

\appendix

\section{Symbolic Computation of Expectations}
\label{sec:symbolicexp}

As is mentioned in Section~\ref{sec:implementation}, ProbTaylor computes expectations symbolically rather than numerically.
In this section, we provide a detailed explanation of the mathematical derivations underlying these symbolic computations.

Our symbolic expectation computation deals with expressions represented in polynomial form. 
We compute the expectation of each term individually, and sum them to obtain the expectation of the entire expression.
Within each term, the factors are grouped based on their original variables (e.g., $x$ and $x^+$ has the same original variable, and thus belong to the same factor group). 
Each group can then be simplified to just one factor (or even zero).
If the group involves more than one factor, it has one of the following four forms:
\begin{itemize}
    \item $x^n(x^+)^m$. This can be simplified to $(x^+)^{n+m}$ since $x^n$ just gets multiplied by zero when $x<0$.
    \item $x^n(x^-)^m$. This can be simplified to $(-1)^n(x^-)^{n+m}$ for similar reason, and the fact that $x^- = -x$ when $x<0$ contributes to the $(-1)^n$ factor.
    \item $(x^+)^n(x^-)^m$. This term is always equal to zero, since at least one factor is zero no matter what value $x$ takes.
    \item $(x^+)^n(x^-)^mx^k$. This term is also always equal to zero, for the same reason as above.
\end{itemize}
Therefore, our task reduces to calculating the expectation or higher moments or a single variable and multiply them together to obtain the expectation of the whole term.
The computation of higher moments for single factors following different distributions are presented in the following subsections.

\subsection{Uniform Distribution}
When $x$ is uniformly distributed on the interval $[a,b]$, its expectation and higher moments can be computed as follows.
$$ E[x^k] = \int_a^b\frac{x^k}{b - a}dx = \frac{b ^{k+1} -a^{k+1}}{(k+1)(b-a)}. $$
If the range $[a,b]$ crosses zero, i.e., $a < 0 < b$, then we can compute the higher moments of the positive component and negative component as follows.
$$E[x_+^k] = \int_a^0 \frac{0}{b-a}dx + \int_0^b\frac{x^k}{b-a}dx = \frac{b^{k+1}}{(k+1)(b-a)}.$$
$$E[x_-^k] = \int_a^0\frac{(-x)^k}{b-a}dx + \int_0^b\frac{0}{b-a}dx = \frac{(-a)^{k+1}}{(k+1)(b-a)}.$$
If the interval $[a,b]$ does not cross zero, then either $x = x^+$ or $x=-x^-$ holds, and the computation can be simplified accordingly.

\subsection{Truncated Standard Normal Distribution}
When $x$ follows the standard normal distribution truncated to interval $[a,b]$, its probabilistic distribution function (PDF) is expressed as
$$f(x) = \left\{
\begin{aligned}
\frac{\varphi(x)}{\Phi(b) - \Phi(a)}, &\quad x \in [a,b]\\
0, &\quad \text{otherwise}
\end{aligned}
\right.,
$$
where $\varphi(x) = \frac{1}{\sqrt{2\pi}}\exp(-\frac{x^2}{2})$ is the probabilistic distribution function of standard normal distribution, and $\Phi(x) = \int_{-\infty}^{x}\varphi(t)dt$ is the cummulative distribution function (CDF) of the standard normal distribution.
It is noteworthy that the derivative of $\varphi(x)$ satisfies $\varphi'(x) = \frac{1}{\sqrt{2\pi}}\exp(-\frac{x^2}{2})(-x) = -x\varphi(x)$.

Let $I_k = \int_a^bx^k\varphi(x)dx$, then the $k$-th moment of $x$ is given by
$$\mathbb E[x^k] = \int_a^b x^kf(x)dx = \frac{1}{\Phi(b) - \Phi(a)}\int_a^bx^k\varphi(x)dx = \frac{I_k}{\Phi(b) - \Phi(a)}.$$
Using integration by parts, let $u=x^{k-1}$ and $dv = x\varphi(x)dx = -\varphi'(x)dx$, and we have $du=(k-1)x^{k-2}$ and $v = -\varphi(x)$.
Apply the integration by parts formula, we derive
$$I_k = uv{\Large|}^b_a - \int^b_avdu = \left[-x^{k-1}\varphi(x)\right]^b_a + (k-1)I_{k-2}.$$
Therefore, the $k$-th moment of $x$ can be computed recursively with 
$$\mathbb E[x^k] = \frac{a^{k-1}\varphi(a) - b^{k-1}\varphi(b)}{\Phi(b) - \Phi(a)} + (k-1)\mathbb E[x^{k-2}].$$
The base cases to start with are as follows.
$$\mathbb E[x^0] = E[1] = 1.$$
$$\mathbb E[x^1] = \frac{1}{\Phi(b) - \Phi(a)}\int_a^bx\varphi(x)dx = \frac{1}{\Phi(b) - \Phi(a)}\left[-\varphi(x)\right]^b_a = \frac{\varphi(a) - \varphi(b)}{\Phi(b) - \Phi(a)}.$$
Using this recursive formula, the moments of $x$ for any order $k$ can be efficiently computed for the truncated standard normal distribution.

Next we turn to compute the higher moment for the positive components $x^+$ and negative components $x^-$. 
Only the cases where $a < 0 < b$ are discussed here, for otherwise, the computation of $\mathbb E[x^k]$ can be used directly.
The derivations are similar as for computing $\mathbb E[x^k]$ and we only present the recursion formula and the base cases below.
$$\mathbb E[x_+^k] = \frac{-b^{k-1}\varphi(b)}{\Phi(b) - \Phi(a)} + (k-1)\mathbb E[x_+^{k-2}].$$
$$\mathbb E[x_+^0] = \frac{\Phi(b) - \Phi(0)}{\Phi(b) - \Phi(a)}, \quad \mathbb E[x_+^1] = \frac{-\varphi(b) + \varphi(0)}{\Phi(b) - \Phi(a)}.$$
$$\mathbb E[x_-^k] = \frac{(-1)^ka^{k-1}\varphi(a)}{\Phi(b) - \Phi(a)} + (k-1)\mathbb E[x_-^{k-2}].$$
$$\mathbb E[x_-^0] = \frac{\Phi(0) - \Phi(a)}{\Phi(b) - \Phi(a)},\quad \mathbb E[x_-^1] = \frac{\varphi(0)-\varphi(a)}{\Phi(b) - \Phi(a)}.$$

\subsection{Truncated Double Exponential Distribution}
When a random variable $x$ follows the double exponenetial distribution (also known as Laplace distribution) parameterized by $\sigma$ truncated to the interval $[a,b]$, its PDF is expressed as
$$f(x) = \left\{
\begin{aligned}
\frac{\varphi(x)}{\Phi(b) - \Phi(a)}, &\quad x \in [a,b]\\
0, &\quad \text{otherwise}
\end{aligned}
\right.,
$$
where $\varphi(x) = \frac{1}{2\sigma}\exp\left( -\frac{|x|}{\sigma}\right)$ is the PDF of double exponential distribution, and the CDF is 
$$\Phi(x) = \left\{
\begin{aligned}
    \frac 12 \exp\left( \frac{x}{b} \right), &\quad x \le 0 \\
    1 - \frac12 \exp\left( - \frac{x}{b} \right), &\quad x>0
\end{aligned}
\right..$$

We first define a helper integration and study its computation.
Define $I(c,k) =\int_0^cx^ke^{-x}dx$.
Let $u = x^k$ and $dv = e^{-x}dx$, and therefore $du = kx^{k-1}$ and $v = -e^{-x}$.
By applying integration by parts,
$$I(c,k) = \int_0^cudv = [uv]_0^c - \int_0^cvdu = kI(c,k-1) - c^ke^{-c}.$$
We have a recursive relation for computing $I(c,k)$. 
The base case can be solved as follows:
$$I(c,0) = \int_0^c e^{-x}dx = 1 - e^{-c}.$$

Then we use $I(c,k)$ as a tool for computing $\mathbb E[x^k]$.
We first discuss the case when $a < 0 < b$.
Let $I_1 = \int_a^0x^k \varphi(x)dx$ and $I_2 = \int_0^b x^k\varphi(x)dx$, and the $k$-th moment of $x$ is given by
$$\mathbb E[x^k] = \frac{1}{\Phi(b) - \Phi(a)} \int_a^bx^k\varphi(x)dx = \frac{1}{\Phi(b) - \Phi(a)}(I_1 + I_2).$$
$I_1$ can be computed as
$$\begin{aligned}
I_1 &=\int_a^0 x^k\varphi(x)dx = \frac{1}{2\sigma}\int_a^0x^k\exp\left(\frac{x}{\sigma}\right)dx\\
&=\frac12(-1)^{k+1}\sigma^k\int_{-a/\sigma}^0u^k\exp(-u)du \quad (u \triangleq-\frac{x}{\sigma})\\
&=\frac12(-\sigma)^k\int_0^{-a/\sigma}u^k\exp(-u)du = \frac12(-\sigma)^k\cdot I\left(-\frac{a}{\sigma}, k\right).
\end{aligned}$$
$I_2$ can be similarly computed as 
$$I_2 = \frac12 \sigma^k \cdot I\left(\frac{b}{\sigma}, k\right).$$
By computing the corresponding $I(c,k)$ values and plugging them back, we can obtain the result of $\mathbb E[x^k]$.

When $0<a<b$, define and compute $I_1, I_2$ as follows.
\begin{itemize}
    \item $I_1 \triangleq \int_0^bx^k\varphi(x)dx = \frac12\sigma^k\cdot I\left(\frac{b}{\sigma},k\right)$.
    \item $I_2 \triangleq \int_0^ax^k\varphi(x)dx = \frac12\sigma^k\cdot I\left(\frac{a}{\sigma},k\right)$.
\end{itemize}
Thus, $\mathbb E[x^k] = \frac{1}{\Phi(b) - \Phi(a)}(I_1-I_2)$.

When $a<b<0$, define and compute $I_1, I_2$ as follows.
\begin{itemize}
    \item $I_1 \triangleq \int_a^0x^k\varphi(x)dx = \frac12(-\sigma)^k\cdot I\left(-\frac{a}{\sigma},k\right)$.
    \item $I_2 \triangleq \int_b^0x^k\varphi(x)dx = \frac12(-\sigma)^k\cdot I\left(-\frac{b}{\sigma},k\right)$.
\end{itemize}
Thus, $\mathbb E[x^k] = \frac{1}{\Phi(b) - \Phi(a)}(I_1-I_2)$.

Again, we only talk about the computation of reasoning about positive and negative parts when $[a,b]$ crosses zero. The derivations are similar as for the original variable, and the result is as follows.
$$E[x_+^k] = \frac{1}{\Phi(b) - \Phi(a)} \int_0^bx^kf(x)dx = \frac{1}{\Phi(b) - \Phi(a)} \cdot \left( \frac12 \sigma^k \cdot I\left( \frac{b}{\sigma}, k\right) \right).$$
$$E[x_-^k] = \frac{1}{\Phi(b) - \Phi(a)}\int_a^0(-x)^kf(x)dx = \frac{1}{\Phi(b) - \Phi(a)} \cdot \left( \frac12 \sigma^k \cdot I\left( -\frac{a}{\sigma}, k\right) \right).$$

\section{Full Proofs from Section~\ref{sec:algorithm}}
\label{sec:fullproof}

\noindent{\sc Full Proof of Theorem~\ref{thm:posnegcorrect}.}
We first establish the following two facts after the execution reaches line 11:
(a) $h_i(\mathbf x)$ is equivalent to the original input expression; 
(b) each term in $h_i(\mathbf x)$ has the same sign as its coefficient (the constant factor of the term).
Fact (a) is evident because the only two possible modifications to $h_i(\mathbf x)$ are replacing $x_j^a$ by either $(x_j^+)^a - (x_j^-)^a$ or $(-x_j^-)^a$.
Given that $x_j = x_j^+ - x_j^-$ and $x_j^{a-1}$ is non-negative for odd values of $a$, $x_j^{a} = x_j^{a-1}x_j^+ - x_j^{a-1}x_j^- = (x_j^+)^a - (x_j^-)^a$, and thus the first possible replacement preserves the value.
When $x_j$ is always non-positive, $x_j = x_j^-$ holds, and thus the second possible replacement is an equivalent transformation.
Then we examine the monomial part of each term. Note that all factors fall into one of the three categories: an even power of the original variable, the positive component, or the negative component. 
Each of the three is non-negative, and thus their product can be proved non-negative.
Thus, the monomial part of all terms in $h_i(\mathbf x)$ is non-negative, and fact (b) immediately follows.

Then we analyze the values of $h_i^+(\mathbf x)$ and $h_i^-(\mathbf x)$ after the execution of line 22, and show that it has the two properties similar to the decomposition of variables:
\begin{itemize}
    \item Both $h_i^+(\mathbf x)$ and $h_i^-(\mathbf x)$ are non-negative.
    When a term $t$ has positive coefficient, indicating that $t$ is non-negative (by fact (b)), it is added to $h_i^+$, thereby preserving the non-negativity of $h_i^+$.
    On the other hand, when $t$ has negative coefficient, the term $t$ is non-positive. Subtracting such a term form $h_i^-$ ensures that its non-negativity is maintained.
    \item $h_i(\mathbf x) = h_i^+(\mathbf x) - h_i^-(\mathbf x)$. Each term in $h_i(\mathbf x)$ is assigned either to $h_i^+(\mathbf x)$ as it appears, or to $h_i^-(\mathbf x)$ as its negation. Thus, subtracting $h_i^-(\mathbf x)$ from $h_i^+(\mathbf x)$ reconstructs the original value of $h_i(\mathbf x)$, thereby verifying the equality.
\end{itemize}

Therefore, we have the following relaxation:
$$ \begin{aligned}
|h_i(\mathbf x)e_i| &= |h_i^+(\mathbf x)e_i  - h_i^-(\mathbf x)e_i| 
\le |h_i^+(\mathbf x)e_i| + |h_i^-(\mathbf x)e_i|  \\
&= (h_i^+(\mathbf x) + h_i^-(\mathbf x))|e_i|\text{}
\le (h_i^+(\mathbf x) + h_i^-(\mathbf x))\epsilon.
\end{aligned}$$
The second inequality is an application of the triangular inequality. 
The third equality is because $h_i^+(\mathbf x)$ and $h_i^-(\mathbf x)$ are non-negative.
At the end of the algorithm, $p(\mathbf x) = \sum_{i=1}^k(h_i^+(\mathbf x) + h_i^-(\mathbf x))$.
Thus, for the entire first-order term we have
$$\sum_{i=1}^k|h_i(\mathbf x)e_i| \le \sum_{i=1}^k(h_i^+(\mathbf x) + h_i^-(\mathbf x))\epsilon = p(\mathbf x) \cdot \epsilon.$$
\qed

{\sc Full Proof of Lemma~\ref{lem:fractionalpn}.}
Given that $f(\mathbf x)$ has the form $N(\mathbf x) / Q(\mathbf x)$, its floating-point model should observe the following structure: 
$\tilde f(\mathbf x, \mathbf e, \mathbf d) = {\tilde N(\mathbf x, \mathbf e, \mathbf d)}/{\tilde Q(\mathbf x, \mathbf e, \mathbf d)} \cdot (1 + e_k) + d_k$, 
where $k$ is the total number of arithmetic operations in $f(\mathbf x)$ and {$e_k, d_k$ are the error variables for the outmost division}.
We observe the fact that each relative error term $e_i$ appears exactly once in the symbolic expression of $\tilde f(\mathbf x, \mathbf e, \mathbf d)$. More specifically, each $e_i$ contributes to one of the following components: (i) the floating-point model of the numerator $\tilde N(\mathbf x, \mathbf e, \mathbf d)$; (ii) the floating-point model of the denominator $\tilde Q(\mathbf x, \mathbf e, \mathbf d)$; or (iii) the error term $e_k$ related to division $N(\mathbf x) / Q(\mathbf x)$.
To demonstrate that for any index $i$, $\frac{\partial \tilde f}{\partial e_i}(\mathbf x, \mathbf 0, \mathbf 0) \cdot (Q(\mathbf x))^2$ can be reduced into polynomial form, we analyze the three cases. 
Throughout, we make frequent use of the identity $\tilde f(\mathbf x, \mathbf 0, \mathbf 0) = f(\mathbf x)$.
\begin{enumerate}[(i)]
    \item When $e_i$ appears in the numerator $\tilde N(\mathbf x, \mathbf e, \mathbf d)$, we have $\frac{\partial \tilde f}{\partial e_i}(\mathbf x, \mathbf 0, \mathbf 0) \cdot (Q(\mathbf x))^2 = \frac{1}{Q(\mathbf x)} \cdot \frac{\partial \tilde N}{\partial e_i}(\mathbf x, \mathbf 0, \mathbf 0)  \cdot (Q(\mathbf x))^2 = \frac{\partial \tilde N}{\partial e_i}(\mathbf x, \mathbf 0, \mathbf 0)  \cdot Q(\mathbf x)$ which is division-free since $N(\mathbf x)$ and $Q(\mathbf x)$ are both division-free.
    \item When $e_i$ appears in the denominator $\tilde Q(\mathbf x, \mathbf e, \mathbf d)$, we have 
    $\frac{\partial \tilde f}{\partial e_i}(\mathbf x, \mathbf 0, \mathbf 0) \cdot (Q(\mathbf x))^2 =
    \frac{- N(\mathbf x)}{(Q(\mathbf x))^2} \cdot \frac{\partial \tilde Q}{\partial e_i}(\mathbf x, \mathbf 0, \mathbf 0) \cdot (Q(\mathbf x))^2 = - N(\mathbf x) \cdot \frac{\partial \tilde Q}{\partial e_i}(\mathbf x, \mathbf 0, \mathbf 0)$, which is also division-free.
    \item When $e_i$ is just $e_k$, then $\frac{\partial \tilde f}{\partial e_i}(\mathbf x, \mathbf 0, \mathbf 0) \cdot (Q(\mathbf x))^2  = \frac{N(\mathbf x)}{Q(\mathbf x)} \cdot (Q(\mathbf x))^2 = N(\mathbf x) \cdot Q(\mathbf x)$, clearly division-free.
\end{enumerate}
Therefore, in each of the three possible cases, $h_i(\mathbf x) = \frac{\partial \tilde f}{\partial e_i}(\mathbf x, \mathbf 0, \mathbf 0) \cdot (Q(\mathbf x))^2$ is division-free, and is thus reducible to a polynomial expression.
\qed

\section{Monotonicity of $\flag_u$ in Algorithm~\ref{alg:cmb} with regard to $u$}
\label{sec:flagmono}

As stated in Theorem~\ref{thm:cmb-flag}, $\flag_u$ is defined by
$$\flag_u = \frac{\mathbb E[K_u(\mathbf x) - \mu)^n]}{\mu^n},$$
where $K_u(\mathbf x) = p(\mathbf x) \cdot \epsilon - u$ and $\mu = \mathbb E[K_u(\mathbf x)]$.
We perform the following transformations on $\flag_u$:
$$\begin{aligned}
\flag_u &= \frac{\mathbb E[K_u(\mathbf x) - \mu)^n]}{\mu^n} = \frac{\mathbb E[(p(\mathbf x) \cdot \epsilon - \mathbb E[p(\mathbf x) \cdot \epsilon] + \mathbb E[u])^n]}{(\mathbb E[p(\mathbf x) \cdot \epsilon] - u)^n}\\ 
&=\frac{\mathbb E[(p(\mathbf x) \cdot \epsilon - \mathbb E[p(\mathbf x) \cdot \epsilon])^n]}{(\mathbb E[p(\mathbf x) \cdot \epsilon] - u)^n}.
\end{aligned}$$
Observe that the numerator is constant with respect to $u$, non-negative since $n\geq 0$, while the denominator is monotonically increasing in $u$ when $u > \mathbb E[p(\mathbf x) \cdot \epsilon] = \ell$. 
Therefore, $\flag_u$ is monotonically decreasing when $u > \ell$.

\section{Brute-Force algorithm for bounding the second-order error}
\label{sec:bruteforce}
The second-order error expressions we aim to bound do not involve division by non-constant terms. 
Therefore, for a given expression $f(\mathbf x)$, we derive a deterministic upper bound $\text{bound}(f)$ for $|f|$ via structural recursion and pattern matching on the syntactic form of $f$:
\begin{itemize}
    \item If $f(\mathbf x) = c$, where $c$ is a constant, then $\text{bound}(f) = |c|$.
    \item If $f(\mathbf x) = x$, where $x$ is a variable with range $[a, b]$, then $\text{bound}(f) = \max\{|a|, |b|\}$.
    \item If $f(\mathbf x) = f_1(\mathbf x) + f_2(\mathbf x)$, then $\text{bound}(f) = \text{bound}(f_1) + \text{bound}(f_2)$.
    \item If $f(\mathbf x) = f_1(\mathbf x) - f_2(\mathbf x)$, then $\text{bound}(f) = \text{bound}(f_1) + \text{bound}(f_2)$. %\jiawei{Possible typo $f-2(x)$ corrected to $f_2(x)$}
    \item If $f(\mathbf x) = f_1(\mathbf x)\cdot f_2(\mathbf x)$, then $\text{bound}(f) = \text{bound}(f_1) \cdot \text{bound}(f_2)$.
    \item If $f(\mathbf x) = f_1(\mathbf x) / c$, where $c$ is a constant, then $\text{bound}(f) = \text{bound}(f_1) / |c|$.
\end{itemize}
The soundness of this algorithm follows directly from applications of triangular inequality and basic properties of absolute values under arithmetic operations.

\section{Full Experimental Results}
\label{sec:fullexperiment}

Due to space limitations, full experimental results are recorded here.
Tables~\ref{tab:fulluniform}, \ref{tab:fullnormal}, and \ref{tab:fulllaplace} report thresholds and running times of PAF and PrAn on benchmarks with uniform distribution, truncated standard normal distribution, and truncated double exponential distribution, respectively.
Table~\ref{tab:cmbpolytime} records the running time of ablation experiments on polynomial benchmarks.
Table~\ref{tab:fulldiv} records the thresholds and running times of PAF and PrAn on fractional benchmarks. PAF crashes on the \texttt{nonlin2} benchmarks. PrAn crashes on the \texttt{doppler} benchmarks, and does not support double exponential distribution.

\begin{remark}
Further refinement of the algorithm can be achieved through the use of SMT solvers.
Specifically, when analyzing each sub-region $B$, we may submit the logical query $\mathbf x \in B \Rightarrow K_u(\mathbf x) >0$ to an SMT solver. 
If the solver returns \texttt{UNSAT}, this indicates that $K_u(\mathbf x) \le 0$ holds within the sub-region $B$.
In such cases, we can soundly conclude that the violation probability over $B$ is zero, and safely assign $\flag_{u, B} := 0$. \qed 
\end{remark}
% Table~\ref{tab:widenedrangebound} reports thresholds produced by ProbTaylor with $99\%$ confidence level on benchmarks with widened input ranges.
% Table~\ref{tab:widenedfptaylor} reports thresholds produced by FPTaylor that are satisfied deterministically on benchmarks with widened input ranges.
% Table~\ref{tab:uniexpwidened} compares the threshold given by ProbTaylor to those given by FPTaylor on benchmarks with widened input ranges, with uniform input distribution and double exponential input distribution.
% Table~\ref{tab:expfull}, Table~\ref{tab:uniform2full}, Table~\ref{tab:normal2full}, and Table~\ref{tab:exp2full} are the full versions of Table~\ref{tab:exp}, Table~\ref{tab:uniform2}, Table~\ref{tab:normal2}, and Table~\ref{tab:exp2}, respectively.
% Table~\ref{tab:exp0.01} records the experimental results when the input variables follows double exponential distribution with $\sigma=0.01$. 
% The PAF thresholds are directly copied from \cite{constantinides2021rigorous}. 
% The NaNs produced by ProbTaylor are due to division by zero caused by computing the difference of two nearly identical numbers in OCaml.

\begin{table}[ht]
\centering
\caption{Experimental results on PAF and PrAn for uniform distribution. The thresholds are produced assuming computations are done in single precision and with a $99\%$ target confidence level. A timeout of 1 hour is enforced for all benchmarks, except for \texttt{ksin} and \texttt{kcos}, where the timeout is extended to 4 hours.}
\label{tab:fulluniform}
\begin{tabular}{|c|cccc|}
\hline
 & \multicolumn{4}{c|}{Uniform Distribution} \\ \hline
 & \multicolumn{2}{c|}{PAF} & \multicolumn{2}{c|}{PrAn} \\ \hline
Benchmark & \multicolumn{1}{c|}{Time (s)} & \multicolumn{1}{c|}{Threshold} & \multicolumn{1}{c|}{Time (s)} & Threshold \\ \hline
bsplines0 & \multicolumn{1}{c|}{1,232} & \multicolumn{1}{c|}{5.71E-08} & \multicolumn{1}{c|}{18.3} & 8.69E-08 \\ \hline
bsplines1 & \multicolumn{1}{c|}{1,636} & \multicolumn{1}{c|}{1.86E-07} & \multicolumn{1}{c|}{27.91} & 1.89E-07 \\ \hline
bsplines2 & \multicolumn{1}{c|}{2,748} & \multicolumn{1}{c|}{1.94E-07} & \multicolumn{1}{c|}{36.03} & 2.12E-07 \\ \hline
bsplines3 & \multicolumn{1}{c|}{927} & \multicolumn{1}{c|}{4.22E-08} & \multicolumn{1}{c|}{15.32} & 5.71E-08 \\ \hline
classids0 & \multicolumn{1}{c|}{Timeout} & \multicolumn{1}{c|}{6.93E-06} & \multicolumn{1}{c|}{106.36} & 8.65E-06 \\ \hline
classids1 & \multicolumn{1}{c|}{Timeout} & \multicolumn{1}{c|}{3.71E-06} & \multicolumn{1}{c|}{145.9} & 4.69E-06 \\ \hline
classids2 & \multicolumn{1}{c|}{Timeout} & \multicolumn{1}{c|}{5.23E-06} & \multicolumn{1}{c|}{120.77} & 7.58E-06 \\ \hline
filters1 & \multicolumn{1}{c|}{608} & \multicolumn{1}{c|}{1.25E-07} & \multicolumn{1}{c|}{3.58} & 2.03E-07 \\ \hline
filters2 & \multicolumn{1}{c|}{3,186} & \multicolumn{1}{c|}{7.93E-07} & \multicolumn{1}{c|}{65.21} & 1.01E-06 \\ \hline
filters3 & \multicolumn{1}{c|}{Timeout} & \multicolumn{1}{c|}{2.34E-06} & \multicolumn{1}{c|}{212.15} & 2.86E-06 \\ \hline
filters4 & \multicolumn{1}{c|}{Timeout} & \multicolumn{1}{c|}{4.15E-06} & \multicolumn{1}{c|}{429.3} & 5.20E-06 \\ \hline
rigidbody1 & \multicolumn{1}{c|}{Timeout} & \multicolumn{1}{c|}{1.74E-04} & \multicolumn{1}{c|}{27.55} & 1.73E-04 \\ \hline
rigidbody2 & \multicolumn{1}{c|}{Timeout} & \multicolumn{1}{c|}{1.96E-02} & \multicolumn{1}{c|}{46.5} & 9.70E-03 \\ \hline
sine & \multicolumn{1}{c|}{Timeout} & \multicolumn{1}{c|}{2.37E-07} & \multicolumn{1}{c|}{132.95} & 2.41E-07 \\ \hline
solvecubic & \multicolumn{1}{c|}{Timeout} & \multicolumn{1}{c|}{1.78E-05} & \multicolumn{1}{c|}{180.59} & 2.01E-05 \\ \hline
sqrt & \multicolumn{1}{c|}{Timeout} & \multicolumn{1}{c|}{1.54E-04} & \multicolumn{1}{c|}{56.01} & 1.54E-04 \\ \hline
traincars1 & \multicolumn{1}{c|}{3,434} & \multicolumn{1}{c|}{1.76E-03} & \multicolumn{1}{c|}{50.27} & 1.96E-03 \\ \hline
traincars2 & \multicolumn{1}{c|}{Timeout} & \multicolumn{1}{c|}{1.04E-03} & \multicolumn{1}{c|}{205.4} & 1.33E-03 \\ \hline
traincars3 & \multicolumn{1}{c|}{Timeout} & \multicolumn{1}{c|}{1.75E-02} & \multicolumn{1}{c|}{18.96} & 2.29E-02 \\ \hline
traincars4 & \multicolumn{1}{c|}{Timeout} & \multicolumn{1}{c|}{1.81E-01} & \multicolumn{1}{c|}{26.36} & 2.30E-01 \\ \hline
trid1 & \multicolumn{1}{c|}{Timeout} & \multicolumn{1}{c|}{6.01E-03} & \multicolumn{1}{c|}{69.03} & 6.12E-03 \\ \hline
trid2 & \multicolumn{1}{c|}{Timeout} & \multicolumn{1}{c|}{1.03E-02} & \multicolumn{1}{c|}{59.48} & 1.17E-02 \\ \hline
trid3 & \multicolumn{1}{c|}{Timeout} & \multicolumn{1}{c|}{1.75E-02} & \multicolumn{1}{c|}{312.22} & 1.95E-02 \\ \hline
trid4 & \multicolumn{1}{c|}{Timeout} & \multicolumn{1}{c|}{2.69E-02} & \multicolumn{1}{c|}{1018.98} & 2.88E-02 \\ \hline
ksin & \multicolumn{1}{c|}{Timeout} & \multicolumn{1}{c|}{N/A} & \multicolumn{1}{c|}{2049.07} & 6.93E-08 \\ \hline
kcos & \multicolumn{1}{c|}{Timeout} & \multicolumn{1}{c|}{N/A} & \multicolumn{1}{c|}{1838.64} & 1.20E-07 \\ \hline
\end{tabular}
\end{table}

\begin{table}[ht]
\centering
\caption{Experimental results on PAF and PrAn for truncated normal distribution. The thresholds are produced assuming computations are done in single precision and with a $99\%$ target confidence level. A timeout of 1 hour is enforced for all benchmarks, except for \texttt{ksin} and \texttt{kcos}, where the timeout is extended to 4 hours.}
\label{tab:fullnormal}
\begin{tabular}{|c|cccc|}
\hline
 & \multicolumn{4}{c|}{Truncated standard normal distribution} \\ \hline
 & \multicolumn{2}{c|}{PAF} & \multicolumn{2}{c|}{PrAn} \\ \hline
Benchmark & \multicolumn{1}{c|}{Time (s)} & \multicolumn{1}{c|}{Threshold} & \multicolumn{1}{c|}{Time (s)} & Threshold \\ \hline
bsplines0 & \multicolumn{1}{c|}{1,227} & \multicolumn{1}{c|}{5.71E-08} & \multicolumn{1}{c|}{28.99} & 8.69E-08 \\ \hline
bsplines1 & \multicolumn{1}{c|}{1,707} & \multicolumn{1}{c|}{1.86E-07} & \multicolumn{1}{c|}{53.31} & 2.10E-07 \\ \hline
bsplines2 & \multicolumn{1}{c|}{2,778} & \multicolumn{1}{c|}{1.94E-07} & \multicolumn{1}{c|}{62.68} & 2.12E-07 \\ \hline
bsplines3 & \multicolumn{1}{c|}{914} & \multicolumn{1}{c|}{4.22E-08} & \multicolumn{1}{c|}{22.45} & 5.71E-08 \\ \hline
classids0 & \multicolumn{1}{c|}{Timeout} & \multicolumn{1}{c|}{4.45E-06} & \multicolumn{1}{c|}{279.69} & 8.90E-06 \\ \hline
classids1 & \multicolumn{1}{c|}{Timeout} & \multicolumn{1}{c|}{2.68E-06} & \multicolumn{1}{c|}{297.81} & 4.76E-06 \\ \hline
classids2 & \multicolumn{1}{c|}{Timeout} & \multicolumn{1}{c|}{3.85E-06} & \multicolumn{1}{c|}{271.49} & 7.60E-06 \\ \hline
filters1 & \multicolumn{1}{c|}{581} & \multicolumn{1}{c|}{1.24E-07} & \multicolumn{1}{c|}{5.52} & 2.03E-07 \\ \hline
filters2 & \multicolumn{1}{c|}{3,220} & \multicolumn{1}{c|}{6.13E-07} & \multicolumn{1}{c|}{121.31} & 1.01E-06 \\ \hline
filters3 & \multicolumn{1}{c|}{Timeout} & \multicolumn{1}{c|}{2.05E-06} & \multicolumn{1}{c|}{293.62} & 2.87E-06 \\ \hline
filters4 & \multicolumn{1}{c|}{Timeout} & \multicolumn{1}{c|}{4.15E-06} & \multicolumn{1}{c|}{826.53} & 5.20E-06 \\ \hline
rigidbody1 & \multicolumn{1}{c|}{Timeout} & \multicolumn{1}{c|}{6.14E-06} & \multicolumn{1}{c|}{45.43} & 1.73E-04 \\ \hline
rigidbody2 & \multicolumn{1}{c|}{Timeout} & \multicolumn{1}{c|}{5.99E-05} & \multicolumn{1}{c|}{95.76} & 9.70E-03 \\ \hline
sine & \multicolumn{1}{c|}{Timeout} & \multicolumn{1}{c|}{2.37E-07} & \multicolumn{1}{c|}{475.34} & 2.41E-07 \\ \hline
solvecubic & \multicolumn{1}{c|}{Timeout} & \multicolumn{1}{c|}{6.84E-06} & \multicolumn{1}{c|}{230.44} & 2.02E-05 \\ \hline
sqrt & \multicolumn{1}{c|}{Timeout} & \multicolumn{1}{c|}{2.46E-07} & \multicolumn{1}{c|}{115.91} & 1.54E-04 \\ \hline
traincars1 & \multicolumn{1}{c|}{3,007} & \multicolumn{1}{c|}{8.26E-04} & \multicolumn{1}{c|}{93.36} & 1.96E-03 \\ \hline
traincars2 & \multicolumn{1}{c|}{Timeout} & \multicolumn{1}{c|}{3.62E-04} & \multicolumn{1}{c|}{476.4} & 1.33E-03 \\ \hline
traincars3 & \multicolumn{1}{c|}{Timeout} & \multicolumn{1}{c|}{9.56E-03} & \multicolumn{1}{c|}{19.18} & 2.23E-02 \\ \hline
traincars4 & \multicolumn{1}{c|}{Timeout} & \multicolumn{1}{c|}{8.87E-02} & \multicolumn{1}{c|}{26.86} & 2.30E-01 \\ \hline
trid1 & \multicolumn{1}{c|}{Timeout} & \multicolumn{1}{c|}{1.58E-05} & \multicolumn{1}{c|}{253.98} & 6.06E-03 \\ \hline
trid2 & \multicolumn{1}{c|}{Timeout} & \multicolumn{1}{c|}{2.42E-05} & \multicolumn{1}{c|}{140.63} & 1.17E-02 \\ \hline
trid3 & \multicolumn{1}{c|}{Timeout} & \multicolumn{1}{c|}{6.80E-05} & \multicolumn{1}{c|}{537.01} & 1.95E-02 \\ \hline
trid4 & \multicolumn{1}{c|}{Timeout} & \multicolumn{1}{c|}{2.64E-04} & \multicolumn{1}{c|}{1576.52} & 3.03E-02 \\ \hline
ksin & \multicolumn{1}{c|}{Timeout} & \multicolumn{1}{c|}{N/A} & \multicolumn{1}{c|}{Timeout} & N/A \\ \hline
kcos & \multicolumn{1}{c|}{Timeout} & \multicolumn{1}{c|}{N/A} & \multicolumn{1}{c|}{Timeout} & N/A \\ \hline
\end{tabular}
\end{table}

\begin{table}[ht]
\centering
\caption{Experimental results on PAF for truncated double exponential distribution. The thresholds are produced assuming computations are done in single precision and with a $99\%$ target confidence level. A timeout of 1 hour is enforced for all benchmarks, except for \texttt{ksin} and \texttt{kcos}, where the timeout is extended to 4 hours. For benchmarks \texttt{classids0}, \texttt{classids1}, \texttt{classids2}, and \texttt{solvecubic}, the variance value is set to $1$. For all other benchmarks, the variance value is set to $0.01$.}
\label{tab:fulllaplace}
\begin{tabular}{|c|cc|}
\hline
 & \multicolumn{2}{c|}{Truncated double exponential distribution} \\ \hline
 & \multicolumn{2}{c|}{PAF} \\ \hline
Benchmark & \multicolumn{1}{c|}{Time (s)} & Threshold \\ \hline
bsplines0 & \multicolumn{1}{c|}{2,629} & 5.71E-08 \\ \hline
bsplines1 & \multicolumn{1}{c|}{2,201} & 6.95E-08 \\ \hline
bsplines2 & \multicolumn{1}{c|}{Timeout} & 2.11E-08 \\ \hline
bsplines3 & \multicolumn{1}{c|}{938} & 7.62E-12 \\ \hline
classids0 & \multicolumn{1}{c|}{24,606} & 5.00E-06 \\ \hline
classids1 & \multicolumn{1}{c|}{16,145} & 3.08E-06 \\ \hline
classids2 & \multicolumn{1}{c|}{13,698} & 4.15E-06 \\ \hline
filters1 & \multicolumn{1}{c|}{733} & 5.43E-09 \\ \hline
filters2 & \multicolumn{1}{c|}{Timeout} & 2.90E-08 \\ \hline
filters3 & \multicolumn{1}{c|}{Timeout} & 1.09E-07 \\ \hline
filters4 & \multicolumn{1}{c|}{Timeout} & 4.61E-07 \\ \hline
rigidbody1 & \multicolumn{1}{c|}{Timeout} & 4.80E-07 \\ \hline
rigidbody2 & \multicolumn{1}{c|}{Timeout} & 9.55E-07 \\ \hline
sine & \multicolumn{1}{c|}{Timeout} & 1.49E-08 \\ \hline
solvecubic & \multicolumn{1}{c|}{16,795} & 1.42E-05 \\ \hline
sqrt & \multicolumn{1}{c|}{Timeout} & 2.46E-07 \\ \hline
traincars1 & \multicolumn{1}{c|}{Timeout} & 4.50E-04 \\ \hline
traincars2 & \multicolumn{1}{c|}{Timeout} & 2.83E-05 \\ \hline
traincars3 & \multicolumn{1}{c|}{Timeout} & 8.95E-04 \\ \hline
traincars4 & \multicolumn{1}{c|}{Timeout} & 7.33E-03 \\ \hline
trid1 & \multicolumn{1}{c|}{Timeout} & 1.58E-05 \\ \hline
trid2 & \multicolumn{1}{c|}{Timeout} & 2.43E-05 \\ \hline
trid3 & \multicolumn{1}{c|}{Timeout} & 6.77E-05 \\ \hline
trid4 & \multicolumn{1}{c|}{Timeout} & 2.64E-04 \\ \hline
ksin & \multicolumn{1}{c|}{Timeout} & N/A \\ \hline
kcos & \multicolumn{1}{c|}{Timeout} & N/A \\ \hline
\end{tabular}
\end{table}

\begin{table}[ht]
\centering
\caption{Ablation experiments on polynomial benchmarks. Columns labeled ``NM'' report thresholds generated by the Naive Markov method, while columns labeled ``CMB'' report thresholds generated by the Central-Moment-Based algorithm with range partition refinement. 
The experiments use an analysis order of $n=4$. For the Central-Moment-Based algorithm, the number of partitions is $b=8$ for benchmarks with fewer than four input variables and fewer than ten operations, and range partitioning is disabled otherwise.}
\label{tab:rq3polyfull}
\begin{tabular}{|c|cc|cc|cc|}
\hline
 & \multicolumn{2}{c|}{Uniform} & \multicolumn{2}{c|}{Normal} & \multicolumn{2}{c|}{Laplace} \\ \hline
Benchmark & \multicolumn{1}{c|}{NM} & CMB & \multicolumn{1}{c|}{NM} & CMB & \multicolumn{1}{c|}{NM} & CMB \\ \hline
bsplines0 & \multicolumn{1}{c|}{9.44E-07} & 4.96E-07 & \multicolumn{1}{c|}{8.95E-07} & 4.98E-07 & \multicolumn{1}{c|}{8.57E-07} & 4.81E-07 \\ \hline
bsplines1 & \multicolumn{1}{c|}{1.03E-06} & 5.72E-07 & \multicolumn{1}{c|}{9.66E-07} & 2.33E-07 & \multicolumn{1}{c|}{9.23E-07} & 5.63E-07 \\ \hline
bsplines2 & \multicolumn{1}{c|}{1.01E-06} & 5.76E-07 & \multicolumn{1}{c|}{9.49E-07} & 2.15E-07 & \multicolumn{1}{c|}{9.06E-07} & 5.76E-07 \\ \hline
bsplines3 & \multicolumn{1}{c|}{6.62E-08} & 4.13E-08 & \multicolumn{1}{c|}{6.18E-08} & 8.73E-09 & \multicolumn{1}{c|}{5.89E-08} & 4.14E-08 \\ \hline
classids0 & \multicolumn{1}{c|}{2.41E-05} & 1.46E-05 & \multicolumn{1}{c|}{9.35E-06} & 4.79E-06 & \multicolumn{1}{c|}{1.16E-05} & 7.10E-06 \\ \hline
classids1 & \multicolumn{1}{c|}{1.33E-05} & 8.29E-06 & \multicolumn{1}{c|}{5.12E-06} & 2.81E-06 & \multicolumn{1}{c|}{6.42E-06} & 4.04E-06 \\ \hline
classids2 & \multicolumn{1}{c|}{2.13E-05} & 1.23E-05 & \multicolumn{1}{c|}{9.18E-06} & 4.71E-06 & \multicolumn{1}{c|}{1.13E-05} & 6.58E-06 \\ \hline
filters1 & \multicolumn{1}{c|}{1.76E-07} & 9.07E-08 & \multicolumn{1}{c|}{1.44E-07} & 3.12E-08 & \multicolumn{1}{c|}{1.45E-07} & 8.61E-08 \\ \hline
filters2 & \multicolumn{1}{c|}{1.61E-06} & 8.22E-07 & \multicolumn{1}{c|}{1.27E-06} & 3.18E-07 & \multicolumn{1}{c|}{1.26E-06} & 7.82E-07 \\ \hline
filters3 & \multicolumn{1}{c|}{5.09E-06} & 2.63E-06 & \multicolumn{1}{c|}{3.92E-06} & 1.06E-06 & \multicolumn{1}{c|}{3.86E-06} & 2.30E-06 \\ \hline
filters4 & \multicolumn{1}{c|}{1.11E-05} & 7.23E-06 & \multicolumn{1}{c|}{8.48E-06} & 5.89E-06 & \multicolumn{1}{c|}{8.32E-06} & 5.91E-06 \\ \hline
rigidbody1 & \multicolumn{1}{c|}{2.63E-04} & 1.81E-04 & \multicolumn{1}{c|}{4.56E-06} & 7.40E-07 & \multicolumn{1}{c|}{1.13E-05} & 1.03E-05 \\ \hline
rigidbody2 & \multicolumn{1}{c|}{1.78E-02} & 1.67E-02 & \multicolumn{1}{c|}{2.35E-05} & 2.29E-05 & \multicolumn{1}{c|}{1.24E-04} & 1.24E-04 \\ \hline
sine & \multicolumn{1}{c|}{1.01E-06} & 6.02E-07 & \multicolumn{1}{c|}{8.62E-07} & 1.61E-07 & \multicolumn{1}{c|}{9.33E-07} & 9.33E-07 \\ \hline
solvecubic & \multicolumn{1}{c|}{3.72E-05} & 1.98E-05 & \multicolumn{1}{c|}{9.70E-06} & 3.14E-06 & \multicolumn{1}{c|}{1.97E-05} & 1.32E-05 \\ \hline
sqrt & \multicolumn{1}{c|}{2.22E-04} & 1.50E-04 & \multicolumn{1}{c|}{2.98E-06} & 4.78E-07 & \multicolumn{1}{c|}{3.98E-05} & 9.76E-06 \\ \hline
traincars1 & \multicolumn{1}{c|}{5.18E-03} & 2.48E-03 & \multicolumn{1}{c|}{1.72E-03} & 5.27E-04 & \multicolumn{1}{c|}{2.10E-03} & 1.23E-03 \\ \hline
traincars2 & \multicolumn{1}{c|}{3.56E-03} & 2.31E-03 & \multicolumn{1}{c|}{6.29E-04} & 4.51E-04 & \multicolumn{1}{c|}{9.09E-04} & 7.42E-04 \\ \hline
traincars3 & \multicolumn{1}{c|}{4.43E-02} & 2.87E-02 & \multicolumn{1}{c|}{8.99E-03} & 6.28E-03 & \multicolumn{1}{c|}{1.22E-02} & 9.35E-03 \\ \hline
traincars4 & \multicolumn{1}{c|}{4.94E-01} & 3.01E-01 & \multicolumn{1}{c|}{9.62E-02} & 5.95E-02 & \multicolumn{1}{c|}{1.16E-01} & 8.07E-02 \\ \hline
trid1 & \multicolumn{1}{c|}{1.08E-02} & 6.96E-03 & \multicolumn{1}{c|}{8.99E-06} & 1.99E-06 & \multicolumn{1}{c|}{2.21E-05} & 2.86E-06 \\ \hline
trid2 & \multicolumn{1}{c|}{2.04E-02} & 1.56E-02 & \multicolumn{1}{c|}{1.56E-05} & 1.25E-05 & \multicolumn{1}{c|}{3.54E-05} & 3.47E-05 \\ \hline
trid3 & \multicolumn{1}{c|}{3.23E-02} & 2.44E-02 & \multicolumn{1}{c|}{2.37E-05} & 1.83E-05 & \multicolumn{1}{c|}{5.08E-05} & 4.81E-05 \\ \hline
trid4 & \multicolumn{1}{c|}{4.46E-02} & 3.46E-02 & \multicolumn{1}{c|}{3.34E-05} & 2.50E-05 & \multicolumn{1}{c|}{6.82E-05} & 6.35E-05 \\ \hline
ksin & \multicolumn{1}{c|}{1.46E-07} & \multicolumn{1}{r|}{1.13E-07} & \multicolumn{1}{c|}{1.49E-07} & \multicolumn{1}{r|}{1.40E-07} & \multicolumn{1}{c|}{5.02E-07} & 3.32E-07 \\ \hline
kcos & \multicolumn{1}{c|}{4.76E-07} & \multicolumn{1}{r|}{2.37E-07} & \multicolumn{1}{c|}{4.69E-07} & \multicolumn{1}{r|}{2.42E-07} & \multicolumn{1}{c|}{1.42E-06} & 2.30E-07 \\ \hline
\end{tabular}\end{table}

\begin{table}[ht]
\centering
\caption{Running time of ablation experiments on polynomial benchmarks. Both the NM and CMB algorithm use an analysis order of $n=4$. For the CMB algorithm, the number of partitions if $b=8$ for benchmarks with fewer than four input variables and fewer than ten operations.}
\label{tab:cmbpolytime}
\begin{tabular}{|c|cc|cc|cc|}
\hline
 & \multicolumn{2}{c|}{Uniform} & \multicolumn{2}{c|}{Normal} & \multicolumn{2}{c|}{Laplace} \\ \hline
Benchmark & \multicolumn{1}{c|}{\begin{tabular}[c]{@{}c@{}}NM\\ time (s)\end{tabular}} & \begin{tabular}[c]{@{}c@{}}CMB\\ time(s)\end{tabular} & \multicolumn{1}{c|}{\begin{tabular}[c]{@{}c@{}}NM\\ time (s)\end{tabular}} & \begin{tabular}[c]{@{}c@{}}CMB\\ time (s)\end{tabular} & \multicolumn{1}{c|}{\begin{tabular}[c]{@{}c@{}}NM\\ time (s)\end{tabular}} & \begin{tabular}[c]{@{}c@{}}CMB\\ time (s)\end{tabular} \\ \hline
bsplines0 & \multicolumn{1}{c|}{0.59} & 0.67 & \multicolumn{1}{c|}{0.67} & 0.67 & \multicolumn{1}{c|}{0.62} & 0.68 \\ \hline
bsplines1 & \multicolumn{1}{c|}{0.54} & 0.61 & \multicolumn{1}{c|}{0.56} & 0.63 & \multicolumn{1}{c|}{0.55} & 0.63 \\ \hline
bsplines2 & \multicolumn{1}{c|}{1.21} & 1.32 & \multicolumn{1}{c|}{1.21} & 1.29 & \multicolumn{1}{c|}{1.11} & 1.28 \\ \hline
bsplines3 & \multicolumn{1}{c|}{0.25} & 0.3 & \multicolumn{1}{c|}{0.25} & 0.27 & \multicolumn{1}{c|}{0.25} & 0.28 \\ \hline
classids0 & \multicolumn{1}{c|}{1.23} & 1.3 & \multicolumn{1}{c|}{1.21} & 1.34 & \multicolumn{1}{c|}{1.23} & 1.21 \\ \hline
classids1 & \multicolumn{1}{c|}{1.13} & 1.25 & \multicolumn{1}{c|}{1.12} & 1.27 & \multicolumn{1}{c|}{1.23} & 1.3 \\ \hline
classids2 & \multicolumn{1}{c|}{1.24} & 1.36 & \multicolumn{1}{c|}{1.24} & 1.36 & \multicolumn{1}{c|}{1.25} & 1.34 \\ \hline
filters1 & \multicolumn{1}{c|}{0.18} & 0.23 & \multicolumn{1}{c|}{0.17} & 0.23 & \multicolumn{1}{c|}{0.17} & 0.22 \\ \hline
filters2 & \multicolumn{1}{c|}{0.24} & 2.64 & \multicolumn{1}{c|}{0.23} & 1.86 & \multicolumn{1}{c|}{0.23} & 2.07 \\ \hline
filters3 & \multicolumn{1}{c|}{1.08} & 48.62 & \multicolumn{1}{c|}{1.07} & 41.58 & \multicolumn{1}{c|}{1.06} & 75.33 \\ \hline
filters4 & \multicolumn{1}{c|}{6.06} & 6.05 & \multicolumn{1}{c|}{5.51} & 5.94 & \multicolumn{1}{c|}{5.81} & 5.91 \\ \hline
rigidbody1 & \multicolumn{1}{c|}{0.42} & 234.49 & \multicolumn{1}{c|}{0.42} & 235.61 & \multicolumn{1}{c|}{0.43} & 234.79 \\ \hline
rigidbody2 & \multicolumn{1}{c|}{1.28} & 2.97 & \multicolumn{1}{c|}{1.23} & 2.45 & \multicolumn{1}{c|}{1.32} & 1.41 \\ \hline
sine & \multicolumn{1}{c|}{2.35} & 3.03 & \multicolumn{1}{c|}{2.26} & 3.06 & \multicolumn{1}{c|}{2.28} & 2.39 \\ \hline
solvecubic & \multicolumn{1}{c|}{1.03} & 27.28 & \multicolumn{1}{c|}{1.01} & 26.74 & \multicolumn{1}{c|}{0.99} & 26.97 \\ \hline
sqrt & \multicolumn{1}{c|}{0.79} & 0.87 & \multicolumn{1}{c|}{0.76} & 0.9 & \multicolumn{1}{c|}{0.75} & 0.93 \\ \hline
traincars1 & \multicolumn{1}{c|}{0.39} & 10.16 & \multicolumn{1}{c|}{0.38} & 8.73 & \multicolumn{1}{c|}{0.39} & 8.56 \\ \hline
traincars2 & \multicolumn{1}{c|}{0.94} & 1.09 & \multicolumn{1}{c|}{0.9} & 1.22 & \multicolumn{1}{c|}{0.96} & 60.05 \\ \hline
traincars3 & \multicolumn{1}{c|}{3.38} & 4.89 & \multicolumn{1}{c|}{3.31} & 5.05 & \multicolumn{1}{c|}{3.23} & 5.68 \\ \hline
traincars4 & \multicolumn{1}{c|}{6.43} & 10.34 & \multicolumn{1}{c|}{6.06} & 10.98 & \multicolumn{1}{c|}{6.38} & 10.79 \\ \hline
trid1 & \multicolumn{1}{c|}{0.49} & 25.22 & \multicolumn{1}{c|}{0.47} & 22.04 & \multicolumn{1}{c|}{0.46} & 21.82 \\ \hline
trid2 & \multicolumn{1}{c|}{1.5} & 4.67 & \multicolumn{1}{c|}{1.46} & 3.43 & \multicolumn{1}{c|}{1.46} & 3.77 \\ \hline
trid3 & \multicolumn{1}{c|}{4.28} & 13.02 & \multicolumn{1}{c|}{4.16} & 16.24 & \multicolumn{1}{c|}{4.15} & 15.68 \\ \hline
trid4 & \multicolumn{1}{c|}{10.13} & 33.34 & \multicolumn{1}{c|}{10.25} & 34.51 & \multicolumn{1}{c|}{9.94} & 36.35 \\ \hline
ksin & \multicolumn{1}{c|}{7.11} & 7.35 & \multicolumn{1}{c|}{7.27} & 7.32 & \multicolumn{1}{c|}{7.32} & 7.17 \\ \hline
kcos & \multicolumn{1}{c|}{5.49} & 5.9 & \multicolumn{1}{c|}{5.59} & 5.63 & \multicolumn{1}{c|}{5.62} & 5.74 \\ \hline
\end{tabular}
\end{table}

\begin{table}[ht]
\centering
\caption{Experimental results on PAF and PrAn on fractional benchmarks. The settings are the same as in Table~\ref{tab:rq1uniform}. Entries marked ``N/A'' indicate that the corresponding tool either crashed or lacked support for the corresponding distribution (PrAn laplace).}
\label{tab:fulldiv}
\begin{tabular}{|c|c|cc|cc|}
\hline
 &  & \multicolumn{2}{c|}{PAF} & \multicolumn{2}{c|}{PrAn} \\ \hline
Benchmark & Distribution & \multicolumn{1}{c|}{Time (s)} & Threshold & \multicolumn{1}{c|}{Time (s)} & Threshold \\ \hline
doppler1 & uniform & \multicolumn{1}{c|}{Timeout} & 7.95E-05 & \multicolumn{1}{c|}{N/A} & N/A \\ \hline
doppler2 & uniform & \multicolumn{1}{c|}{Timeout} & 1.43E-04 & \multicolumn{1}{c|}{N/A} & N/A \\ \hline
doppler3 & uniform & \multicolumn{1}{c|}{Timeout} & 4.55E-05 & \multicolumn{1}{c|}{N/A} & N/A \\ \hline
nonlin1 & uniform & \multicolumn{1}{c|}{840.21} & 6.71E-08 & \multicolumn{1}{c|}{23.32} & 7.54E-08 \\ \hline
nonlin2 & uniform & \multicolumn{1}{c|}{N/A} & N/A & \multicolumn{1}{c|}{225.24} & 3.66E-06 \\ \hline
predator & uniform & \multicolumn{1}{c|}{972.16} & 9.95E-08 & \multicolumn{1}{c|}{84.93} & 1.00E-07 \\ \hline
verhulst & uniform & \multicolumn{1}{c|}{350.34} & 1.72E-07 & \multicolumn{1}{c|}{30.03} & 1.80E-07 \\ \hline
nonlin1 & normal & \multicolumn{1}{c|}{842.05} & 6.71E-08 & \multicolumn{1}{c|}{15.92} & 7.54E-08 \\ \hline
nonlin2 & normal & \multicolumn{1}{c|}{N/A} & N/A & \multicolumn{1}{c|}{681.51} & 3.08E-06 \\ \hline
predator & normal & \multicolumn{1}{c|}{965.61} & 9.95E-08 & \multicolumn{1}{c|}{173.84} & 1.00E-07 \\ \hline
verhulst & normal & \multicolumn{1}{c|}{353.44} & 1.72E-07 & \multicolumn{1}{c|}{48.75} & 1.80E-07 \\ \hline
nonlin1 & laplace & \multicolumn{1}{c|}{841.33} & 6.71E-08 & \multicolumn{1}{c|}{N/A} & N/A \\ \hline
nonlin2 & laplace & \multicolumn{1}{c|}{N/A} & N/A & \multicolumn{1}{c|}{N/A} & N/A \\ \hline
predator & laplace & \multicolumn{1}{c|}{957.49} & 9.89E-08 & \multicolumn{1}{c|}{N/A} & N/A \\ \hline
verhulst & laplace & \multicolumn{1}{c|}{346.78} & 1.69E-07 & \multicolumn{1}{c|}{N/A} & N/A \\ \hline
\end{tabular}
\end{table}

\end{document}
\endinput
%%
%% End of file `sample-manuscript.tex'.